\PassOptionsToPackage{frozencache}{minted}
\def\arxiv{}
\pdfoutput=1
\ifdefined\arxiv
  \documentclass[a4paper,USenglish]{lipics-v2021}
\else
  \documentclass[a4paper,USenglish,anonymous]{lipics-v2021}
\fi

\newif\ifarxiv
\ifdefined\arxiv
  \arxivtrue
  \newcommand{\provsqlSite}{\url{https://provsql.org/}}
  \newcommand{\provsqlCommit}{\href{https://github.com/PierreSenellart/provsql/tree/b3445d598b9fc32330a933d1950d4e89402f1502}{\texttt{b3445d59}}}
  \newcommand{\leanDocs}{\url{https://provsql.org/lean-docs/Provenance.html}}
  \hideLIPIcs
  \nolinenumbers
\fi

\title{Provenance of HAVING Queries in~Semirings~with~Monus}

\ifarxiv
\author{Aryak Sen}{Univ.\ Grenoble Alpes, CNRS, Grenoble INP, LIG, France}%
  {aryak.sen@univ-grenoble-alpes.fr}{https://orcid.org/0009-0009-2261-682X}{}
\author{Pratik Karmakar}{National University of Singapore, Singapore \and
  CNRS@CREATE, Singapore}{pratik.karmakar@u.nus.edu}%
  {https://orcid.org/0009-0008-1111-8801}{}
\author{Silviu Maniu}{Univ.\ Grenoble Alpes, CNRS, Grenoble INP, LIG, France
  \and CNRS@CREATE \& IPAL, Singapore}{silviu.maniu@univ-grenoble-alpes.fr}%
  {https://orcid.org/0000-0002-8623-1533}{}
\author{Angelo Saadeh}{DI ENS, ENS, CNRS, PSL University, Inria, Paris, France}%
  {angelo.saadeh@ens.psl.eu}{https://orcid.org/0009-0000-2081-6232}{}
\author{Pierre Senellart}{DI ENS, ENS, CNRS, PSL University, Inria,
  Paris, France \and CNRS@CREATE \& IPAL, Singapore}{pierre@senellart.com}%
  {https://orcid.org/0000-0002-7909-5369}{}

\authorrunning{A. Sen, P. Karmakar, S. Maniu, A. Saadeh, and P. Senellart}

\Copyright{Aryak Sen, Pratik Karmakar, Silviu Maniu, Angelo Saadeh, and
  Pierre Senellart}
\else
\author{Anonymous}{Anonymous}{}{}{}
\fi

\ccsdesc[500]{Theory of computation~Data provenance}
\ccsdesc[500]{Information systems~Data provenance}
\ccsdesc[300]{Theory of computation~Incomplete, inconsistent, and uncertain databases}
\ccsdesc[300]{Information systems~Database query processing}

\keywords{Data provenance, probabilistic database, aggregate query, query
processing}

\category{}

\relatedversion{}

\ifarxiv
\supplement{}
\supplementdetails[subcategory={ProvSQL},linktext={provsql.org}]{Software}{https://provsql.org/}
\supplementdetails[subcategory={Lean formalization},linktext={provsql.org/lean-docs/Provenance.html}]{Software}{https://provsql.org/lean-docs/Provenance.html}

\funding{This research is part of the program DesCartes and is supported by
  the National Research Foundation, Prime Minister's Office, Singapore under
  its Campus for Research Excellence and Technological Enterprise (CREATE)
  program. It is also partially supported by DataGEMS, funded by the
  European Union's Horizon Europe Research and Innovation programme, under
  grant agreement 101188416, and by the French government under management
  of Agence Nationale de la Recherche (ANR) as part of the ``France 2030''
  program, reference ANR-23-IACL-0008 (PR[AI]RIE-PSAI).}
\fi

\usepackage{minted}
\setminted{fontshape=up}
\usepackage{multicol}
\usepackage{upgreek}
\usepackage{booktabs}
\usepackage{pifont}

\usepackage[only,llbracket,rrbracket]{stmaryrd}

\DeclareFontFamily{U}{FdSymbolF}{}
\DeclareFontShape{U}{FdSymbolF}{m}{n}{
  <-7.1> FdSymbolF-Book
  <7.1->  FdSymbolF-Medium
}{}
\DeclareSymbolFont{fdsymdelims}{U}{FdSymbolF}{m}{n}
\DeclareMathDelimiter{\lAngle}{\mathopen}{fdsymdelims}{"92}{fdsymdelims}{"92}
\DeclareMathDelimiter{\rAngle}{\mathclose}{fdsymdelims}{"98}{fdsymdelims}{"98}

\usepackage[bb=boondox]{mathalfa}

\usepackage[linesnumbered,algoruled,commentsnumbered,boxruled,noend]{algorithm2e}
\newcommand{\fakesql}[1]{\texttt{#1}}

\newcommand{\NN}{\mathbb{N}}
\newcommand{\RA}{\mathsf{RA}}
\newcommand{\mset}[1]{\{\!\{\,#1\,\}\!\}}
\newcommand{\ansem}[2]{\llbracket #2 \rrbracket_{#1}}
\newcommand{\angsem}[2]{\lAngle #2\rAngle^{#1}}
\newcommand{\predsem}[3]{\llbracket #3\rrbracket^{#1}_{#2}}

\newcommand{\ok}{\text{\color{green!50!black}\ding{51}}}
\newcommand{\ko}{\text{\color{red!50!black}\ding{55}}}

\newcommand{\op}{\mathrel{\uptheta}}

\renewcommand{\epsilon}{\varepsilon}
\renewcommand{\phi}{\varphi}
\renewcommand{\le}{\leqslant}
\renewcommand{\ge}{\geqslant}
\renewcommand{\leq}{\leqslant}
\renewcommand{\geq}{\geqslant}

\newcommand{\inlinepar}[1]{\smallskip\noindent\textbf{#1}}

\usepackage{xstring}
\ifarxiv
  \newcommand\leanbase{https://provsql.org/lean-docs/Provenance}
  \newcommand\leananchor[1]{\def\leancuranchor{#1}}
\else
  \newcommand\leanbase{https://anonymousprovenance.github.io/Provenance}
  \newcommand\leananchor[1]{\StrSubstitute{#1}{MinTropical}{Tropical}[\leancuranchor]}
\fi
\newcommand\leanicon{\fbox{\includegraphics[height=.6em]{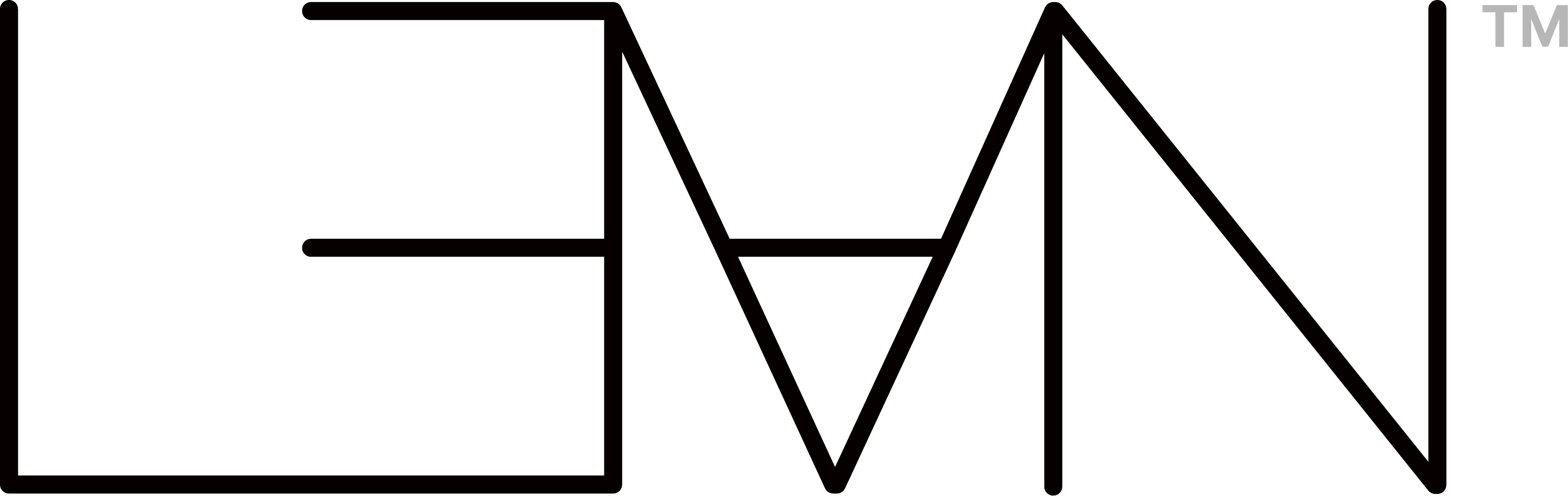}}}
\newcommand\leanmain{\href{\leanbase.html}{\leanicon}}
\newcommand\leanlink[3]{\leananchor{#2}\href{\leanbase/#1.html\#\leancuranchor}{#3}}
\newcommand\lean[2]{\leanlink{#1}{#2}{\leanicon}}

\newcommand{\sql}{\mintinline{postgresql}}

\newcommand{\Bag}{\mathrm{Bag}}
\newcommand{\Seq}{\mathrm{Seq}}
\newcommand{\msetminus}{\setminus_{\Bag}}
\newcommand{\msubseteq}{\subseteq_{\Bag}}
\newcommand{\ann}{\mathrm{ann}}

\newcommand{\defeq}{\mathrel{\mathop:}=}

\RestyleAlgo{boxed}
\DontPrintSemicolon
\SetKwInOut{Input}{Input}
\SetKwInOut{Output}{Output}
\SetKwProg{Fn}{Function}{}{}
\usepackage[appendix=append,bibliography=common]{apxproof}
\renewcommand{\appendixsectionformat}[2]{Material for Section~#1\ (#2)}

\usepackage{tikz}
\usepackage[autorun,cite,citelabel=tag]{proofgraph}

\newcommand\pgphstyle[2]{\proofgraphstyle{#1}{#2}\proofgraphstyle{#1rep}{#2}}
\pgphstyle{theorem}{shape=box,style=filled,fillcolor=gold,penwidth=2}
\pgphstyle{proposition}{shape=box,style="rounded,filled",fillcolor=lightblue}
\pgphstyle{corollary}{shape=box,style="rounded,filled",fillcolor=palegoldenrod}
\pgphstyle{lemma}{shape=ellipse,style=filled,fillcolor=lightgray}
\proofgraphignore{th:correctness}{prop:idempotent_iff_add_monus}

\makeatletter
\@for\axp@env:={theorem,lemma,corollary,proposition}\do{%
  \expandafter\let\csname\axp@env\endcsname\@undefined
  \expandafter\let\csname end\axp@env\endcsname\@undefined
}
\makeatother
\newtheoremrep{theorem}{Theorem}
\newtheoremrep{proposition}[theorem]{Proposition}
\newtheoremrep{corollary}[theorem]{Corollary}
\newtheoremrep{lemma}[theorem]{Lemma}

\begin{document}

\maketitle

\begin{abstract}
The semiring framework and its extensions form the basis of a rich collection of theoretical
results and implementations for provenance tracking of database queries.
Many real-world queries use aggregation and conditions on the aggregate
values.
Support for such queries has been proposed by introducing semimodule
elements as aggregate values and formal comparisons between aggregate values as tuple
annotations, which takes the approach outside the standard semiring framework. In this work, we
show how to introduce a semantics for the provenance of such queries in
arbitrary commutative semirings with monus (or \emph{m-semirings}), without the need for additional
operators. This semantics is shown to agree with the standard provenance of the
aggregation-free self-join rewriting of \sql{HAVING COUNT(*)} queries in
semirings that are absorptive and where times distributes over monus. We
derive algorithms for this semantics and implement them within the
ProvSQL system, with viable performance on a real-world dataset for
probabilistic query evaluation.
\end{abstract}

\section{Introduction}
Data provenance is information on ``where a piece of data came from and
the process by which it arrived in the database''~\cite{bunemankt01}. Data
provenance has numerous applications, in particular with respect to
understandability, reproducibility, and quality management of a
data-handling process~\cite{herschel2017survey}. A particular form of
data provenance, \emph{Boolean provenance}
\cite{senellart2017provenance}, sometimes also called (Boolean) \emph{lineage}
\cite{suciu2011probabilistic} is used as a basis for query
evaluation in probabilistic databases in the \emph{intensional}
\cite{jha2010bridging} approach to probabilistic query evaluation.

A particularly rich framework for data provenance, especially in the
setting of relational databases, has been that of \emph{provenance
semirings} \cite{green2007provenance} on \emph{annotated relations},
i.e., relations with an extra attribute storing the provenance, which
assigns a semantics to
positive relational algebra operators in terms of elements of an arbitrary
semiring. Different choices of this semiring lead to different forms of
provenance. This has been further extended to other settings and
classes of queries,
notably to queries with the difference operator \cite{geerts2010database}
by adding a \emph{monus}~$\ominus$ operator to the semiring, and to queries
involving grouping and aggregation \cite{amsterdamer2011provenance} by
introducing an additional $\delta$ operator for grouping as well as
the ability to turn data values into elements of a
\emph{semimodule} to reflect the result of an aggregation operation.

The next step is to define a semantics for queries involving
selections on the result of an aggregation, i.e., \sql{HAVING}
queries using the name of the corresponding SQL clause. This was done
elegantly in~\cite{amsterdamer2011provenance} by extending the semiring
with formal \emph{comparison expressions} between semimodule elements,
which encode the selection within the annotation of a tuple. Such an
expression resolves to a truth value only when the semimodule is
isomorphic to the aggregation monoid, which fails for the semirings
needed for probabilistic query evaluation; in addition, the approach is
restricted to aggregates that are commutative monoid
operations.

Provenance tracking systems on top of regular database management systems
have been successfully implemented, usually based on query rewriting to construct the
annotations, notably GProM~\cite{arab2018gprom} and
\ifarxiv our own system, \fi
ProvSQL~\cite{senellart2018provsql,sen2026provsql}. However, partly because it
is unclear what semantics should be given to the comparison operator,
aggregation support has been partial~\cite{arab2018gprom},
or has followed~\cite{amsterdamer2011provenance} to
the letter and stopped at constructing abstract formulas involving the
comparison operator, with no concrete semiring
\ifarxiv
elements~\cite{pintor2025dbms}, as ProvSQL did before this
work~\cite{sen2026provsql}.
\else
elements~\cite{sen2026provsql,pintor2025dbms}.
\fi
In contrast, most semirings used for data provenance (see
Section~\ref{sec:preliminaries}) -- though not all
semirings~\cite{amarilli2016example} -- can be equipped in a canonical way
with a monus operator, making semirings with monus (or \emph{m-semirings}
for short), despite their limitations \cite{amsterdamer2011limitations},
a usable and pragmatic provenance framework. For this reason, m-semirings were chosen
as the annotation framework
to show an analogue of Codd's theorem of the equivalence between the relational
calculus and the relational algebra on annotated
relations~\cite{badia2025codd} and\ifarxiv, in our previous work,\fi{} to
support non-monotone queries in ProvSQL~\cite{sen2026provsql}.

This paper leverages m-semirings as provenance annotation framework to provide
an implementable semantics for the comparison operator from
\cite{amsterdamer2011provenance} (and therefore for \sql{HAVING}
queries) \emph{in arbitrary commutative
m-semirings}. In more detail:
\begin{enumerate}[(i)]
\item We give a systematic account of the commutative m-semirings used
  in the literature, each with a machine-checked Lean proof of its
  definition and properties, correcting erroneous claims
  of~\cite{geerts2010database,amsterdamer2011limitations}
  (Section~\ref{sec:preliminaries}).

  \item We introduce a new semantics for the provenance of \sql{HAVING} queries
  in any
  arbitrary commutative m-semiring and for any aggregate
  function (not necessarily commutative or based on a monoid). (Section
  \ref{sec:semantics})

\item We show that, on \sql{COUNT(*)} comparisons, this semantics
  agrees with the standard provenance of the aggregation-free self-join
  rewriting when the m-semiring is absorptive and $\otimes$ distributes
  over $\ominus$; that it commutes with homomorphisms; and that it is
  compatible with probabilistic query evaluation.
  (Section~\ref{sec:results}).

\item We give algorithms for this semantics: possible-world enumeration
  in general, and algorithms specialized for \sql{COUNT}, \sql{SUM},
  \sql{MIN} and \sql{MAX} that are polynomial in identified cases,
  especially for absorptive m-semirings (Section~\ref{sec:algorithms}).

\item We implement these algorithms within ProvSQL and show
  the viability of the approach on a real-world database,
  including for probabilistic query evaluation
  (Section~\ref{sec:implementation_experiments}).
\end{enumerate}

Related work is discussed in Section~\ref{sec:related_work}.
Every statement that can be formalized in Lean is accompanied by a
machine-checked proof, available in
\ifarxiv
the open-source Lean library provenance-lean (\leanDocs)
\else
an anonymized repository
\fi
and
hyperlinked with \leanmain{} throughout the paper; the exceptions are
running-time bounds of Propositions~\ref{prop:algorithms}
and~\ref{prop:agg-cmp-poly-prob}, due to no current Lean support.
\ifarxiv
Our implementation is part of ProvSQL, available in open source at
\provsqlSite; this paper describes its revision~\provsqlCommit.
Most proofs and some additional details are in the
\hyperref[appendix]{appendix}.
\else
Our implementation
will be made available in open source
once double-blind requirements are removed. For lack of space,
most proofs and some additional details are relegated to the
\hyperref[appendix]{appendix}.
\fi

\section{Preliminaries}\label{sec:preliminaries}

\begin{toappendix}
  \label{appendix}
\end{toappendix}

We start with preliminaries on (m-)semirings, annotated relations, and
queries; the syntax and semantics of aggregate queries
from~\cite{sen2026provsql} are recalled in the appendix.

\begin{definition} \label{def:semiring}
  A \emph{semiring} is an algebraic structure $(\mathbb K,\oplus,\otimes,\mathbb 0,\mathbb 1)$ where $\mathbb K$ is a set, $\oplus$ and~$\otimes$ are binary operations on $\mathbb K$, and $\mathbb 0,\mathbb 1\in \mathbb K$, such that:
  \begin{itemize}
    \item $(\mathbb K,\oplus,\mathbb 0)$ is a commutative monoid and $(\mathbb K,\otimes,\mathbb 1)$ is a monoid;
    \item $\otimes$ distributes over $\oplus$: $\forall~a,b,c\in\mathbb K$, $a\otimes (b\oplus c) = (a\otimes b)\oplus (a\otimes c)$ and $(a\oplus b)\otimes c = (a\otimes c)\oplus (b\otimes c)$;
    \item $\mathbb 0$ is an annihilator for $\otimes$: $\forall
      a\in\mathbb K$, $a\otimes \mathbb 0 = \mathbb 0 \otimes a = \mathbb 0$.
  \end{itemize}
\end{definition}

\begin{example}\label{ex:semirings}
  Some example semirings
  \cite{green2007provenance,senellart2017provenance} are:
  the counting semiring $\mathbb N$: $(\mathbb N,+,\times,0,1)$;
  the Boolean semiring $\mathbb B$:
  $(\{\top,\bot\},\lor,\land,\bot,\top)$;
  the Why-provenance semiring over some set of variables $X$:
      $(2^{2^X},\cup,\Cup,\emptyset,\{\emptyset\})$ where $\Cup$ is defined as follows: $A\Cup B \defeq \{a\cup b \mid a\in A, b\in B\}$;
  the Boolean-function semiring $\mathbb B[X]$ over a set of variables $X$: Boolean functions
  over $X$ (propositional formulas up to logical equivalence) with
  $(\lor,\land,\bot,\top)$.
\end{example}

Following~\cite{geerts2010database}, we extend the definition of a semiring with a monus operation $\ominus$;
note that this definition, when possible, uniquely defines such an
operation \lean{SemiringWithMonus}{monus_unique}:
\begin{definition} \label{def:m-semiring}
  Let $(\mathbb K, \oplus, \otimes, \mathbb 0, \mathbb 1)$ be a semiring.
  For $a, b\in\mathbb K$, define
$a\le b$ as $\exists c\in\mathbb K, {b = a\oplus c}$. If $\leq$ is an order
relation,\footnote{$\leq$ is always a
pre-order, but not always antisymmetric: on the ring $\mathbb Z$ of
integers, any two elements are related in both directions, e.g.,
$0\le 1\le 0$ with $0\neq 1$
\lean{SemiringWithMonus}{natural_preorder_not_antisymm}.}
it is called
the \emph{natural order} of the semiring, and the semiring is said to be
\emph{naturally ordered}.
An \emph{m-semiring} (or \emph{semiring with monus})
is defined as a naturally ordered semiring in which we can introduce an
operation $\ominus$ satisfying the \emph{residuation law}:
\(
  x\ominus y \le z \iff x \le y\oplus z
  \).
\end{definition}

It is possible to construct
a naturally ordered semiring where a $\ominus$ cannot be
introduced~\cite{amarilli2016example}.
An m-semiring $(\mathbb K, \oplus, \otimes, \ominus, \mathbb 0, \mathbb
1)$ satisfies some basic properties, for $a, b, c\in\mathbb K$:
\vspace*{-1em}
\begin{multicols}{2}
  \begin{itemize}
    \item \lean{SemiringWithMonus}{add_monus}
      $a\oplus(b\ominus a) = b\oplus(a\ominus b)$;
    \item \lean{SemiringWithMonus}{monus_add}
      $(a\ominus b)\ominus c = a\ominus(b\oplus c)$;
    \item \lean{SemiringWithMonus}{monus_self}
      \lean{SemiringWithMonus}{zero_monus} $a\ominus a = \mathbb 0\ominus a=\mathbb 0$;
    \item \lean{SemiringWithMonus}{monus_zero} $a\ominus \mathbb 0=a$.
  \end{itemize}
\end{multicols}
\vspace*{-1em}
\noindent In addition to these, m-semirings may have other properties. Useful ones are:
\begin{itemize}
  \item \textbf{Commutativity:} $\forall a,b\in \mathbb K$, $a\otimes b = b\otimes a$.
  \item \textbf{Idempotence:} \lean{SemiringWithMonus}{idempotent} $\forall a\in \mathbb K$, $a\oplus a = a$.
  \item \textbf{Absorptivity:} \lean{SemiringWithMonus}{absorptive} $\forall a\in \mathbb K$, $\mathbb 1 \oplus a = \mathbb 1$.
  \item \textbf{Distributivity of $\otimes$ over $\ominus$}:
    \lean{SemiringWithMonus}{mul_sub_left_distributive} $\forall a,b,c\in
    \mathbb K$, $a\otimes (b\ominus c) = (a\otimes b) \ominus (a\otimes
    c)$.
  \item \textbf{Characteristic $0$:} \(\forall n\geq 1,
    \bigoplus_{i=1}^n\mathbb{1}\neq \mathbb{0}\).
\end{itemize}

\begin{table}
  \centering
  \begin{tabular}{rlcccc}
\toprule
\bfseries M-semiring&
\bfseries Definition of $a\ominus b$&
\bfseries Idem.&
\bfseries Abs.&
\bfseries Distr.&
\bfseries Char.~$0$\\
\midrule
\leanlink{Semirings/Bool}{instCommSemiringWithMonusBool}{$\mathbb{B}$}&
$a\land\lnot b$&
\leanlink{Semirings/Bool}{Bool.idempotent}{\ok}&
\leanlink{Semirings/Bool}{Bool.absorptive}{\ok}&
\leanlink{Semirings/Bool}{Bool.mul_sub_left_distributive}{\ok}&
\leanlink{Semirings/Bool}{Bool.instCharPZero}{\ok}\\
\leanlink{Semirings/BoolFunc}{instCommSemiringWithMonusBoolFunc}{$\mathbb{B}[X]$}&
$\nu\mapsto a(\nu)\land\lnot b(\nu)$&
\leanlink{Semirings/BoolFunc}{BoolFunc.idempotent}{\ok}&
\leanlink{Semirings/BoolFunc}{BoolFunc.absorptive}{\ok}&
\leanlink{Semirings/BoolFunc}{BoolFunc.mul_sub_left_distributive}{\ok}&
\leanlink{Semirings/BoolFunc}{BoolFunc.instCharPZero}{\ok}\\
\leanlink{Semirings/Why}{instCommSemiringWithMonusWhy}{$\mathrm{Why}[X]$}&
$a\setminus b$&
\leanlink{Semirings/Why}{Why.idempotent}{\ok}&
\leanlink{Semirings/Why}{Why.not_absorptive}{\ko}&
\leanlink{Semirings/Why}{Why.not_mul_sub_left_distributive}{\ko}&
\leanlink{Semirings/Why}{Why.instCharPZero}{\ok}\\
\leanlink{Semirings/Which}{instCommSemiringWithMonusWhich}{$\mathrm{Which}[X]$}&
$a$ if $\bot\in\{a,b\}$; $\bot$ if $a\subseteq b$; else $a\setminus b$&
\leanlink{Semirings/Which}{Which.idempotent}{\ok}&
\leanlink{Semirings/Which}{Which.not_absorptive}{\ko}&
\leanlink{Semirings/Which}{Which.not_mul_sub_left_distributive}{\ko}&
\leanlink{Semirings/Which}{Which.instCharPZero}{\ok}\\
\leanlink{Semirings/IntervalUnion}{instCommSemiringWithMonusIntervalUnionOfDenselyOrderedOfBoundedOrder}{Temporal}&
$a\setminus b$&
\leanlink{Semirings/IntervalUnion}{IntervalUnion.idempotent}{\ok}&
\leanlink{Semirings/IntervalUnion}{IntervalUnion.absorptive}{\ok}&
\leanlink{Semirings/IntervalUnion}{IntervalUnion.mul_sub_left_distributive}{\ok}&
\leanlink{Semirings/IntervalUnion}{IntervalUnion.instCharPZero}{\ok}\\
\leanlink{Semirings/Tropical}{instCommSemiringWithMonusMinTropicalWithTopNat}{Tropical ($\mathbb{N}$)}&
if $a\geq b$ then $+\infty$ else $a$&
\leanlink{Semirings/Tropical}{MinTropical.idempotent}{\ok}&
\leanlink{Semirings/Tropical}{MinTropicalN.absorptive}{\ok}&
\leanlink{Semirings/Tropical}{MinTropicalN.mul_sub_left_distributive}{\ok}&
\leanlink{Semirings/Tropical}{MinTropicalN.instCharPZero}{\ok}\\
\leanlink{Semirings/Tropical}{instCommSemiringWithMonusMinTropicalWithTopReal}{Tropical ($\mathbb{R}$)}&
if $a\geq b$ then $+\infty$ else $a$&
\leanlink{Semirings/Tropical}{MinTropical.idempotent}{\ok}&
\leanlink{Semirings/Tropical}{MinTropicalR.not_absorptive}{\ko}&
\leanlink{Semirings/Tropical}{MinTropicalR.mul_sub_left_distributive}{\ok}&
\leanlink{Semirings/Tropical}{MinTropicalR.instCharPZero}{\ok}\\
\leanlink{Semirings/Viterbi}{Viterbi.instCommSemiringWithMonus}{Viterbi}&
if $a\leq b$ then $0$ else $a$&
\leanlink{Semirings/Viterbi}{Viterbi.idempotent}{\ok}&
\leanlink{Semirings/Viterbi}{Viterbi.absorptive}{\ok}&
\leanlink{Semirings/Viterbi}{Viterbi.mul_sub_left_distributive}{\ok}&
\leanlink{Semirings/Viterbi}{Viterbi.instCharPZero}{\ok}\\
\leanlink{Semirings/MinMax}{instCommSemiringWithMonusMinMax}{Security}&
if $a\geq b$ then $\mathbb 0$ else $a$&
\leanlink{Semirings/MinMax}{MinMax.idempotent}{\ok}&
\leanlink{Semirings/MinMax}{MinMax.absorptive}{\ok}&
\leanlink{Semirings/MinMax}{TVL.not_mul_sub_left_distributive}{\ko}&
\leanlink{Semirings/MinMax}{MaxMin.instCharPZero}{\ok}\\
\leanlink{Semirings/Nat}{instCommSemiringWithMonusNat}{$\mathbb{N}$}&
$\max(a-b, 0)$&
\leanlink{Semirings/Nat}{Nat.not_idempotent}{\ko}&
\leanlink{Semirings/Nat}{Nat.not_absorptive}{\ko}&
\leanlink{Semirings/Nat}{Nat.mul_sub_left_distributive}{\ok}&
\leanlink{Semirings/Nat}{Nat.charP_zero}{\ok}\\
\leanlink{Semirings/How}{instCommSemiringWithMonusMvPolynomialNat}{$\mathbb{N}[X]$}&
\(\sum_{\text{$M$}}\max(\mathrm{c}_M(a)-\mathrm{c}_M(b), 0)M\)&
\leanlink{Semirings/How}{How.not_idempotent}{\ko}&
\leanlink{Semirings/How}{How.not_absorptive}{\ko}&
\leanlink{Semirings/How}{How.not_mul_sub_left_distributive}{\ko}&
\leanlink{Semirings/How}{How.charP_zero}{\ok}\\
\leanlink{Semirings/Lukasiewicz}{instCommSemiringWithMonusLukasiewicz}{$\textrm{\L ukasiewicz}$}&
if $a\leq b$ then $0$ else $a$&
\leanlink{Semirings/Lukasiewicz}{Lukasiewicz.idempotent}{\ok}&
\leanlink{Semirings/Lukasiewicz}{Lukasiewicz.absorptive}{\ok}&
\leanlink{Semirings/Lukasiewicz}{Lukasiewicz.mul_sub_left_distributive}{\ok}&
\leanlink{Semirings/Lukasiewicz}{Lukasiewicz.instCharPZero}{\ok}\\
\bottomrule
\end{tabular}
\caption{Properties of common commutative m-semirings: idempotence,
  absorptivity, distributivity of $\otimes$ over $\ominus$, characteristic
  $0$; every entry links to its machine-checked proof \leanmain}
\label{tab:semirings}
\end{table}

In this work, we restrict to \emph{commutative} semirings, which is
actually the setting of most works on data provenance
\cite{green2007provenance,geerts2010database}, though not of
\cite{sen2026provsql}.
All semirings from Example~\ref{ex:semirings} are commutative
and can be extended to m-semirings, as can all the semirings of
Table~\ref{tab:semirings} and, to our knowledge, all other semirings that
have been used to annotate data (rings, which are not naturally ordered,
are not among them).
Table~\ref{tab:semirings} summarizes the definitions of the $\ominus$
operator and whether they satisfy the other four properties above for
the semirings in Example~\ref{ex:semirings}, along with the
Which[$X$] semiring \cite{green2017provenance}, the
Temporal semiring of unions of intervals
used in temporal databases
\cite{widiaatmaja2025demonstration}, the tropical semirings
\cite{green2007provenance}, the Viterbi semiring~\cite{green2017provenance}, the security or access control semiring
\cite{green2017provenance}, the $\mathbb{N}[X]$ or How[$X$] semiring
\cite{green2007provenance} and the \L ukasiewicz semiring~\cite{gradel2025provenance}.
We note that only the Boolean, Boolean-function, Temporal, Tropical (over~$\mathbb{N}$), Viterbi, and \L
ukasiewicz m-semirings are idempotent, absorptive and have $\otimes$
distributive over $\ominus$.
These properties will be important for the semantics of \sql{HAVING}
queries. Note that this table, along with the formal Lean proofs that
ensure its validity, definitively fixes
a
few erroneous statements in the literature (see textual proofs in
appendix):
\cite[Example~4]{geerts2010database}
wrongly claimed that the tropical semiring over reals could not be equipped with
$\ominus$; \cite{amsterdamer2011limitations} wrongly claimed that
both $\mathbb{N}[X]$ and Why[$X$] had $\otimes$ distributive over $\ominus$
(called Axiom A13 in~\cite{amsterdamer2011limitations}). Such errors are
easy to make given some counter-intuitive properties of
these basic building blocks; clean, formally proven statements
are one of our contributions.

For queries with grouping and aggregation, an additional useful notion
\cite{amsterdamer2011provenance} is that of \emph{$\delta$-semirings},
which is a semiring with an additional unary operator $\delta:\mathbb
K\to \mathbb K$, used to capture the semantics of the existence of a
group. Two properties are required in \cite{amsterdamer2011provenance}:
(i) $\delta(\mathbb
0)=\mathbb 0$ and (ii) $\delta(\mathbb 1 \oplus \dots \oplus \mathbb
1)=\mathbb 1$. Non-trivial semirings in which such a $\delta$ can be defined are exactly
those of characteristic $0$
\lean{SemiringWithMonus}{delta_exists_iff_charP_zero}.
These properties leave $\delta$ largely underdetermined. We require one
further axiom, \emph{$\delta$-absorption}: $\forall a,b\in\mathbb K$,
$a\otimes\delta(a\oplus b)=a$; a group-existence factor is thus redundant
next to any annotation that already contains an occurrence of that group.
The characterization above is unaffected by this strengthening: in an
m-semiring, a~$\delta$ satisfying all three axioms exists exactly when the
characteristic is $0$, since the support indicator (see below) satisfies
them all \lean{SemiringWithMonus}{isDelta_exists_iff_charP_zero}.
The two usual definitions of $\delta$ are the identity,
$\delta(x)\defeq x$, and the \emph{support indicator},
$\delta(x)\defeq\mathbb 1$ for $x\neq\mathbb 0$ and
$\delta(\mathbb 0)\defeq\mathbb 0$. We choose the former whenever it
satisfies all properties (e.g., in $\mathbb{B}[X]$
\lean{Semirings/BoolFunc}{BoolFunc.isDelta_id}), the latter otherwise.

A \emph{homomorphism of m-semirings}
\lean{SemiringWithMonus}{SemiringWithMonusHom} is a semiring
homomorphism $h:\mathbb K\to\mathbb K'$ that also commutes with
$\ominus$ and with $\delta$.

\begin{toappendix}
  \subsection{Corrections to the Literature}
  \label{appendix:corrections}
  Section~\ref{sec:preliminaries} announced corrections to three
  published claims; for each, we recall the claim, locate the error, and
  prove the correct statement, mirroring the Lean development.

  \paragraph*{The tropical semiring over $\mathbb R$ admits a monus
  \lean{Semirings/Tropical}{instCommSemiringWithMonusMinTropicalWithTopReal}.}
  \cite[Example~4]{geerts2010database} considers
  $(\mathbb R\cup\{+\infty\},\min,+,+\infty,0)$, establishes that it is
  naturally ordered, and asserts that the monus is nonetheless
  ill-defined in it: $x\ominus y$ is to be the least $z$ with
  $x\leq y\oplus z$, and the argument observes that
  $\{z \mid x\leq\min(y,z)\}$ ``is not bounded below since one can take
  arbitrary small values for~$z$''. The set is indeed unbounded below in
  the usual order of the reals, but the least element must be taken in
  the \emph{natural} order $\leq$ of the semiring, which is the reverse
  of the usual order, and it always exists, as we now show (the flaw in the argument was already
  observed in~\cite{amarilli2016example}, and confirmed by an author
  of~\cite{geerts2010database}).

  Write $\mathbb K \defeq \mathbb R\cup\{+\infty\}$ with $a\oplus b\defeq
  \min(a,b)$, $\mathbb 0\defeq+\infty$, $a\otimes b\defeq a+b$ and $\mathbb
  1\defeq 0$; we write $\leq_{\mathbb R}$ for the usual order on $\mathbb K$,
  to distinguish it from the natural order $\leq$ of
  Definition~\ref{def:m-semiring}.

  The natural order is the \emph{reverse} of the usual one. Indeed, $a\leq b$
  holds iff $\min(a,c)=b$ for some $c\in\mathbb K$: if $b\leq_{\mathbb R}a$,
  take $c\defeq b$; and if $b>_{\mathbb R}a$ then $\min(a,c)\leq_{\mathbb
  R}a<_{\mathbb R}b$ for every $c$, so no such $c$ exists. Hence $a\leq b$
  iff $b\leq_{\mathbb R}a$, and therefore the semiring is naturally ordered.

  Now,
  \[
    a\ominus b\defeq
    \begin{cases}
      +\infty & \text{if $a\geq_{\mathbb R}b$,}\\
      a & \text{otherwise.}
    \end{cases}
  \]
  We observe by the residuation law $a\ominus b\leq c\iff a\leq b\oplus c$, which,
  rewritten in the usual order, is $c\leq_{\mathbb R}a\ominus b\iff
  \min(b,c)\leq_{\mathbb R}a$. If $a\geq_{\mathbb R}b$, the left-hand side is
  $c\leq_{\mathbb R}+\infty$ and the right-hand side follows from
  $\min(b,c)\leq_{\mathbb R}b\leq_{\mathbb R}a$, so both hold. If
  $a<_{\mathbb R}b$, the left-hand side is $c\leq_{\mathbb R}a$; when
  $c\leq_{\mathbb R}a$ we get $\min(b,c)\leq_{\mathbb R}c\leq_{\mathbb R}a$,
  and when $c>_{\mathbb R}a$ both $b$ and $c$ exceed $a$, so
  $\min(b,c)>_{\mathbb R}a$. The two sides therefore agree in every case, and
  $\mathbb K$ is an m-semiring.

  Nothing above uses any property of $\mathbb R$ beyond it being a linearly
  ordered commutative additive monoid, so the same construction equips the
  tropical semiring over $\mathbb Q$, over $\mathbb Z$, and over $\mathbb N$, with a monus.

  The erroneous claim has propagated: the initial arXiv version of a
  recent generalization of Codd's theorem to
  semirings~\cite{badia2025coddarxiv} reproduces it, along with the
  argument of~\cite[Example~4]{geerts2010database}; the published
  version~\cite{badia2025codd} states, without further comment, exactly
  the monus above.

  \paragraph*{In $\mathbb N[X]$, $\otimes$ does not distribute over $\ominus$
  \lean{Semirings/How}{How.not_mul_sub_left_distributive}.}
  \cite{amsterdamer2011limitations} axiomatizes difference on annotated
  relations through thirteen axioms A1--A13 on m-semirings, A13 being
  the distributivity of $\otimes$ over~$\ominus$, and its Table~3 lists
  $\mathbb N[X]$ among the m-semirings satisfying A13. It does not.
  Here $\ominus$ is coefficient-wise truncated subtraction: for every
  monomial $M$, $c_M(a\ominus b)=\max\bigl(c_M(a)-c_M(b),0\bigr)$, writing
  $c_M(\cdot)$ for the coefficient of $M$. A single variable $x$ suffices.
  Take $b\defeq x$, $c\defeq\mathbb 1$ and $a\defeq b\oplus c=x+1$. Then
  $b\ominus c=x$, since $c_x(b)-c_x(c)=1-0$ and $c_{\mathbb 1}(b)-c_{\mathbb
  1}(c)=0-1$ truncates to $0$, so
  \[
    a\otimes(b\ominus c)=(x+1)\,x=x^2+x .
  \]
  On the other hand $a\otimes b=x^2+x$ and $a\otimes c=x+1$, so
  \[
    (a\otimes b)\ominus(a\otimes c)=(x^2+x)\ominus(x+1)=x^2 ,
  \]
  the coefficient of $x$ truncating to $1-1=0$. The two results differ in the
  coefficient of $x$.

  \paragraph*{In Why[$X$], $\otimes$ does not distribute over $\ominus$
  \lean{Semirings/Why}{Why.not_mul_sub_left_distributive}.}
  Table~3 of \cite{amsterdamer2011limitations} also lists Why[$X$] among
  the m-semirings satisfying its axiom A13.
  Recall the semiring $(2^{2^X},\cup,\Cup,\emptyset,\{\emptyset\})$ of
  Example~\ref{ex:semirings}, whose monus is set difference, $A\ominus
  B=A\setminus B$. We note that $\mathbb 0=\emptyset$ is the empty set whereas $\mathbb 1=\{\emptyset\}$ is the singleton containing
  the empty set. Again one variable $x$ suffices: take
  $a\defeq\{\{x\}\}$, $b\defeq\{\emptyset\}$ and $c\defeq\{\{x\}\}$. Then
  $b\ominus c=\{\emptyset\}$ and
  \[
    a\otimes(b\ominus c)=\{\{x\}\}\Cup\{\emptyset\}=\{\{x\}\cup\emptyset\}
    =\{\{x\}\} .
  \]
  But $a\otimes b=\{\{x\}\}\Cup\{\emptyset\}=\{\{x\}\}$ and $a\otimes
  c=\{\{x\}\}\Cup\{\{x\}\}=\{\{x\}\cup\{x\}\}=\{\{x\}\}$, so
  \[
    (a\otimes b)\ominus(a\otimes c)=\{\{x\}\}\setminus\{\{x\}\}=\emptyset ,
  \]
and $\{\{x\}\}\neq\emptyset$.
\end{toappendix}

\inlinepar{Useful results on m-semirings.} We discuss some
additional results on m-semirings. First we show that idempotence is equivalent to $\ominus$ being right-distributive over $\oplus$.
\begin{toappendix}
  \subsection{Proofs of the Useful Results on M-Semirings}
  We prove here the three useful results on m-semirings stated in
  Section~\ref{sec:preliminaries}.
\end{toappendix}
\begin{propositionrep}\label{prop:idempotent_iff_add_monus}
  \lean{SemiringWithMonus}{idempotent_iff_add_monus}
  An m-semiring $\mathbb K$ is idempotent if and only if $\ominus$
  is right-distributive over $\oplus$ in $\mathbb K$, i.e., if and only
  if $\forall a, b,
c\in\mathbb K, (a\oplus b)\ominus c=(a\ominus c)\oplus (b\ominus c)$.
\end{propositionrep}
\begin{appendixproof}
  We show both implications.

  \textbf{($\Rightarrow$)}
  \lean{SemiringWithMonus}{add_monus_of_idempotent}
  Assume $\mathbb K$ to be idempotent.

  Consider $a,b \in\mathbb K$ such as $a\leq b$. Then $\exists c, a+c=b$.
  So $a \oplus b = a \oplus (a \oplus c)=(a\oplus a)\oplus c=a \oplus c = b$.
  Conversely, if $a\oplus b = b$, we trivially have $a\leq b$.

  Now consider an element $u\in\mathbb K$ such that $a \leq u$ and $b
  \leq u$, i.e.,
  $a \oplus u = u$ and $b\oplus u =u$. Observe that $a\oplus b\oplus
  u=a\oplus u=u$, so that $a\oplus b\leq u$.

  But since $a\leq a \oplus b$ and $b \leq a \oplus b$,
  $a \oplus b $ is the least such $u$ for which $a\leq u $ and $b \leq u$.
  In other words, $\oplus$ is the join of the semilattice $(\mathbb K,\leq)$.

  \medskip
  We show $(a\oplus b)\ominus c=(a\ominus c)\oplus (b\ominus c)$ by
  showing both inequalities.

  (i) By definition of monus : $x \ominus y \leq z \iff x \leq y \oplus
  z$. Note that $a\leq a\oplus b$ and $a\oplus b\leq c \oplus ((a\oplus
  b)\ominus c)$ by applying the residuation law to the inequality
  $(a\oplus b)\ominus c\leq (a\oplus b)\ominus c$. This means $a\leq
  c\oplus((a\oplus b)\ominus c)$ which implies $a\ominus c\leq (a\oplus b)\ominus c$.
  Similarly, $b\ominus c\leq (a\oplus b)\ominus c$. By using the fact
  that $\oplus$ is the join, we get
  $(a\ominus c)\oplus (b\ominus c) \leq (a\oplus b) \ominus c$.

  (ii) We have $a\leq c \oplus (a\ominus c)$ and $b\leq c\oplus (b\ominus c)$. So, $a\oplus b \leq c\oplus c \oplus (a\ominus c) \oplus (b\ominus c)$.
  By idempotence of $\oplus$,  $a\oplus b \leq c \oplus (a\ominus c)
  \oplus (b\ominus c)$, or by the residuation law
$(a\oplus b) \ominus c \leq (a\ominus c)\oplus (b\ominus c) $.

\bigskip
\textbf{($\Leftarrow$)}
  \lean{SemiringWithMonus}{idempotent_of_add_monus}
  Assume that $\ominus$ is right-distributive over $\oplus$.
For any arbitrary $a\in \mathbb K$, we thus have:
\[
  (a\oplus a)\ominus a
  = (a\ominus a)\oplus(a\ominus a).
\]
Since $a\ominus a=\mathbb 0$, the right-hand side is $\mathbb 0$. And
$(a\oplus a)\ominus a = \mathbb 0\implies a\oplus a\leq a$ by the
residuation law.
  But, trivially, $a\le a\oplus a$ so the antisymmetry of $(\mathbb K,\leq)$ gives $a\oplus a=a$.
  Thus, \(\forall a\in\mathbb K, a \oplus a=a.\)
\end{appendixproof}
\begin{corollaryrep}\label{lem:finite-add-monus}
\lean{SemiringWithMonus}{add_monus_of_idempotent_multiset}
Let \(\mathbb K\) be an
idempotent
m-semiring.
Then, for all $n \ge 1$ and $a_1,\dots,a_n,c \in\nobreak K$,
\(
\left(\bigoplus_{i=1}^n a_i\right)\ominus c
=
\bigoplus_{i=1}^n (a_i \ominus c).
\)
\end{corollaryrep}

\begin{appendixproof}
We proceed by induction on \(n \ge 1\).

For \(n=1\), the identity is trivial.

Assume the identity holds for \(n-1\geq 1\).
Then
\begin{align*}
\Bigl(\bigoplus_{i=1}^n a_i\Bigr)\ominus c
&= \Bigl(\bigl(\bigoplus_{i=1}^{n-1} a_i\bigr)\oplus a_n\Bigr)\ominus c \\
&= \Bigl(\bigoplus_{i=1}^{n-1} a_i\Bigr)\ominus c \;\oplus\; (a_n\ominus c) \\
&= \bigoplus_{i=1}^{n-1} (a_i\ominus c) \;\oplus\; (a_n\ominus c)
= \bigoplus_{i=1}^n (a_i\ominus c),
\end{align*}
where the second equality uses
Proposition~\ref{prop:idempotent_iff_add_monus} and the third uses the
induction hypothesis.
\end{appendixproof}

\noindent We also show elementary consequences of the residual
characterization of $\ominus$ in m-semirings.
\begin{lemmarep}\label{lem:monus-basic}
  In any m-semiring $\mathbb K$, for all $x,y,y'\in\mathbb K$:
  (i)~$x\ominus y \le x$ \lean{SemiringWithMonus}{monus_le};
  (ii)~$x \le y\oplus (x\ominus y)$ \lean{SemiringWithMonus}{le_plus_monus};
  (iii)~if $y\le y'$ then $x\ominus y'\le x\ominus y$
  \lean{SemiringWithMonus}{monus_antitone}.
\end{lemmarep}

\begin{appendixproof}
  (i)~By the residuation law (Definition~\ref{def:m-semiring}) with
  $z=x$, $x\ominus y\le x$ if and only if $x\le y\oplus x$, which holds by
  definition of $\leq$ and commutativity of $\oplus$.
  (ii)~Apply the residuation law with $z=x\ominus y$: since
  $x\ominus y\le x\ominus y$, we obtain $x\le y\oplus(x\ominus y)$.
  (iii)~By~(ii), $x\le y\oplus(x\ominus y)$; since $y\le y'$ and $\oplus$
  is monotone with respect to the natural order,
  $x\le y'\oplus(x\ominus y)$, and the residuation law gives
  $x\ominus y'\le x\ominus y$.
\end{appendixproof}

\inlinepar{Annotated relations and semantics.}
We use the notion of annotated relations initially introduced in
\cite{green2007provenance}, relying on the setting (in particular where
relations are multisets of tuples) and definitions introduced in detail in
\cite[Section IV]{sen2026provsql}.
In short, if $\mathcal{V}$ is the (usually countably infinite) set of
data values used within unannotated relations, given a commutative m-semiring
$\mathbb K$ and a set $\mathcal V_{\mathbb K}\supseteq \mathcal V$ of
values (either~$\mathcal V$ or an extension of it) a
$\mathbb K$-relation of arity $k\in\mathbb N$ over $\mathcal V_{\mathbb
K}$ is a \emph{multiset} of tuples over~$\mathcal V_{\mathbb K}^k\times\mathbb
K$ which we usually denote $(u,\alpha)$ with $u\in \mathcal V_{\mathbb K}^k$ and $\alpha\in \mathbb K$.
We call $\alpha$ the annotation of the tuple $u$. As
in~\cite{green2007provenance}, a tuple annotated by $\mathbb 0$ is
considered absent: two $\mathbb K$-relations that differ only by tuples
annotated by $\mathbb 0$ are identified.
A $\mathbb K$-instance, usually denoted as $\hat I$, over a database
schema $\mathcal D$, is a function mapping each relation symbol $R$ of $\mathcal D$ to a $\mathbb K$-relation over~$\mathcal V_{\mathbb K}$ of arity $\mathcal D(R)$.

Unlike in~\cite{green2007provenance}, where a $\mathbb K$-relation is a
function from tuples to~$\mathbb K$, a tuple may thus occur several
times, with different annotations. Aggregation requires it: an aggregate
is computed over the occurrences of a group and not only over its distinct
values. Over an idempotent semiring such as $\mathbb B[X]$ the
multiplicities cannot be carried by the annotations, as the semimodule
construction of~\cite{amsterdamer2011provenance} then forces the
aggregation monoid to be idempotent as
well~\cite[Proposition~3.11]{amsterdamer2011provenance}, thus excluding
\sql{COUNT} and \sql{SUM}. It also matches the bag semantics of SQL,
under which a projection may produce two occurrences of a tuple with
distinct provenance, and merging them would change the value of
\sql{COUNT(*)}~\cite{sen2026provsql}.

The elements of $\mathcal V$ are the \emph{regular data values}: those
that a term can denote and that a comparison can test directly.
We adopt the unnamed perspective and consider, instead of attribute names, their \emph{positional indices}, denoted by $\#i$ for $i\in \mathbb N^*$.
A \emph{term} is then an expression involving values from $\mathcal V$
and positional indices, and the \emph{max-index} of a term is the largest
positional index appearing in it ($0$ if none appear). Given a tuple
$u=(u_1,\dots,u_k)$ and a term $t$ of max-index $\leq k$ we can define $t(u)\in \mathcal V$ that evaluates the term expression after replacing the positional indices in it with the respective attribute values of the tuple. Given a finite sequence of tuples $(u_{i_1},\dots,u_{i_n})$ we define the pointwise evaluation of a well-defined term $t$ on them as $\pi_{t}((u_{i_1},\dots,u_{i_n}))=(t(u_{i_1}),\dots,t(u_{i_n}))$.

Queries are written in the relational algebra under multiset semantics,
extended with a duplicate elimination operator $\epsilon$
\cite{DBLP:conf/pods/DayalGK82,DBLP:journals/iandc/GrumbachM99}, a
multiset difference operator $-$, and a grouping and aggregation
operator $\gamma$ inspired by \cite{DBLP:journals/tcs/Libkin03}; we write
$\RA_k$ for the set of queries of arity $k\in\mathbb N$.
The semantics of a query $q$ over annotated relations $\hat I$, denoted as
$\angsem{\hat I}{q}$, is defined over $\mathbb K$-relations by induction
on these operators, following \cite{sen2026provsql}, which extends
\cite{green2007provenance,geerts2010database,amsterdamer2011provenance}.
The full syntax of $\RA_k$ and the definition of $\angsem{\hat
I}{\cdot}$ on every operator are recalled in the appendix; we describe
here only the
aggregation operator, on which this paper builds. Since we allow
aggregate functions that are not commutative (Section~\ref{sec:semantics}),
the operator carries a total order~$\preceq$ on tuples fixing the order
in which the values of a group are aggregated, as the standard SQL
``\sql{AGG(...) WITHIN GROUP (ORDER BY ...)}'' clause or PostgreSQL's
``\sql{AGG(... ORDER BY ...)}'' do; the order is irrelevant for
commutative aggregates.
\smallskip
\begin{description}
  \item[syntax] for $k\in\mathbb N$, $q\in\RA_k$, distinct grouping
    indices $(i_j)_{1\leq j\leq m}$ with $1\leq i_j\leq k$, terms
    $t_1,\dots,t_n$ of max-index $\leq k$, aggregate functions
    $f_1,\dots,f_n$ from finite sequences of values to values, and a
    total order $\preceq$ on $\mathcal V^k$,
    $\gamma^{\preceq}_{i_1,\dots,i_m}[t_1{:}f_1,\dots,t_n{:}f_n](q)\in\RA_{m+n}$;
  \item[semantics] for a group key $\vec v\defeq (v_1,\dots,v_m)$, let
    $U^{\preceq}_{\angsem{\hat I}{q},\vec v}$, abbreviated~$U$, be the
    sequence of the tuples $(u,\alpha)\in\angsem{\hat I}{q}$ such that
    $(u_{i_1},\dots,u_{i_m})=\vec v$, taken with their multiplicity and
    ordered by~$\preceq$ on~$u$ (occurrences of equal tuples in an
    arbitrary order; the grouping indices are those of the operator). Then
    $\angsem{\hat I}{\gamma^{\preceq}_{i_1,\dots,i_m}[t_1{:}f_1,\dots,t_n{:}f_n](q)}$
    is defined as
    \[
      \Bigl\{\,\bigl(\vec v,\lAngle f_1,t_1,U\rAngle,\dots,
      \lAngle f_n,t_n,U\rAngle,\delta(\beta)\bigr)\ \Big|\
      (\vec v,\beta)\in\angsem{\hat
      I}{\epsilon(\Pi_{\#i_1,\dots,\#i_m}(q))}\,\Bigr\}
    \]
    where the \emph{aggregate value} $\lAngle f_l,t_l,U\rAngle$ is a
    formal object recording the aggregate function $f_l$ together with
    the annotated sequence $\bigl((t_l(u),\alpha)\bigr)_{(u,\alpha)\in U}$
    of the values it applies to. The group key is annotated by
    $\delta(\beta)$, with $\beta$ the $\oplus$-sum of the annotations of
    the tuples of the group: the group exists exactly when one of its
    tuples does.
\end{description}
An aggregate value is not a value of~$\mathcal V$ and cannot be compared
directly: which values are actually aggregated depends on which
occurrences of the group are present, and this is what the annotations
record. In \cite{sen2026provsql}, following
\cite{amsterdamer2011provenance}, aggregate functions are commutative
monoid operations and the aggregate value is the element
$\bigoplus_{(u,\alpha)\in U}t_l(u)*\alpha$ of a semimodule, ``$*$''
being a tensor product combining a value with an annotation; that element
is determined by $\lAngle f_l,t_l,U\rAngle$.
We take $\mathcal V_{\mathbb K}$ to be $\mathcal V$ extended with the
aggregate values, and call an attribute, a term, or a side of a
comparison \emph{regular} when it only involves values of~$\mathcal V$,
and \emph{aggregate} otherwise; in the absence of aggregation, $\mathcal
V_{\mathbb K}=\mathcal V$ and everything is regular.

Every operator of this algebra retains in this paper the semantics of
\cite{sen2026provsql}, up to this representation of aggregate values. The
\emph{only} operator whose
semantics we change is \emph{selection}: where
\cite{sen2026provsql} defines
\(
  \angsem{\hat I}{\sigma_\phi(q)}\defeq
  \mset{(u,\alpha)\mid (u,\alpha)\in\angsem{\hat I}{q},\ \phi(u)},
\)
a filter that can only test regular data values,
Section~\ref{sec:semantics} gives a \emph{possible-world semantics} for
predicates $\psi$ that may in addition compare \emph{aggregate} values,
i.e., \sql{HAVING} conditions, assigning them a $\mathbb K$-annotation
instead of filtering on them. A predicate $\phi$ of the algebra above is
exactly a $\psi$ with no aggregate comparison; the extension is
conservative, such that, on those, the two definitions coincide.
We use three classes of queries throughout. An \emph{aggregate-free
query} uses no grouping and aggregation. An \emph{aggregate query} may
use it, but only as its last operator (\emph{terminal aggregation}),
which is the case covered by \cite{sen2026provsql};
terminal
aggregation excludes any selection, join or further grouping on an
aggregate value, and in particular \sql{HAVING}, an ordinary part of SQL
(two of the 22 TPC-H queries~\cite{tpch}, Q11 and Q18, use it).
A
\emph{\sql{HAVING} query} may in addition apply a selection to aggregate
expressions (\sql{HAVING} conditions), comparing terms over the
aggregate values of a group with constants, with the group key, or with
each other.
Giving a provenance
semantics to \sql{HAVING} queries is the object of this paper.

\begin{toappendix}
  \subsection{Relational Algebra with Aggregation and its Semantics}
  \label{appendix:algebra}

  We recall in this subsection, for self-containedness, the relational
  algebra with aggregation of \cite{sen2026provsql} and its semantics over
  ordinary and over annotated relations. All of this material is from
  \cite{sen2026provsql}, itself building
  on \cite{green2007provenance,geerts2010database,amsterdamer2011provenance};
  Section~\ref{sec:semantics} changes only the semantics of the selection
  operator.

  Given a database schema $\mathcal{D}$, the language $\RA_k$ of
  queries of arity $k\in\mathbb{N}$ is
  defined recursively as follows (\lean{Query}{Query} for the
  aggregate-free fragment, \lean{AggQuery}{AggQuery} with the aggregation
  operator):
  \begin{description}
    \item[relation] for any relation $R$ in the domain of $\mathcal{D}$,
      $R\in\RA_{\mathcal{D}(R)}$;
    \item[projection] for $k\in\mathbb N$, $q\in\RA_{k}$,
      $t_1,\dots, t_n$ terms of max-index~\mbox{$\leq k$},
      $\Pi_{t_1,\dots,t_n}(q)\in\RA_n$;
    \item[selection] for $k\in\mathbb N$, $q\in\RA_k$, and $\phi$ a
      Boolean combination of (in)equality comparisons between terms
      of max-index $\leq k$, $\sigma_\phi(q)\in\RA_k$;
    \item[cross product] for $k_1,k_2\in\mathbb N$, $q_1\in\RA_{k_1}$,
      $q_2\in\RA_{k_2}$, $q_1\times q_2\in\RA_{k_1+k_2}$;
    \item[multiset sum] for $k\in\mathbb N$, $q_1,q_2\in\RA_k$,
      $q_1\uplus q_2\in\RA_k$;
    \item[duplicate elimination] for $k\in\mathbb N$,
      $q\in\RA_k$, $\epsilon(q)\in\RA_k$;
    \item[multiset difference] for $k\in\mathbb N$, $q_1,q_2\in\RA_k$,
      $q_1 - q_2\in\RA_k$;
    \item[aggregation] for $k\in\mathbb N$, $q\in\RA_k$, distinct
      $(i_j)_{1\leq j\leq m}$ with $1\leq i_j\leq k$, terms $t_1,\dots,
      t_n$ of max-index $\leq k$, functions $f_1,\dots,f_n$ from finite
      sequences of values to values, and a total order $\preceq$ on
      $\mathcal V^k$,
      $\gamma^{\preceq}_{i_1,\dots,i_m}[t_1{:}f_1,\dots,t_n{:}f_n](q)\in\RA_{m+n}$.
  \end{description}
  Some other operators are syntactic sugar:
  the \emph{join} $q_1\bowtie_\phi q_2 \defeq \sigma_\phi(q_1\times q_2)$ and
  the \emph{set union} $q_1\cup q_2\defeq \epsilon(q_1\uplus q_2)$.
  Note that the multiset difference above is neither the usual one nor
  the one of the \sql{EXCEPT ALL} operator of SQL, but the one of the
  \sql{NOT IN} operator: the standard definition
  $\ansem{I}{q_1-q_2}(t)=\max(0,\ansem{I}{q_1}(t)-\ansem{I}{q_2}(t))$
  makes provenance computation intractable. The difference disappears
  under set semantics: $\epsilon(q_1-q_2)$ matches \sql{EXCEPT}.

  \paragraph*{Semantics over ordinary relations.} The semantics
  $\ansem{I}{\cdot}$ of these queries on an ordinary instance $I$
  over $\mathcal{D}$, where a relation of arity $k$ is a finite multiset
  of tuples of $\mathcal{V}^k$, is defined by induction
  \lean{Query}{Query.evaluate} \lean{AggQuery}{AggQuery.evaluatePlain}:
  \begin{description}
    \item[relation] $\ansem{I}{R}\defeq I(R)$;
    \item[projection] $\ansem{I}{\Pi_{t_1,\dots,t_n}(q)}\defeq
      \mset{(t_1(u),\dots,t_n(u))\mid u\in\ansem{I}{q}}$;
    \item[selection] $\ansem{I}{\sigma_\phi(q)}\defeq
      \mset{u\mid u\in\ansem{I}{q},\ \phi(u)}$;
    \item[cross product] $\ansem{I}{q_1\times q_2} \defeq
      \ansem{I}{q_1}\times\ansem{I}{q_2}$;
    \item[multiset sum] $\ansem{I}{q_1\uplus
      q_2}\defeq\ansem{I}{q_1}\uplus\ansem{I}{q_2}$;
    \item[duplicate elimination] $\ansem{I}{\epsilon(q)}$ maps $u$ to $1$
      if $\ansem{I}{q}(u)>0$, and to $0$ otherwise;
    \item[multiset difference] $\ansem{I}{q_1-q_2}$ maps $u$ to $0$ if
      $\ansem{I}{q_2}(u)>0$, and to $\ansem{I}{q_1}(u)$ otherwise;
    \item[aggregation] with $\vec v\defeq(v_1,\dots,v_m)$ and
      $B_l(\vec v)$ the sequence of the values $t_l(u)$ for
      $u\in\ansem{I}{q}$ with $(u_{i_1},\dots,u_{i_m})=\vec v$, taken
      with their multiplicity and ordered by $\preceq$,
      \begin{align*}
        &\ansem{I}{\gamma^{\preceq}_{i_1,\dots,i_m}[t_1{:}f_1,\dots,t_n{:}f_n](q)}\defeq\\
        &\quad\Bigl\{\,\bigl(\vec v,f_1(B_1(\vec
        v)),\dots,f_n(B_n(\vec v))\bigr)\ \Big|\
        \vec v\in\ansem{I}{\epsilon(\Pi_{\#i_1,\dots,\#i_m}(q))}\,\Bigr\}.
      \end{align*}
  \end{description}

  \paragraph*{Semantics over annotated relations.} The semantics
  $\angsem{\hat I}{\cdot}$ of queries on a
  $\mathbb{K}$-instance $\hat I$ over $\mathcal{D}$ is defined by
  induction as follows; for the definition to be meaningful, $\mathbb{K}$
  needs to be a semiring, an m-semiring for the multiset difference
  operator, and a $\delta$-semiring for the aggregation operator (we say
  that $\mathbb{K}$ is \emph{appropriate} for $q$ if this is the case)
  \lean{QueryAnnotatedDatabase}{Query.evaluateAnnotated}
  \lean{AggQuery}{AggQuery.evaluateAnnotated}:
  \begin{description}
    \item[relation] $\angsem{\hat I}{R}\defeq\hat I(R)$;
    \item[projection] $\angsem{\hat I}{\Pi_{t_1,\dots,t_n}(q)}\defeq
      \mset{(t_1(u),\dots,t_n(u), \alpha) \mid (u,\alpha)\in\angsem{\hat
      I}{q}}$;
    \item[selection] $\angsem{\hat I}{\sigma_\phi(q)}\defeq
      \mset{(u,\alpha)\mid (u,\alpha)\in\angsem{\hat I}{q},\ \phi(u)}$
      \emph{(this is the operator whose semantics
      Section~\ref{sec:semantics} extends)};
    \item[cross product] $\angsem{\hat I}{q_1\times q_2}
      \defeq\mset{(u,v,\alpha_1\otimes\alpha_2) \mid
      (u,\alpha_1)\in\angsem{\hat I}{q_1},\ (v,\alpha_2)\in\angsem{\hat
      I}{q_2}}$;
    \item[multiset sum] $\angsem{\hat I}{q_1\uplus
      q_2}\defeq\angsem{\hat I}{q_1}\uplus\angsem{\hat I}{q_2}$;
    \item[duplicate elimination]
      \(
        \angsem{\hat
        I}{\epsilon(q)}\defeq\bigcup_{u\mid\exists\alpha\,(u,\alpha)\in
        \angsem{\hat I}{q}}\bigl\{\,\bigl(u,\bigoplus_{\alpha\mid(u,\alpha)\in\angsem{\hat
        I}{q}}\alpha\bigr)\,\bigr\};
      \)
    \item[multiset difference]
      \(
        \angsem{\hat I}{q_1-q_2}\defeq
        \mset{\bigl(u,\alpha\ominus\bigoplus_{\beta\mid(u,\beta)\in\angsem{\hat
        I}{q_2}}\beta\bigr)\mid(u,\alpha)\in \angsem{\hat I}{q_1}};
      \)
    \item[aggregation] as in Section~\ref{sec:preliminaries}: with
      $\vec v\defeq(v_1,\dots,v_m)$ and $U\defeq U^{\preceq}_{\angsem{\hat
      I}{q},\vec v}$,
      \begin{align*}
        &\angsem{\hat
        I}{\gamma^{\preceq}_{i_1,\dots,i_m}[t_1{:}f_1,\dots,t_n{:}f_n](q)}\defeq\\
        &\quad\Bigl\{\,\bigl(\vec v,\lAngle f_1,t_1,U\rAngle,\dots,
        \lAngle f_n,t_n,U\rAngle,\delta(\beta)\bigr)\ \Big|\
        (\vec v,\beta)\in\angsem{\hat
        I}{\epsilon(\Pi_{\#i_1,\dots,\#i_m}(q))}\,\Bigr\}.
      \end{align*}
      In \cite{sen2026provsql}, the $f_l$ are commutative monoid
      operations, $\gamma$ carries no order, and the aggregate value is the
      semimodule element $\hat f_l\bigl(\mset{t_l(u)*\alpha \mid
      (u,\alpha)\in U}\bigr)$, where ``$*$'' denotes the tensor product
      combining data values of $\mathcal{V}$ with $\delta$-semiring
      annotations from $\mathbb{K}$ in a semimodule
      $\mathcal{V}_{\mathbb{K}}$, and $\hat f_l$ is $f_l$ lifted to
      $\mathcal{V}_\mathbb{K}$; see~\cite{amsterdamer2011provenance} for
      details about provenance semimodules.
  \end{description}
\end{toappendix}

\section{A Possible-World Semantics for Aggregate Comparisons}
\label{sec:semantics}

In this section we give a possible-world semantics for selection
predicates comparing aggregate values, in arbitrary commutative
m-semirings. Unlike~\cite{amsterdamer2011provenance}, where aggregates
are commutative monoid operations lifted to a semimodule, which excludes
non-associative operations such as \sql{AVG} and non-commutative ones
such as \fakesql{ARRAY\_AGG}, our semantics does not depend on the
semimodule structure and accepts any function on \emph{sequences} of
values.

We fix a commutative m-semiring
$(\mathbb{K},\oplus,\otimes,\ominus,\mathbb{0},\mathbb{1})$, as discussed in the preliminaries, along with a data domain $\mathcal{V}$.
For any set~$S$, we write $\Seq(S)$ the set of finite sequences over~$S$.
We define aggregate functions and comparison operators:
\begin{definition}
An \emph{aggregate function} is a function  $f:\Seq(\mathcal{V})\to\mathcal{V}$.
  For any comparison operator ${\op}\in\{<,\leq,=,\geq,>,\neq\}$ defined
  over $\mathcal{V}$, i.e., $\op: \mathcal{V}^2\rightarrow
  \{\bot,\top\}$, we denote
\(
  \chi_{\op}:\mathcal{V}\times\mathcal{V}\to \mathbb{K}
  \) as the mapping that sends $(a,b)$ to $\mathbb{1}$ if $a\op b$, and to
  $\mathbb{0}$ otherwise.
\end{definition}

\begin{example}\label{ex:aggregates}
Aggregates such as \sql{SUM}, \sql{COUNT}, \sql{MIN} and \sql{MAX}
are \emph{commutative}: their value is unchanged under any reordering of
the input sequence, as for instance the \sql{SUM} aggregate
$f_\Sigma\bigl((x_1,\dots,x_n)\bigr)\defeq x_1+\dots+x_n$.
But
$\fakesql{PICKFIRST}\bigl((x_1,\dots,x_n)\bigr)\defeq x_1$, returning
the first value of its input, is \emph{non-commutative}: its result
depends on the order.
\end{example}

A \sql{HAVING} predicate cannot be evaluated on the annotations of a
group taken as a whole. The annotations can be combined by $\oplus$ and $\otimes$
into a single element of $\mathbb K$, but no element of $\mathbb K$
records how many occurrences are counted or which values are summed, so
there is nothing to which, e.g., $\text{\sql{COUNT(*)}}\geq 2$ could be applied. This
is why~\cite{amsterdamer2011provenance} leaves comparisons
uninterpreted, as formal expressions over semimodule elements. To give the predicate
a value in $\mathbb K$ instead, we split the group into its
\emph{possible worlds}. A world is a subsequence $W$ of the occurrences of
the group, on which the aggregate is an ordinary value and the comparison
an ordinary test; its annotation states that the occurrences of $W$ are
present and the others are not, the latter through the monus. The
predicate provenance is the $\oplus$-sum of the annotations of the worlds
satisfying the comparison.
We extend the semantics of the \emph{selection} operator accordingly, and
first define possible worlds of the sequence of occurrences of a group:
\begin{definition}\label{def:possible_world}
  Let $M$ be a $\mathbb K$-relation of arity~$k$ over $\mathcal V$,
  $(i_j)_{1\leq j \leq m}$ be distinct grouping indices, $\preceq$ a
  total order on $\mathcal{V}^k$, $\vec v\in\mathcal V^m$ a group key,
  and $U\defeq U^{\preceq}_{M,\vec v}$ the $\preceq$-ordered sequence of
  the occurrences of the group (Section~\ref{sec:preliminaries}).
  A \emph{possible world}~$W$ of $U$ is a subsequence
  of $U$, denoted $W\sqsubseteq U$. Its \emph{$\mathbb{K}$-annotation} is:
\(
\ann_U(W) \defeq
\Bigl( \bigotimes_{(u,\alpha)\in W} \alpha\Bigr)
 \otimes
\Bigl( \mathbb{1} \ominus \bigoplus_{(u,\alpha)\in U\setminus W}
  \alpha\Bigr)
\)
where $U\setminus W$ denotes the occurrences of $U$ not retained
in~$W$, and empty $\bigotimes$-products (resp.\ $\bigoplus$-sums) equal
$\mathbb{1}$ (resp.\ $\mathbb{0}$). A tuple occurring several times
in~$U$ contributes one factor per occurrence; the order of the
$\bigotimes$-product is immaterial, $\otimes$ being commutative.
Moreover, given a term $t$ and an aggregate function $f$, the
\emph{aggregate value} of $f$ over $t$ in~$W$ is
\(
\mathrm{agg}_{t,f}(W)\defeq f\bigl((t(u))_{(u,\alpha)\in W}\bigr),
\)
i.e., $f$ applied to the sequence of $t$-values of~$W$.
\end{definition}

We now give the semantics of \emph{selection} over comparison predicates.
A term of a query evaluates, on a tuple, to an element of the value
domain $\mathcal V_{\mathbb K}\supseteq\mathcal V$ of $\mathbb
K$-relations (Section~\ref{sec:preliminaries}): besides the
\emph{regular} values of $\mathcal V$, this domain contains the
\emph{non-regular} values produced by aggregation, i.e., aggregate values
represented as formal aggregations of sequences of pairs formed of a
value and a semiring element (the straightforward extension of the
semimodule values of~\cite{amsterdamer2011provenance}). For simplicity
of presentation we give
the semantics of a single comparison applied directly to a grouping; we
then discuss the
general case.

\begin{definition}
\label{def:semantics}
\lean{HavingSemantics}{Having.havingProv}
Let $\hat I$ be a $\mathbb K$-instance, $k\in\NN$, $q$ an arity-$k$
query, and let
$\psi$ be a selection predicate over arity~$k$. To $\psi$ we associate a
\emph{predicate provenance} $\predsem{\hat I}{q}{\psi}\colon\mathcal V_{\mathbb K}^k\to\mathbb K$.
An atomic comparison $t\op t'$ is \emph{regular} when both $t$ and $t'$
are of regular type, i.e., mention only regular columns, and is an
\emph{aggregate comparison} as soon as one of them mentions an aggregate
column.

If $\psi=(t\op t')$ is regular, then
\(\predsem{\hat I}{q}{\psi}(u)\defeq \chi_{\op}\bigl(t(u),t'(u)\bigr).\)
If $\psi=(t\op t')$ is an aggregate comparison, we
require $q$ to be a grouping query
$q=\gamma^{\preceq}_{i_1,\dots,i_m}[t_1{:}f_1,\dots,t_n{:}f_n](q')$ with
$q'$ an arity-$k'$ \emph{aggregate-free} query, i.e., one without
grouping and aggregation, whose order $\preceq$ is the one used for non-commutative
aggregates (and is irrelevant otherwise); its output column $m+l$ holds
the aggregate value of $f_l$ over $t_l$, for $1\le l\le n$. For an output tuple
$u=(\vec v,b_1,\dots,b_n)$ with group key $\vec v=(v_1,\dots,v_m)$, let
$U\defeq U^{\preceq}_{\angsem{\hat I}{q'},\,\vec v}$. For a possible
world $W\sqsubseteq U$, the \emph{value in~$W$} of a side
$s\in\{t,t'\}$, $\mathrm{val}_s(W)$, is $s$ evaluated on $u$ after
replacing each aggregate value $b_l$ by $\mathrm{agg}_{t_l,f_l}(W)$; in
particular, $\mathrm{val}_s(W)=\mathrm{agg}_{t_l,f_l}(W)$ if $s$ is the
aggregate column $\#(m+l)$, and $\mathrm{val}_s(W)=s(u)$ if $s$ is
regular.
Then
\(
\predsem{\hat I}{q}{\psi}(u)\defeq
\bigoplus_{\emptyset\neq W\sqsubseteq U}
\ann_U(W)\otimes
\chi_{\op}\bigl(\mathrm{val}_t(W),\,\mathrm{val}_{t'}(W)\bigr).
\) We then define:
\[
\angsem{\hat I}{\sigma_\psi(q)}\defeq
\begin{cases}
\mset{(u,\ \alpha)\mid(u,\alpha)\in\angsem{\hat I}{q}, \psi(u)}
  & \text{if $\psi$ has no aggregate comparison,}\\[1ex]
\mset{(u,\ \predsem{\hat I}{q}{\psi}(u))\mid(u,\alpha)\in\angsem{\hat I}{q}}
  & \text{otherwise.}
\end{cases}
\]
\end{definition}
The first case is identical to the selection semantics
of~\cite{sen2026provsql}.
In the second case, the possible-world sum ranges over non-empty worlds only, so it
already enforces group existence (the role of the $\delta(\cdot)$-type
factor $\alpha$), making the extra $\otimes$ by $\alpha$ redundant by
the $\delta$-absorption axiom of Section~\ref{sec:preliminaries}. The
annotation~$\alpha$ is thus dropped. A
group for which no world satisfies $\psi$ receives the annotation
$\mathbb 0$, i.e., is absent from the result
(Section~\ref{sec:preliminaries}).

\begin{example}\label{ex:semantics}
Consider a group with key $\vec v$ with pre-aggregation occurrences
$U=\bigl((u_1,x_1),(u_2,x_2),$ $(u_3,x_3)\bigr)$, annotated in the
Boolean-function semiring $\mathbb B[X]$ by distinct variables
$x_1,x_2,x_3$; the grouping outputs for this group the tuple
$u=(\vec v,3)$, on which the predicate is evaluated. Take the predicate
$\psi=\bigl(\text{\sql{COUNT(*)}}\ge 2\bigr)$, for which the aggregate value on a
world $W$ is $|W|$. The non-empty
worlds $W\sqsubseteq U$ with $|W|\ge 2$ are the three pairs and the full
triple (the others contribute $\chi_{\ge}=\mathbb 0$); for each, $\ann_U(W)$
multiplies the annotations of the kept occurrences by the factor
$\mathbb 1\ominus(\cdot)$ over the discarded ones:
\begin{align*}
  \ann_U(\{1,2\})&=x_1\land x_2\land\lnot x_3, &
  \ann_U(\{1,3\})&=x_1\land x_3\land\lnot x_2,\\
  \ann_U(\{2,3\})&=x_2\land x_3\land\lnot x_1, &
  \ann_U(\{1,2,3\})&=x_1\land x_2\land x_3.
\end{align*}
Their $\oplus$-sum is
$\predsem{\hat I}{q}{\psi}(u)=
(x_1\land x_2)\lor(x_1\land x_3)\lor(x_2\land x_3)$,
the Boolean function ``at least two of $x_1,x_2,x_3$ hold'', as one would
expect of $\text{\sql{COUNT(*)}}\ge 2$. Notice that the negative factors
$\lnot x_i$ have cancelled out: since $\mathbb B[X]$ is absorptive, the
provenance is already captured by the \emph{minimal} satisfying worlds
(here the three pairs), with no use of $\ominus$. This is precisely the
property exploited by the results of Section~\ref{sec:results} and the
polynomial-time algorithms of Section~\ref{sec:algorithms}; in a
non-absorptive semiring the discarded-occurrence factors do not cancel
and this simplification fails. In~\cite{amsterdamer2011provenance}, the
same comparison is the formal element
$[x_1*1+x_2*1+x_3*1\geq\mathbb 1*2]$ of a semiring extended with such
comparison expressions, $*$ being the tensor of
Section~\ref{sec:preliminaries}; it resolves to $\mathbb 0$ or
$\mathbb 1$ only when the semimodule $\mathbb K\otimes M$ is isomorphic to
the aggregation monoid~$M$ (e.g., for $\mathbb N$ and \sql{COUNT}), and
remains a symbol otherwise, in particular in $\mathbb B[X]$.
\end{example}

\inlinepar{Boolean combinations.}
\sql{COUNT(*) >= 2 AND SUM(a) < 10} is a \sql{HAVING} clause too; its
connectives are interpreted by the operations used for join, union and
difference, which in $\mathbb B[X]$ are conjunction, disjunction and
negation. For selection predicates
$\psi_1,\psi_2$ and $u\in\mathcal V_{\mathbb K}^k$,
\[
\predsem{\hat I}{q}{\psi_1\land\psi_2}(u)\defeq
  \predsem{\hat I}{q}{\psi_1}(u)\otimes\predsem{\hat I}{q}{\psi_2}(u);
\qquad
\predsem{\hat I}{q}{\psi_1\lor\psi_2}(u)\defeq
  \predsem{\hat I}{q}{\psi_1}(u)\oplus\predsem{\hat I}{q}{\psi_2}(u).
\]
Negation is pushed to the atoms: $\neg$ traverses $\land$ and $\lor$ by
De Morgan duality, and on an atomic comparison
$\neg(t\op t')\defeq(t\,\overline{\op}\,t')$, where $\overline{\op}$
is the complementary operator ($\overline{\le}$ is~$>$).

\inlinepar{Selections not immediately following aggregation.}
Definition~\ref{def:semantics} covers a comparison applied directly to a
grouping, which is the typical \sql{HAVING}. More generally, a non-regular
aggregate value produced by some $\gamma^{\preceq}$ may be carried, as an
element of $\mathcal V_{\mathbb K}$, through further operators (projection,
join, union, additional selections) before being compared;
our implementation materializes the aggregate value as a gate in
the provenance circuit and evaluates the comparison downstream. The
intended semantics is unchanged: the possible worlds relevant to such a
comparison are still those of the group of the originating
$\gamma^{\preceq}$. We give in the appendix a general semantics for
arbitrary \sql{HAVING} queries that extends
Definition~\ref{def:semantics}, along with the query rewriting
on provenance circuits.

\begin{toappendix}
\label{appendix:semantics}
This appendix defines the general semantics announced in
Section~\ref{sec:semantics}, shows that it coincides with
Definition~\ref{def:semantics} on terminal comparisons, and gives the
corresponding rewriting on provenance circuits.
Three ingredients make the general semantics precise: aggregate values
carry the occurrences of the group that produced them, so that a
comparison performed arbitrarily far downstream can still range over that
group's possible worlds; the columns holding them are distinguished from
regular ones by a \emph{kind}, so that the scope restrictions become
typing conditions rather than side conditions on queries; and the
group-existence factor $\delta(\cdot)$ of a group is kept \emph{pending}
in the annotation until a downstream comparison supersedes it, as in the
second case of Definition~\ref{def:semantics}.

\begin{definition}[General semantics]\label{def:general}
\lean{AggQuery}{AggQuery.evaluate}
Every output column of a query has a \emph{kind}: \emph{regular} (a value
of~$\mathcal V$),
\emph{aggregate} (a token, see below), or \emph{provenance} (a value used
as an annotation), mirroring the regular, aggregate-token and \sql{uuid}
column types of ProvSQL. Queries are indexed by the vector of kinds of
their columns, so that the scope restrictions
are typing conditions: $\gamma^{\preceq}$ groups and aggregates an
all-regular input, $\varepsilon$ and $-$ apply to all-regular relations,
projection copies aggregate columns verbatim, and a selection atom is
either a comparison between terms over regular columns, or a comparison
of one bare aggregate column against a term over regular columns. This
is the form we formalize; an atom whose sides are terms over several
aggregate columns of the same grouping, as allowed by
Definition~\ref{def:semantics}, is evaluated by the same possible-world
sum, all its tokens carrying the same occurrence sequence~$U$.

An \emph{aggregate token} is a pair
$\lAngle f, U\rAngle$ of an aggregate function and the $\preceq$-ordered
occurrence sequence of the group that produced it, each occurrence
recording the value $t(u)$ of the aggregated term together with its
annotation; its value in a world $W\sqsubseteq U$ is
$\mathrm{agg}_{t,f}(W)$.

The annotation of a tuple is kept in \emph{factored} form
$\lAngle\beta,P\rAngle$, with $\beta\in\mathbb K$
and $P$ a multiset of occurrence sequences, one for each group whose
tokens have not been compared yet. It denotes the element
\(
  \beta\otimes\bigotimes_{U\in P}
    \delta\bigl(\bigoplus_{(u,\alpha)\in U}\alpha\bigr)
\)
of $\mathbb K$. Operators act as follows:
\begin{itemize}
\item $\gamma^{\preceq}_{i_1,\dots,i_m}[t_1{:}f_1,\dots,t_n{:}f_n]$
  produces one tuple per group, carrying the group key followed by the
  tokens $\lAngle f_l,U\rAngle$, annotated by $\lAngle\mathbb 1,\{U\}\rAngle$:
  as long as it is not compared, a group is annotated by
  $\delta(\bigoplus_{(u,\alpha)\in U}\alpha)$, which is the terminal-aggregation
  semantics of~\cite{sen2026provsql}.
\item A selection $\sigma_\psi$ whose predicate has no aggregate atom
  filters classically. Otherwise it multiplies the predicate provenance
  $\predsem{\hat I}{q}{\psi}$ into~$\beta$, an aggregate atom
  contributing the possible-world sum of Definition~\ref{def:semantics}
  over the occurrence sequence of \emph{its own} token, and Boolean
  connectives being
  interpreted as above. It removes an
  occurrence sequence $U$ from~$P$ exactly when every token compared
  by~$\psi$ has occurrence sequence~$U$ and $\psi$ \emph{entails the
  existence} of the group: the
  possible-world sum then already ranges over the non-empty worlds
  of~$U$, so it subsumes the factor $\delta(\bigoplus_{(u,\alpha)\in
  U}\alpha)$, which the $\delta$-absorption axiom of Section~\ref{sec:preliminaries}
  makes redundant, and which would otherwise be counted twice in a
  non-idempotent m-semiring. A predicate comparing tokens of several
  groups, or one that can hold in a world where the group is empty --
  an aggregate atom disjoined with a regular atom -- entails no such
  existence and supersedes nothing.
\item The other operators combine the concrete parts~$\beta$ exactly as
  in~\cite{sen2026provsql} and concatenate the pending sequences; a
  projection dropping the last copy of an aggregate column multiplies the
  corresponding pending factor into~$\beta$, that group being no longer
  comparable.
\end{itemize}
\end{definition}

We call \emph{site} a selection with an aggregate comparison applied
directly to a grouping, $\sigma_\psi(\gamma^{\preceq}(q'))$; a comparison
occurring further downstream has the same predicate provenance as the
site formed by that comparison and its originating grouping.
Definition~\ref{def:general} is a conservative extension of
Definition~\ref{def:semantics}, in that the two agree on every site.
\begin{proposition}\label{prop:general-agrees}
  \lean{AggQueryBridges}{AggQuery.havingSite_evaluateAnnotated}
  Let $\psi$ be a single aggregate comparison and
  \[
    Q=\sigma_\psi\bigl(
        \gamma^{\preceq}_{i_1,\dots,i_m}[t_1{:}f_1,\dots,t_n{:}f_n](q')
      \bigr).
  \]
  Then, under Definition~\ref{def:general}, $\angsem{\hat I}{Q}$ consists
  of one tuple per group key of $\angsem{\hat I}{q'}$, carrying the key
  followed by the whole-group aggregate values, annotated by
  $\predsem{\hat I}{\gamma^{\preceq}(q')}{\psi}$ as given by
  Definition~\ref{def:semantics}.
\end{proposition}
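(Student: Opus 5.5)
The proof unfolds Definition~\ref{def:general} on the two operators of $Q$ and compares the result with Definition~\ref{def:semantics}, group by group. Since $q'$ is aggregate-free, all its columns are regular and every annotation it produces is in factored form $\lAngle\beta,\emptyset\rAngle$, denoting simply $\beta$. Under Definition~\ref{def:general}, the evaluation of $q'$ therefore coincides with $\angsem{\hat I}{q'}$ of \cite{sen2026provsql}. I would state this first as an auxiliary lemma, proved by a straightforward induction on aggregate-free queries, because every operator other than $\gamma$ and aggregate selection treats $\beta$ exactly as \cite{sen2026provsql} does.

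Next, apply the grouping clause. For each group key $\vec v$ of $\angsem{\hat I}{q'}$, with $U\defeq U^{\preceq}_{\angsem{\hat I}{q'},\vec v}$, the grouping produces exactly one tuple. This tuple carries $\vec v$ followed by the tokens $\lAngle f_l,U\rAngle$, annotated by $\lAngle\mathbb 1,\{U\}\rAngle$. Each token is the whole-group aggregate value: its value in the full world is $\mathrm{agg}_{t_l,f_l}(U)$, and it represents the same formal object $\lAngle f_l,t_l,U\rAngle$ as in Section~\ref{sec:preliminaries}. The group keys are those of $\epsilon(\Pi_{\#i_1,\dots,\#i_m}(q'))$, so we obtain one tuple per key.

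Then apply the selection clause. Because $\psi$ is a single aggregate comparison, it multiplies into $\beta=\mathbb 1$ the possible-world sum over the occurrence sequence of its token. All tokens of the tuple share the same $U$, so this sum is exactly $\bigoplus_{\emptyset\neq W\sqsubseteq U}\ann_U(W)\otimes\chi_{\op}(\mathrm{val}_t(W),\mathrm{val}_{t'}(W))=\predsem{\hat I}{\gamma^{\preceq}(q')}{\psi}(u)$. We then check the removal condition: every token compared has sequence $U$, and $\psi$ entails the existence of the group, since the sum ranges over non-empty worlds only and there is no regular disjunct. Hence $U$ is removed from $P$, and the resulting annotation $\lAngle\predsem{}{}{\psi}(u),\emptyset\rAngle$ denotes $\mathbb 1\otimes\predsem{}{}{\psi}(u)$. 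This is precisely the second case of Definition~\ref{def:semantics}, which drops $\alpha=\delta(\cdot)$.

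The main obstacle is the bookkeeping rather than any algebra. One must check that the removal condition is indeed triggered for a single aggregate atom. One must also check that the value of a side $s$ in $W$ (substituting tokens by $\mathrm{agg}_{t_l,f_l}(W)$) coincides with $\mathrm{val}_s(W)$, including the regular side $s(u)$. Finally, the identification of tuples annotated by $\mathbb 0$ must be handled consistently, so that groups with zero predicate provenance match on both sides.
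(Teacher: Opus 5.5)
Your proposal is correct and follows the paper's own argument: the single aggregate atom compares a token whose occurrence sequence is exactly the pending $\{U\}$ introduced by $\gamma^{\preceq}$, and it entails the existence of its group. The pending $\delta$-factor is therefore superseded, so the annotation is the predicate provenance alone, and reading the aggregate columns at the whole-group world gives the aggregate values of Definition~\ref{def:semantics}. Your additional bookkeeping (the auxiliary lemma that the aggregate-free $q'$ carries empty pending multisets, the check on $\mathrm{val}_s(W)$, and the handling of $\mathbb 0$-annotated groups) only makes explicit what the paper's two-sentence proof leaves implicit.
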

\begin{proof}
  A single aggregate atom compares one token, whose occurrence sequence
  is that of the pending factor $\{U\}$ introduced by $\gamma^{\preceq}$,
  and an aggregate atom entails the existence of its group; the factor is
  therefore superseded and the annotation is the predicate provenance
  alone. The aggregate columns are read at the whole-group world, which
  gives the aggregate values of Definition~\ref{def:semantics}.
\end{proof}

Consequently every result of Section~\ref{sec:results}, stated for a
comparison applied to a grouping, applies verbatim to the corresponding
site of an arbitrary query.

\paragraph*{Materializing a site in the provenance circuit.} The
implementation does not compute the possible-world sum eagerly. Instead, it
rewrites the query so that the aggregate columns hold tokens built by
ProvSQL's aggregate-token constructor and the provenance column of a
grouping holds its group-existence guard. A predicate $\psi$ with at
least one aggregate atom is then compiled into a
\emph{gate term}~$\llbracket\psi\rrbracket$ where an aggregate atom becomes a
comparison gate applied to the token and to the compared regular term, a
regular atom becomes an indicator gate $\chi_{\op}$, $\land$ becomes
$\otimes$, $\lor$ becomes $\oplus$, and $\neg$ is pushed to the atoms
with complementation of the comparison operator. The provenance column of
the rewritten site is then
$\llbracket\psi\rrbracket$ alone when $\psi$ entails the existence of the
group, and $\llbracket\psi\rrbracket$ times the guard otherwise -- the
same dichotomy as in Definition~\ref{def:general}, now realized on gates.
\begin{proposition}\label{prop:site-rewriting}
  \lean{AggQueryClosure}{AggQuery.havingPredRew_valid}
  For an arbitrary selection predicate with at least one aggregate atom,
  regular atoms mixed in included, the rewritten site computes the
  annotation that Definition~\ref{def:general} assigns to
  $\sigma_\psi(\gamma^{\preceq}(q'))$, relative to the comparison and
  indicator gates.
\end{proposition}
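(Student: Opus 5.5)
The plan is a structural induction on the predicate $\psi$, carried out on the negation-normal form described in the paragraph on Boolean combinations. I compare, at each node, the gate term $\llbracket\psi\rrbracket$ with the predicate provenance $\predsem{\hat I}{\gamma^{\preceq}(q')}{\psi}$. Throughout, the comparison and indicator gates are interpreted by their intended meanings: a comparison gate on a token $\lAngle f_l,U\rAngle$ and a regular term $s$ evaluates to the possible-world sum of Definition~\ref{def:semantics}, and an indicator gate evaluates to $\chi_{\op}$.

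\textbf{Atoms.} The induction starts from the atoms, where both sides agree by construction. For an aggregate atom, the gate evaluates exactly to $\bigoplus_{\emptyset\neq W\sqsubseteq U}\ann_U(W)\otimes\chi_{\op}(\cdot,\cdot)$. The token built by the aggregate-token constructor carries the same $\preceq$-ordered occurrence sequence $U$ that Definition~\ref{def:general} uses. For a regular atom, the indicator gate gives $\chi_{\op}(t(u),t'(u))$, which is the regular case of Definition~\ref{def:semantics}.

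\textbf{Connectives.} The inductive step for $\land$ and $\lor$ is immediate, because the rewriting sends them to $\otimes$ and $\oplus$, exactly as the Boolean-combination clauses of the semantics do. Negation needs one lemma: pushing $\neg$ through De Morgan and complementing operators gives the same negation-normal form in the rewriting as in the semantics. Both use the identical syntactic transformation, so this is a syntactic induction on $\psi$ and requires no algebraic identity in $\mathbb K$. This is fortunate, since De Morgan fails for $\otimes/\oplus$ in general.

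\textbf{Guard (main obstacle).} What remains is the treatment of the existence factor $\delta(\bigoplus_{(u,\alpha)\in U}\alpha)$. Definition~\ref{def:general} produces the annotation $\lAngle\beta,P\rAngle$ with $P=\{U\}$ from $\gamma^{\preceq}$. The selection multiplies $\predsem{}{}{\psi}$ into $\beta=\mathbb 1$ and removes $U$ from $P$ exactly when $\psi$ entails group existence. The rewriting emits $\llbracket\psi\rrbracket$ alone or $\llbracket\psi\rrbracket\otimes\text{guard}$ under the same condition, so I must show that the two use the same decision procedure for ``entails existence''. My first attempt would be to fix a syntactic criterion shared by both, defined recursively: an aggregate atom entails existence; a conjunction does if either conjunct does; a disjunction does if both disjuncts do; a regular atom does not. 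At a site every token has the same $U$, so the ``single group'' condition holds automatically. If the formalization instead uses a semantic entailment notion, I would prove that the syntactic criterion is sound and coincides with it on sites, using that in the empty world every aggregate atom contributes nothing, since the sum is over non-empty $W$.

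\textbf{Conclusion.} With the decisions matched, the denotation of $\lAngle\mathbb 1\otimes\predsem{}{}{\psi},P'\rAngle$ is either $\predsem{}{}{\psi}$ or $\predsem{}{}{\psi}\otimes\delta(\cdot)$, which equals the rewritten provenance column in both cases. No $\delta$-absorption is needed for this step: it only justified the design of the semantics, and the claim here is pure agreement of two constructions.
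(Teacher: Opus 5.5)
Your proposal is correct and matches the paper's own argument, which the paper delegates to Lean and describes only as ``the same dichotomy as in Definition~\ref{def:general}, now realized on gates''. In both, $\llbracket\psi\rrbracket$ is matched with the predicate provenance connective by connective, with negation pushed to the atoms identically on both sides. The guard is then kept or dropped by one and the same syntactic existence test: aggregate atoms entail existence, a conjunction does if one conjunct does, a disjunction does if both do, and a regular atom never does, as spelled out in the proof of Proposition~\ref{prop:hom}. The agreement is therefore structural and needs no $\delta$-absorption.

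One correction to your fallback: the syntactic test does not coincide with semantic entailment on sites. It is sound but incomplete. For instance, an aggregate atom disjoined with a contradictory regular conjunction such as $(t=t')\land(t\neq t')$ semantically entails existence, yet the syntactic test keeps the guard. This is moot here because both constructions use the syntactic test; otherwise agreement would rest on $\delta$-absorption.
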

\noindent
Correctness relative to the two gate primitives is exactly the sense in
which a rewriting-based system such as ProvSQL can be correct: the gates
are the primitives that the evaluation of the circuit, and hence
Section~\ref{sec:algorithms}, has to implement.
\end{toappendix}

\inlinepar{Limitations.} Grouping and ordering only make sense for
regular values from $\mathcal{V}$, not for aggregate values; queries
aggregating or grouping over aggregate values are thus rejected.
Extending the semantics to support such queries is non-trivial,
as order interacts in complex ways with uncertainty and provenance
\cite{DBLP:journals/tcs/AmarilliBDS19}.
The definitions also assume at least one grouping index ($m\geq 1$).
For \emph{scalar} aggregation (no \sql{GROUP BY}), SQL forms a single
group that exists even on empty input, so that the possible-world sum
should range over all worlds, the empty one included, and the
group-existence factor is~$\mathbb 1$; our implementation follows this
convention, which the formal treatment of this paper leaves aside.

\section{Main Results}
\label{sec:results}
A \emph{simple} \sql{HAVING} query counts the occurrences of each group
and keeps the groups whose count satisfies a comparison with a constant:
$\sigma_{\#2\op C}(\gamma^{\preceq}_{\#1}[1{:}+](q))$, \sql{COUNT(*)} being
the sum of the constant~$1$ over the group, for an arbitrary
order~$\preceq$. Such a query needs no aggregation at all:
$\text{\sql{COUNT(*)}}\geq C$ holds if and only if $C$ distinct occurrences of
the group exist, which a $C$-fold self-join of $q$ on the group key
expresses, followed by duplicate elimination, and $\leq$ and $=$ follow
by difference. The provenance of this rewriting is fixed by the semantics
of Section~\ref{sec:preliminaries}
\cite{green2007provenance,geerts2010database,sen2026provsql}, and it is
\ifarxiv
how we had to express such queries in earlier work~\cite{yunus2025using}.
\else
how \cite{yunus2025using} had to express such queries.
\fi We show that our
semantics agrees with it exactly when the m-semiring is absorptive and
$\otimes$ distributes over~$\ominus$. We state the result for a single
grouping column and an occurrence identifier as second column; several
grouping columns are handled identically.
\begin{toappendix}
\subsection{Proof of Theorem~\ref{th:correctness}}
The proof of Theorem~\ref{th:correctness} below rests on the following
lemmas; $A_W$, $T_U(W)$, $F_C(U)$ and $S_C(U)$ are as defined in that
proof.
\begin{lemma}\label{lem:upward-expansion}
\lean{Having}{Having.upward_expansion}
Let $(\mathbb K,\oplus,\otimes,\ominus,\mathbb 0,\mathbb 1)$ be a
commutative idempotent m-semiring,~$U$ a finite set, and
$(\alpha_w)_{w\in U}$ a family of elements of $\mathbb K$. For
$W\subseteq U$ define~$A_W \defeq\nobreak \bigotimes_{w\in W}\alpha_w$
and~$T_U(W) \defeq A_W \ominus \bigoplus_{w\in U\setminus W} A_{W\cup\{w\}}$.
Then $A_V \;\le\; \bigoplus_{\substack{W:\;V\subseteq W\subseteq U}} T_U(W)$
for every~$V\subseteq\nobreak U$.
\end{lemma}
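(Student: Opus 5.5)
We argue by downward induction on $|U\setminus V|$, i.e.\ by strong induction from $V=U$ downwards. Write $S(V)\defeq\bigoplus_{V\subseteq W\subseteq U}T_U(W)$.

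\textbf{Base case.} If $V=U$, the sum defining $T_U(U)$ is empty, so $T_U(U)=A_U\ominus\mathbb 0=A_U$. Hence $A_U\le S(U)=A_U$ holds trivially.

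\textbf{Inductive step.} Let $V\subsetneq U$ and assume $A_{V'}\le S(V')$ for every $V'\supsetneq V$.

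First, by Lemma~\ref{lem:monus-basic}(ii),
\[
A_V\le \Bigl(\bigoplus_{w\in U\setminus V}A_{V\cup\{w\}}\Bigr)\oplus T_U(V).
\]

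Next, for each $w\in U\setminus V$ the induction hypothesis gives $A_{V\cup\{w\}}\le S(V\cup\{w\})$. Every $W$ occurring in $S(V\cup\{w\})$ satisfies $V\subseteq W$, so it also occurs in $S(V)$.

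Since $\mathbb K$ is idempotent, $\oplus$ is the join of the natural order, as shown in the proof of Proposition~\ref{prop:idempotent_iff_add_monus}. So a sum of terms each occurring in $S(V)$ is $\le S(V)$. Repeated summands collapse, and each individual summand $x$ satisfies $x\le S(V)$ because $S(V)=x\oplus\text{rest}$. Concretely, we get $S(V\cup\{w\})\le S(V)$ for each $w$, and $T_U(V)\le S(V)$.

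Combining these, using monotonicity of $\oplus$, transitivity, and the join property:
\[
A_V\le\Bigl(\bigoplus_{w}S(V\cup\{w\})\Bigr)\oplus T_U(V)\le S(V)\oplus S(V)=S(V).
\]
This closes the induction.

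\textbf{Where the difficulty lies.} The induction itself is routine. The only real care needed is the order-theoretic bookkeeping in the idempotent setting:
\begin{itemize}
\item antisymmetry of the natural order, which holds because $\mathbb K$ is an m-semiring;
\item $x\le y\iff x\oplus y=y$;
\item monotonicity of $\oplus$.
\end{itemize}
These let us replace the possibly overlapping union of the sums $S(V\cup\{w\})$ by $S(V)$ without overcounting. Idempotence is exactly what makes this overlap harmless.

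Commutativity of $\otimes$ is used only to make $A_W$ well defined over a set. No distributivity of $\otimes$ over $\ominus$ is needed.
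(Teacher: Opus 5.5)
Your proof is correct and follows the paper's own argument essentially step for step: induction on $|U\setminus V|$ with base case $T_U(U)=A_U\ominus\mathbb 0=A_U$, and an inductive step that applies Lemma~\ref{lem:monus-basic}(ii) to $A_V$ and $\bigoplus_{w\in U\setminus V}A_{V\cup\{w\}}$, then the induction hypothesis on each $V\cup\{w\}$, with idempotence of $\oplus$ absorbing the overlapping sub-sums into $\bigoplus_{V\subseteq W\subseteq U}T_U(W)$. Your appeal to antisymmetry is harmless but unnecessary, since the join property of $\oplus$ that you use follows from idempotence alone.
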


\begin{proof}
From the properties of m-semirings $\ominus$ is the (right) residual of $\oplus$, i.e., for all $x,y,z\in \mathbb K$,
\begin{equation}\label{eq:residuation}
  x\ominus y \le z \quad\Longleftrightarrow\quad x \le y\oplus z .
\end{equation}

We prove
\begin{equation}\label{eq:upward-expansion}
  A_V \;\le\; \bigoplus_{\substack{W:\;V\subseteq W\subseteq U}} T_U(W).
\end{equation}
by induction on $d\defeq|U\setminus V|$.

\smallskip\noindent
\textbf{Base case ($d=0$).}
Then $V=U$. Since $U\setminus U=\varnothing$ and $\bigoplus_{\varnothing}(\cdot)=\mathbb 0$,
\[
  T_U(U)=A_U\ominus \mathbb 0.
\]
We claim $A_U\ominus\mathbb 0=A_U$. Indeed, by \eqref{eq:residuation}, for any $z\in \mathbb K$,
\[
  A_U\ominus\mathbb 0 \le z \iff A_U \le \mathbb 0\oplus z = z,
\]
so $A_U\ominus\mathbb 0$ is the greatest element below $A_U$, hence equals $A_U$.
Therefore,
\[
  \bigoplus_{\substack{W:\;U\subseteq W\subseteq U}} T_U(W)=T_U(U)=A_U=A_V,
\]
and \eqref{eq:upward-expansion} holds.

\smallskip\noindent
\textbf{Inductive step.}
Assume $d>0$ and that \eqref{eq:upward-expansion} holds for all pairs $(U,V')$
with $V\subsetneq V'\subseteq U$ and $|U\setminus V'|<d$.
Let
\[
  Y \defeq \bigoplus_{w\in U\setminus V} A_{V\cup\{w\}} \in \mathbb K.
\]
From Lemma~\ref{lem:monus-basic}(ii) we obtain the standard inequality
\begin{equation}\label{eq:M2-lemma1}
  X \le Y \oplus (X\ominus Y)\qquad\text{for all }X,Y\in \mathbb K,
\end{equation}
by instantiating \eqref{eq:residuation} with $z=X\ominus Y$ (since $X\ominus Y\le X\ominus Y$).
Applying \eqref{eq:M2-lemma1} to $X=A_V$ gives
\[
  A_V \le Y \oplus (A_V\ominus Y).
\]
By definition of $T_U(V)$ we have $A_V\ominus Y = T_U(V)$, hence
\begin{equation}\label{eq:AV-bound}
  A_V \le T_U(V)\ \oplus\ \bigoplus_{w\in U\setminus V} A_{V\cup\{w\}}.
\end{equation}

For each $w\in U\setminus V$ set $V_w\defeq V\cup\{w\}$. Then $|U\setminus V_w|=d-1$,
so the induction hypothesis yields
\[
  A_{V_w}\le \bigoplus_{\substack{W:\;V_w\subseteq W\subseteq U}} T_U(W).
\]
Taking the $\oplus$-sum over all $w\in U\setminus V$ and using monotonicity of $\oplus$,
\[
  \bigoplus_{w\in U\setminus V} A_{V\cup\{w\}}
  =
  \bigoplus_{w\in U\setminus V} A_{V_w}
  \le
  \bigoplus_{w\in U\setminus V}\ \bigoplus_{\substack{W:\;V_w\subseteq W\subseteq U}} T_U(W).
\]
Every $W$ appearing on the right satisfies $V\subseteq V_w\subseteq W\subseteq U$, hence $V\subseteq W\subseteq U$.
Therefore, using associativity/commutativity/idempotence of $\oplus$ to ignore duplicates and reorder summands,
\[
  \bigoplus_{w\in U\setminus V}\ \bigoplus_{\substack{W:\;V_w\subseteq W\subseteq U}} T_U(W)
  \le
  \bigoplus_{\substack{W:\;V\subseteq W\subseteq U}} T_U(W).
\]
Combining with \eqref{eq:AV-bound} yields
\[
  A_V \le T_U(V)\ \oplus\ \bigoplus_{\substack{W:\;V\subseteq W\subseteq U}} T_U(W)
  =
  \bigoplus_{\substack{W:\;V\subseteq W\subseteq U}} T_U(W),
\]
since the right-hand $\oplus$-sum already includes the summand $T_U(V)$ (take $W=V$).
This proves \eqref{eq:upward-expansion} for $V$ and completes the induction.
\end{proof}
With Lemma~\ref{lem:upward-expansion} in hand, we now establish matching
recurrences for $F_C(U)$ and $S_C(U)$.
\begin{lemma}[Recurrence for $F_C$]\label{lem:FC_recurrence}
\lean{Having}{Having.FC_recurrence}
Let $(\mathbb K,\oplus,\otimes,\ominus,\mathbb 0,\mathbb 1)$ be a
commutative idempotent m-semiring in which $\otimes$ distributes over
$\ominus$,~$U$ a non-empty finite set, and $(\alpha_w)_{w\in U}$ a family
of elements of~$\mathbb K$. For $W\subseteq U$ define~$A_W\defeq\bigotimes_{w\in W}\alpha_w$ and
$T_U(W)\defeq A_W\ominus \bigoplus_{w\in U\setminus W}A_{W\cup\{w\}}$, and
for $C\ge 0$ set
$F_C(U)\defeq\bigoplus_{\substack{W\subseteq U\\|W|\ge C}} T_U(W)$.
Then, for any $u\in U$ and $U'\defeq U\setminus\{u\}$,
$F_C(U)=F_C(U')\oplus\bigl(F_{C-1}(U')\otimes\alpha_u\bigr)$ for every $C\ge 1$.
\end{lemma}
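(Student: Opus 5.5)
The plan is to split the worlds $W\subseteq U$ according to whether they contain the distinguished element $u$, and to express each $T_U(W)$ in terms of $T_{U'}$. The $\oplus$-sum defining $F_C(U)$ then reorganizes into a part over worlds avoiding $u$ and a part over worlds containing $u$. The second part will match $F_{C-1}(U')\otimes\alpha_u$ exactly. The first part will match $F_C(U')$ only up to a monus correction, which Lemma~\ref{lem:upward-expansion} will absorb.

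\textbf{Step 1 (worlds containing $u$).} Write $W=W'\cup\{u\}$ with $W'\subseteq U'$. Then $A_W=\alpha_u\otimes A_{W'}$, and $U\setminus W=U'\setminus W'$. So $T_U(W)=(\alpha_u\otimes A_{W'})\ominus\bigoplus_{w\in U'\setminus W'}\alpha_u\otimes A_{W'\cup\{w\}}$. Factoring $\alpha_u$ out of the $\oplus$ (distributivity of $\otimes$ over $\oplus$) and then out of the $\ominus$ (the assumed distributivity over $\ominus$, with commutativity) gives $T_U(W)=\alpha_u\otimes T_{U'}(W')$. Since $|W|\ge C$ iff $|W'|\ge C-1$, summing and distributing yields exactly $\alpha_u\otimes F_{C-1}(U')$.

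\textbf{Step 2 (worlds avoiding $u$).} For $W\subseteq U'$, the subtracted sum in $T_U(W)$ is $A_{W\cup\{u\}}\oplus\bigoplus_{w\in U'\setminus W}A_{W\cup\{w\}}$. By the listed property $(a\ominus b)\ominus c=a\ominus(b\oplus c)$ and commutativity of $\oplus$, this gives $T_U(W)=T_{U'}(W)\ominus(\alpha_u\otimes A_W)$. Hence
\[
F_C(U)=\bigoplus_{W\subseteq U',\,|W|\ge C}\bigl(T_{U'}(W)\ominus(\alpha_u\otimes A_W)\bigr)\ \oplus\ \alpha_u\otimes F_{C-1}(U').
\]

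\textbf{Step 3 (removing the monus correction).} This is the main point. In an idempotent semiring, $\oplus$ is the join of the natural order (as shown in the proof of Proposition~\ref{prop:idempotent_iff_add_monus}). So it suffices to prove both inequalities and conclude by antisymmetry.

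For ``$\le$'', Lemma~\ref{lem:monus-basic}(i) gives $T_{U'}(W)\ominus(\alpha_u A_W)\le T_{U'}(W)$, and monotonicity of $\oplus$ concludes.

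For ``$\ge$'', fix a summand $T_{U'}(W)$ of $F_C(U')$. By Lemma~\ref{lem:monus-basic}(ii), $T_{U'}(W)\le(\alpha_u\otimes A_W)\oplus\bigl(T_{U'}(W)\ominus(\alpha_u\otimes A_W)\bigr)$. Lemma~\ref{lem:upward-expansion}, applied to $U'$ and $V=W$, gives $A_W\le\bigoplus_{W\subseteq W''\subseteq U'}T_{U'}(W'')$. Every such $W''$ has $|W''|\ge|W|\ge C\ge C-1$, so the right-hand side is $\le F_{C-1}(U')$ by idempotence. Multiplication by $\alpha_u$ is monotone for the natural order, so $\alpha_u\otimes A_W\le\alpha_u\otimes F_{C-1}(U')$. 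Therefore each summand of $F_C(U')$ lies below the right-hand side of the display in Step 2, and so does $\alpha_u\otimes F_{C-1}(U')$. Taking joins gives $F_C(U')\oplus(F_{C-1}(U')\otimes\alpha_u)\le F_C(U)$.

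The only delicate points are the edge cases. When $U'=\emptyset$, $A_\emptyset=\mathbb 1$ and Lemma~\ref{lem:upward-expansion} still applies. For $C=1$, $F_0(U')$ sums over all $W'$, and Step 1 is unaffected.
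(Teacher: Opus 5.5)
Your proof is correct and follows essentially the same route as the paper: the same split on whether $u\in W$, the same identities $T_U(W'\cup\{u\})=T_{U'}(W')\otimes\alpha_u$ and $T_U(W)=T_{U'}(W)\ominus(A_W\otimes\alpha_u)$, and the same two inequalities via Lemma~\ref{lem:monus-basic}(i)--(ii). The only minor difference is in bounding $A_W\otimes\alpha_u$. You apply Lemma~\ref{lem:upward-expansion} in $U'$ to $V=W$ and then multiply by $\alpha_u$, so that your Step~1 identity lands the bound in the $F_{C-1}(U')\otimes\alpha_u$ part. The paper instead applies the lemma in $U$ to $V=W\cup\{u\}$, bounding $A_{W\cup\{u\}}$ by $F_C(U)$ directly; both variants are valid.
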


\begin{proof}
  Fix $u\in U$ such that $U'=U\setminus\{u\}$ and a $C\in\mathbb N_{\ge 1}$.
Split  $F_C(U)$ according to whether $u\in W$:
\begin{equation}\label{eq:split-I-II}
F_C(U)
=
\underbrace{\bigoplus_{\substack{W\subseteq U'\\|W|\ge C}}T_U(W)}_{(I).\;u\notin W}
\ \oplus\
\underbrace{\bigoplus_{\substack{W'\subseteq U'\\|W'|\ge C-1}}T_U(W'\cup\{u\})}_{(II).\;u\in W}.
\end{equation}

\paragraph*{Simplify $(II)$: }
Fix $W'\subseteq U'$. Then $A_{W'\cup\{u\}}=A_{W'}\otimes\alpha_u$, and
$U\setminus(W'\cup\{u\})=U'\setminus W'$. Hence, using distributivity of $\otimes$ over $\oplus$,
\[
\bigoplus_{w\in U\setminus(W'\cup\{u\})}A_{W'\cup\{u,w\}}
=
\bigoplus_{w\in U'\setminus W'}\bigl(A_{W'\cup\{w\}}\otimes\alpha_u\bigr)
=
\Bigl(\bigoplus_{w\in U'\setminus W'}A_{W'\cup\{w\}}\Bigr)\otimes\alpha_u.
\]
Therefore, using distributivity of $\otimes$ over $\ominus$,
\begin{align*}
T_U(W'\cup\{u\})
&=
(A_{W'}\otimes\alpha_u)\ominus
\Bigl(\bigoplus_{w\in U'\setminus W'}A_{W'\cup\{w\}}\Bigr)\otimes\alpha_u\\
&=
\Bigl(A_{W'}\ominus\bigoplus_{w\in U'\setminus W'}A_{W'\cup\{w\}}\Bigr)\otimes\alpha_u
=
T_{U'}(W')\otimes\alpha_u.
\end{align*}
Taking $\oplus$ over all $W'\subseteq U'$ with $|W'|\ge C-1$ and factoring $\alpha_u$ using
distributivity of $\otimes$ over $\oplus$ yields
\begin{equation}\label{eq:II-simplifies}
(II)=\Bigl(\bigoplus_{\substack{W'\subseteq U'\\|W'|\ge C-1}}T_{U'}(W')\Bigr)\otimes\alpha_u
=F_{C-1}(U')\otimes\alpha_u.
\end{equation}

\paragraph*{Relate $T_U(W)$ and $T_{U'}(W)$ for $W\subseteq U'$: }
Fix $W\subseteq U'$. Since $U\setminus W=(U'\setminus W)\cup\{u\}$, we have
\[
\bigoplus_{w\in U\setminus W}A_{W\cup\{w\}}
=
\Bigl(\bigoplus_{w\in U'\setminus W}A_{W\cup\{w\}}\Bigr)\oplus (A_W\otimes\alpha_u).
\]
Using
$x\ominus(y\oplus z)=(x\ominus y)\ominus z$, we get
\begin{equation}\label{eq:TU-rewrite}
T_U(W)
=
A_W\ominus\Bigl(\bigoplus_{w\in U'\setminus W}A_{W\cup\{w\}}\oplus(A_W\otimes\alpha_u)\Bigr)
=
T_{U'}(W)\ominus(A_W\otimes\alpha_u).
\end{equation}

\paragraph*{Show $F_C(U)\le F_C(U')\oplus(F_{C-1}(U')\otimes\alpha_u)$: }
  By \eqref{eq:TU-rewrite} and the inequality $x\ominus y\le x$ (shown in Lemma~\ref{lem:monus-basic}(i)),
we have $T_U(W)\le T_{U'}(W)$ for all $W\subseteq U'$.
Taking $\oplus$ over all $W\subseteq U'$ with $|W|\ge C$ yields $(I)\le F_C(U')$.
Together with \eqref{eq:split-I-II} and \eqref{eq:II-simplifies}, this gives
\begin{equation}\label{eq:leq-direction}
F_C(U)\le F_C(U')\oplus(F_{C-1}(U')\otimes\alpha_u).
\end{equation}

\paragraph*{Show $F_C(U')\le F_C(U)$: }
Fix $W\subseteq U'$ with $|W|\ge C$.
By the inequality $x\le y\oplus(x\ominus y)$ (shown in Lemma~\ref{lem:monus-basic}(ii)) applied to
$x=T_{U'}(W)$ and $y=A_W\otimes\alpha_u$, and using \eqref{eq:TU-rewrite}, we get
\begin{equation}\label{eq:TUprime-bound}
T_{U'}(W)\le (A_W\otimes\alpha_u)\oplus T_U(W).
\end{equation}
Since $|W|\ge C$, the term $T_U(W)$ is one of the summands of $F_C(U)$, hence $T_U(W)\le F_C(U)$.
Moreover $A_W\otimes\alpha_u=A_{W\cup\{u\}}$, and $|W\cup\{u\}|\ge C$.
By Lemma~\ref{lem:upward-expansion} applied to $V=W\cup\{u\}$ we have
\[
A_{W\cup\{u\}}
\le
\bigoplus_{\substack{Y:\;W\cup\{u\}\subseteq Y\subseteq U}} T_U(Y)
\le
\bigoplus_{\substack{Y\subseteq U\\|Y|\ge C}}T_U(Y)
=
F_C(U),
\]
so $A_W\otimes\alpha_u\le F_C(U)$.
Plugging these two bounds into \eqref{eq:TUprime-bound} yields $T_{U'}(W)\le F_C(U)$.
Taking $\oplus$ over all $W\subseteq U'$ with $|W|\ge C$ gives $F_C(U')\le F_C(U)$.

\paragraph*{The recurrence: }
From \eqref{eq:leq-direction}, $F_C(U')\le F_C(U)$ and
$F_{C-1}(U')\otimes\alpha_u=(II)\le F_C(U)$, we obtain
$F_C(U')\oplus(F_{C-1}(U')\otimes\alpha_u)\le F_C(U)$.
Together with \eqref{eq:leq-direction}, we get
\[
F_C(U)=F_C(U')\oplus(F_{C-1}(U')\otimes\alpha_u),
\]
as required.
\end{proof}
\begin{lemma}[Recurrence for $S_C$]\label{lem:SC_recurrence}
\lean{Having}{Having.SC_recurrence}
Let $(\mathbb K,\oplus,\otimes,\ominus,\mathbb 0,\mathbb 1)$ be a
commutative m-semiring and $(\alpha_w)_{w\in U}$ a family of elements of
$\mathbb K$ indexed by a non-empty finite set~$U$. For
$W\subseteq U$ define~$A_W\defeq\nobreak\bigotimes_{w\in W}\alpha_w$, and for
$C\ge 0$ set $S_C(U)\defeq\bigoplus_{\substack{W\subseteq U\\|W|=C}} A_W$.
Then, for any $u\in U$ and $U'\defeq U\setminus\{u\}$,
$S_C(U)=S_C(U')\oplus\bigl(S_{C-1}(U')\otimes\alpha_u\bigr)$ for every $C\ge 1$.
\end{lemma}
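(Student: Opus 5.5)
The plan is to split the defining sum of $S_C(U)$ according to whether the distinguished element $u$ belongs to the subset $W$. This mirrors the decomposition \eqref{eq:split-I-II} used for $F_C$, but with no monus involved, so it needs only the commutative semiring axioms. Fix $u\in U$, $U'\defeq U\setminus\{u\}$ and $C\ge 1$. The subsets $W\subseteq U$ with $|W|=C$ fall into two disjoint classes:
\begin{itemize}
\item those with $u\notin W$, which are exactly the subsets $W\subseteq U'$ with $|W|=C$;
\item those with $u\in W$, which are in bijection with the subsets $W'\subseteq U'$ with $|W'|=C-1$ via $W=W'\cup\{u\}$.
\end{itemize}
Using associativity and commutativity of $\oplus$ to regroup the finite sum, we get
\[
S_C(U)=\bigoplus_{\substack{W\subseteq U'\\|W|=C}}A_W\ \oplus\ \bigoplus_{\substack{W'\subseteq U'\\|W'|=C-1}}A_{W'\cup\{u\}}.
\]
No idempotence is needed here, since the two classes are disjoint and each $W$ is counted exactly once.

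The first summand is by definition $S_C(U')$. For the second, I would use $A_{W'\cup\{u\}}=A_{W'}\otimes\alpha_u$. This holds because $u\notin W'$ and $\otimes$ is commutative and associative, so the order of the finite product is immaterial. Right distributivity of $\otimes$ over $\oplus$ then factors out $\alpha_u$, which yields $S_{C-1}(U')\otimes\alpha_u$.

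The only points needing care are edge cases. When $C-1>|U'|$ or $C>|U'|$, the corresponding sums are empty and equal $\mathbb 0$. The identity still holds because $\mathbb 0$ is neutral for $\oplus$ and annihilating for $\otimes$. When $C=1$, $S_0(U')=A_\emptyset=\mathbb 1$ (the empty product), which gives the expected $S_1(U)=S_1(U')\oplus\alpha_u$.

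The main obstacle, to the extent there is one, is handling the reindexing of the finite $\oplus$-sum cleanly as a sum over a disjoint union. In a formal setting I would do this through a filter or partition lemma on finsets of the powerset.
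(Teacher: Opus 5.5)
Your proposal is correct and follows essentially the same route as the paper's proof: the paper also splits the size-$C$ subsets of $U$ into the disjoint families with $u\notin W$ (identified with $S_C(U')$) and $u\in W$ (reindexed via the bijection $W'\mapsto W'\cup\{u\}$), then uses $A_{W'\cup\{u\}}=A_{W'}\otimes\alpha_u$ and distributivity to factor out $\alpha_u$. Your remarks on the edge cases ($S_0(U')=\mathbb 1$, empty sums equal to $\mathbb 0$) and on not needing idempotence or any monus property are accurate.
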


\begin{proof}
Fix $C\ge 1$ and $u\in U$, and write $U=U'\cup\{u\}$ with $U'=U\setminus\{u\}$.

Define two index families:
\[
\mathcal A\defeq\{\,W\subseteq U \mid |W|=C,\ u\notin W\,\},
\qquad
\mathcal B\defeq\{\,W\subseteq U \mid |W|=C,\ u\in W\,\}.
\]
Then $\mathcal A$ and $\mathcal B$ are disjoint and their union is the full family
\[
\mathcal A\ \dot\cup\ \mathcal B=\{\,W\subseteq U\mid |W|=C\,\},
\]
because every $W\subseteq U$ either contains $u$ or does not.

Therefore, by associativity and commutativity of $\oplus$ (finite sum over a disjoint
union equals the $\oplus$ of the two partial sums),
\begin{equation}\label{eq:split-SC}
S_C(U)
=
\bigoplus_{W\in\mathcal A} A_W\ \oplus\ \bigoplus_{W\in\mathcal B} A_W.
\end{equation}

\smallskip\noindent
\emph{(i) The $\mathcal A$-part.}
If $W\in\mathcal A$, then $u\notin W$ and $W\subseteq U'=U\setminus\{u\}$.
Conversely, if $W\subseteq U'$ and $|W|=C$, then $W\subseteq U$ and $u\notin W$.
Hence
\[
\mathcal A=\{\,W\subseteq U'\mid |W|=C\,\},
\]
so
\begin{equation}\label{eq:A-part}
\bigoplus_{W\in\mathcal A}A_W
=
\bigoplus_{\substack{W\subseteq U'\\|W|=C}}A_W
=
S_C(U').
\end{equation}

\smallskip\noindent
\emph{(ii) The $\mathcal B$-part.}
Define a map
\[
\phi:\{\,W'\subseteq U'\mid |W'|=C-1\,\}\longrightarrow \mathcal B,
\qquad
\phi(W')\defeq W'\cup\{u\}.
\]
This map is a bijection:
\begin{itemize}
\item It is well-defined because if $W'\subseteq U'$ and $|W'|=C-1$, then
$u\notin W'$ and thus $|W'\cup\{u\}|=|W'|+1=C$, and $\phi(W')$ contains $u$.
\item It is injective because if $W'_1\cup\{u\}=W'_2\cup\{u\}$, removing $u$ gives
$W'_1=W'_2$.
\item It is surjective because if $W\in\mathcal B$, then $u\in W$ and setting
$W'\defeq W\setminus\{u\}$ gives $W'\subseteq U'$ and $|W'|=C-1$, with $W=\phi(W')$.
\end{itemize}
Hence we may re-index the sum over $\mathcal B$ via $\phi$:
\begin{equation}\label{eq:B-reindex}
\bigoplus_{W\in\mathcal B}A_W
=
\bigoplus_{\substack{W'\subseteq U'\\|W'|=C-1}} A_{\phi(W')}
=
\bigoplus_{\substack{W'\subseteq U'\\|W'|=C-1}} A_{W'\cup\{u\}}.
\end{equation}
For each $W'\subseteq U'$ we have, by commutativity and associativity of $\otimes$,
\[
A_{W'\cup\{u\}}
=
\Bigl(\bigotimes_{w\in W'}\alpha_w \Bigr)\otimes \alpha_u
=
A_{W'}\otimes \alpha_u.
\]
Substituting into \eqref{eq:B-reindex} and using distributivity of $\otimes$ over $\oplus$,
\begin{equation}\label{eq:B-part}
\bigoplus_{W\in\mathcal B}A_W
=
\bigoplus_{\substack{W'\subseteq U'\\|W'|=C-1}} (A_{W'}\otimes\alpha_u)
=
\Bigl(\bigoplus_{\substack{W'\subseteq U'\\|W'|=C-1}} A_{W'}\Bigr)\otimes\alpha_u
=
S_{C-1}(U')\otimes\alpha_u.
\end{equation}

Finally, combining \eqref{eq:split-SC}, \eqref{eq:A-part}, and \eqref{eq:B-part}, we obtain
\[
S_C(U)=S_C(U')\ \oplus\ \bigl(S_{C-1}(U')\otimes\alpha_u\bigr),
\]
as required.
\end{proof}
The last ingredient, used for the $=$ and $\le$ cases, is a per-world
upper bound: the contribution $T_U(W)$ of a single world $W$ of size at
most~$C$ is below the difference $S_j(U)\ominus S_{C+1}(U)$, for every
level $j\le|W|$.
\begin{lemma}[Per-world upper bound]\label{lem:world-bound}
\lean{Having}{Having.world_bound}
Let $(\mathbb K,\oplus,\otimes,\ominus,\mathbb 0,\mathbb 1)$ be a
commutative absorptive m-semiring in which $\otimes$ distributes over
$\ominus$, $U$ a finite set, and $(\alpha_w)_{w\in U}$ a family of
elements of~$\mathbb K$, with $A_W$, $T_U(W)$, and $S_C(U)$ defined as in
Lemmas~\ref{lem:FC_recurrence} and~\ref{lem:SC_recurrence}. Then for
every $W\subseteq U$ and all integers $j,C$ with $j\le|W|\le C$,
\[
T_U(W)\ \le\ S_j(U)\ominus S_{C+1}(U).
\]
\end{lemma}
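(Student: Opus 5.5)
Throughout I use that absorptivity implies idempotence ($a\oplus a=a\otimes(\mathbb 1\oplus\mathbb 1)=a$). It also makes $\mathbb 1$ the top of the natural order, so $x\otimes y\le x$ for all $x,y$. Write $B_W\defeq\bigoplus_{w\in U\setminus W}A_{W\cup\{w\}}$, so that $T_U(W)=A_W\ominus B_W$. The plan is to prove two facts and then chain them.
\begin{enumerate}[(a)]
\item $A_W\le S_j(U)$.
\item $T_U(W)=T_U(W)\ominus S_{C+1}(U)$, i.e., $T_U(W)$ is ``disjoint'' from $S_{C+1}(U)$.
\end{enumerate}
Given these, monotonicity of $\ominus$ in its first argument finishes the proof:
\[
T_U(W)=T_U(W)\ominus S_{C+1}(U)\le A_W\ominus S_{C+1}(U)\le S_j(U)\ominus S_{C+1}(U).
\]
The first inequality uses Lemma~\ref{lem:monus-basic}(i), and the second uses (a). Monotonicity in the first argument follows from residuation: if $x\le x'$, then $x\le x'\le y\oplus(x'\ominus y)$ by Lemma~\ref{lem:monus-basic}(ii), hence $x\ominus y\le x'\ominus y$.

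For (a), pick $W'\subseteq W$ with $|W'|=j$, which is possible since $j\le|W|$. Then $A_W=A_{W'}\otimes A_{W\setminus W'}\le A_{W'}\le S_j(U)$, the first inequality by absorptivity and the second because $A_{W'}$ is a summand of $S_j(U)$.

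For (b), I use $x\ominus(y\oplus z)=(x\ominus y)\ominus z$ to reduce to one summand at a time. It suffices to show that for each $V\subseteq U$ with $|V|=C+1$,
\[
A_W\ominus(B_W\oplus A_V)=A_W\ominus B_W .
\]
Iterating this over the summands of $S_{C+1}(U)$ then gives (b). The direction ``$\le$'' is antitonicity, Lemma~\ref{lem:monus-basic}(iii). For ``$\ge$'', observe that $|V|>C\ge|W|$, so there is some $w\in V\setminus W$. My first attempt is the identity
\[
x\ominus y=x\ominus(x\otimes y),
\]
valid in absorptive m-semirings with $\otimes$ distributing over $\ominus$. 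With $x=A_W$ and $D=B_W\oplus A_V$, it gives $A_W\ominus D=A_W\ominus(A_W\otimes B_W\oplus A_W\otimes A_V)$. Both summands here are below $B_W$. First, $A_W\otimes B_W\le B_W$ by absorptivity. Second, $A_W\otimes A_V\le A_W\otimes\alpha_w=A_{W\cup\{w\}}\le B_W$, dropping factors by absorptivity. By idempotence the whole sum is then $\le B_W$, and antitonicity yields $A_W\ominus D\ge A_W\ominus B_W$.

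The main obstacle is the identity $x\ominus y=x\ominus(x\otimes y)$; it is used here only in the direction $x\ominus(x\otimes y)\le x\ominus y$. The inequality $x\ominus y\le x\ominus(x\otimes y)$ is immediate from antitonicity, since $x\otimes y\le y$. For the reverse, I would rewrite $x\ominus(x\otimes y)=x\otimes(\mathbb 1\ominus y)$ using distributivity and the fact $x\ominus\mathbb 0=x$, i.e., $x\otimes\mathbb 1\ominus x\otimes y$. It then remains to show $x\otimes(\mathbb 1\ominus y)\le x\ominus y$. I would try to establish this from residuation via $x\le y\oplus(x\ominus y)$, multiplying through and using absorptivity to bound $x\otimes(\mathbb 1\ominus y)$. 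If that fails, I would avoid the identity and instead prove (b) directly in the factored form $T_U(W)=A_W\otimes(\mathbb 1\ominus\cdots)$. That form should be derivable by distributivity from the definition of $T_U(W)$, since each $A_{W\cup\{w\}}$ is a multiple of $A_W$.
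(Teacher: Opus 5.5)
\textbf{There is a genuine gap, and it goes beyond the identity you flag.} Claim~(b) is false. So is the intermediate bound $T_U(W)\le A_W\ominus S_{C+1}(U)$ through which your chain passes. No proof of~(b), in factored form or otherwise, can rescue this route.

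Take the tropical m-semiring over $\mathbb N$, which is absorptive with $\otimes$ distributive over $\ominus$ (Table~\ref{tab:semirings}). Here $\oplus=\min$, $\otimes=+$, $\mathbb 0=\infty$, $\mathbb 1=0$, the natural order is the reverse of the usual one, and $a\ominus b$ is $\infty$ if $a\ge b$ and $a$ otherwise.
\begin{itemize}
\item Your identity already fails for $x=5$, $y=3$: $x\ominus y=\infty=\mathbb 0$, but $x\ominus(x\otimes y)=5\ominus 8=5$.
\item Claim~(b) fails for $U=\{1,2,3\}$, $\alpha_1=10$, $\alpha_2=\alpha_3=1$, $W=\{1\}$, $C=1$. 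Then $B_W=\min(11,11)=11$ and $T_U(W)=10\ominus 11=10$. But $S_2(U)=\min(11,11,2)=2$, so $T_U(W)\ominus S_2(U)=A_W\ominus S_2(U)=10\ominus 2=\mathbb 0$, and $10\not\le\mathbb 0$.
\item The lemma itself holds on this instance: $S_1(U)\ominus S_2(U)=1\ominus 2=1$, and $10\le 1$ in the natural order.
\end{itemize}
The culprit is the summand $A_{\{2,3\}}$ of $S_{C+1}(U)$, which comes from a world disjoint from~$W$. Subtracting it from $A_W$ annihilates $A_W$, while subtracting it from $S_j(U)$ is harmless. Your intuition is Boolean: there $x\land\lnot y=x\land\lnot(x\land y)$ because $\otimes$ is idempotent. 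In general $x\le y$ does not entail $x\le x\otimes y$, and that is exactly what the identity needs.

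\textbf{The missing idea} is to enlarge the subtrahend only after the minuend has become $S_j(U)$, not while it is still $A_W$.
\begin{itemize}
\item With $E_W\defeq\bigoplus_{w\in U\setminus W}\alpha_w$, you have $B_W=A_W\otimes E_W$. Distributivity then gives $T_U(W)=A_W\otimes(\mathbb 1\ominus E_W)$.
\item Apply your~(a) inside this product, using monotonicity of $\otimes$, and re-expand by distributivity: $T_U(W)\le S_j(U)\otimes(\mathbb 1\ominus E_W)=S_j(U)\ominus\bigl(S_j(U)\otimes E_W\bigr)$.
\item It then suffices to show $S_{C+1}(U)\le S_j(U)\otimes E_W$. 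Each $V$ with $|V|=C+1$ contains some $v\notin W$. Since $|V\setminus\{v\}|=C\ge j$, it also contains a $j$-subset $V'\subseteq V\setminus\{v\}$. Hence $A_V\le A_{V'}\otimes\alpha_v\le S_j(U)\otimes E_W$.
\item Antitonicity of $\ominus$ in its second argument concludes.
\end{itemize}
This is the route the paper takes.
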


\begin{proof}
First note that, since $\mathbb K$ is absorptive, every $y\in\mathbb K$
satisfies $y\le\mathbb 1$, so by monotonicity of $\otimes$ (a consequence
of its distributivity over $\oplus$) we have $x\otimes y\le x$ for all
$x,y$: products only decrease as factors are added.

Write $E_W\defeq\bigoplus_{w\in U\setminus W}\alpha_w$. By distributivity
of $\otimes$ over $\oplus$,
$\bigoplus_{w\in U\setminus W}A_{W\cup\{w\}}=A_W\otimes E_W$, so by
distributivity of $\otimes$ over $\ominus$,
\begin{equation}\label{eq:TW-factored}
T_U(W)=A_W\ominus(A_W\otimes E_W)=A_W\otimes(\mathbb 1\ominus E_W).
\end{equation}

We now make two observations.

\emph{(1) $A_W\le S_j(U)$.} Choose any $W'\subseteq W$ with $|W'|=j$.
Then $A_W=A_{W'}\otimes A_{W\setminus W'}\le A_{W'}\le S_j(U)$, the last
step because $A_{W'}$ is one of the summands of $S_j(U)$.

\emph{(2) $S_{C+1}(U)\le S_j(U)\otimes E_W$.} Let $V\subseteq U$ with
$|V|=C+1$ (if no such $V$ exists, $S_{C+1}(U)=\mathbb 0$ and the claim is
trivial). Since $|V|>|W|$, there is some $v\in V\setminus W$, and since
$|V\setminus\{v\}|=C\ge j$, there is some $V'\subseteq V\setminus\{v\}$
with $|V'|=j$. Then
\[
A_V=A_{V'}\otimes\alpha_v\otimes A_{V\setminus(V'\cup\{v\})}
\le A_{V'}\otimes\alpha_v\le S_j(U)\otimes E_W,
\]
using $A_{V'}\le S_j(U)$, $\alpha_v\le E_W$ (as $v\in U\setminus W$), and
monotonicity of $\otimes$. Taking the $\oplus$-sum over all such $V$
(idempotence) proves the claim.

Combining: by \eqref{eq:TW-factored}, observation (1) and monotonicity of
$\otimes$, then distributivity of $\otimes$ over $\ominus$, then
observation (2) and antitonicity of $\ominus$ in its second argument
(Lemma~\ref{lem:monus-basic}(iii)),
\[
T_U(W)=A_W\otimes(\mathbb 1\ominus E_W)
\le S_j(U)\otimes(\mathbb 1\ominus E_W)
= S_j(U)\ominus\bigl(S_j(U)\otimes E_W\bigr)
\le S_j(U)\ominus S_{C+1}(U).\qedhere
\]
\end{proof}
Note the order of the steps in the proof of
Lemma~\ref{lem:world-bound}: writing
$E_W\defeq\bigoplus_{w\in U\setminus W}\alpha_w$, bounding $A_W$ by
$S_j(U)$ \emph{directly inside the subtrahend} of
$T_U(W)=A_W\ominus(A_W\otimes E_W)$ would move the monus in the wrong
direction; the bound is instead applied to the factored form
$T_U(W)=A_W\otimes(\mathbb 1\ominus E_W)$, whose second factor does not
depend on $A_W$, and distributivity of $\otimes$ over $\ominus$ is only
used afterwards, when the subtrahend has grown to $S_j(U)\otimes E_W$,
which dominates $S_{C+1}(U)$.

\end{toappendix}

\begin{theoremrep}\label{th:correctness}
  Let $q$ be an aggregate-free query of arity~$2$ whose second column is
  an \emph{occurrence identifier}: the tuples of $\angsem{\hat I}{q}$ are
  pairwise distinct (e.g., a key attribute of a base relation carried
  along). For $C\in\mathbb{N}_{\ge1}$ and ${\op}\in \{\leq,
  =,\geq\}$ consider the queries:
\begin{align*}
    Q^{\op C}_1&=\Pi_{\#1}(\sigma_{\#2\op C}(\gamma^{\preceq}_{\#1}[1:+](q)))
    & Q^{\leq C}_2&= Q^{\geq 1}_2 - Q^{\geq (C+1)}_2\\
    Q^{\geq C}_2&=\varepsilon\bigl(\Pi_{\#1}(q\bowtie_{\phi_1}q\bowtie_{\phi_2}\cdots\bowtie_{\phi_{C-1}}q)\bigr)
    & Q^{=C}_2&=Q^{\geq C}_2 - Q^{\geq (C+1)}_2
\end{align*}
where $\varepsilon$ denotes duplicate elimination, $\phi_i$ is the join
condition $\#1=\#(2i+1)\land\#(2i)<\#(2i+2)$, and the chain is empty for
$C=1$, so that $Q^{\geq 1}_2=\varepsilon(\Pi_{\#1}(q))$; the strict
comparisons on the identifier column select each $C$-element set
of occurrences of a group once.
    Then:
    \begin{enumerate}[(i)]
      \item
        \lean{HavingQueryCorrectness}{Query.joinCount_correct}
        \lean{HavingJoinCompositional}{joinCountQueryPadded_correct}
        For any commutative m-semiring $\mathbb{K}$ which is
        absorptive and such that $\otimes$ is
        distributive over $\ominus$, for any
        $\mathbb{K}$-instance $\hat I$,
        $\angsem{\hat I}{Q^{\op C}_1}=\angsem{\hat I}{Q^{\op C}_2}$.
      \item
        \lean{HavingQueryCounterexamples}{HavingQueryCounterexamples.ChainFive.query_counterexample}
        There exist a commutative m-semiring $\mathbb{K}_1$ that is
        absorptive but where $\otimes$ is not distributive over
        $\ominus$, and a $\mathbb{K}_1$-instance $\hat I_1$ such that
        $\angsem{\hat I_1}{Q^{=1}_1}\neq\angsem{\hat I_1}{Q^{=1}_2}$.
      \item
        \lean{HavingQueryCounterexamples}{HavingQueryCounterexamples.MinTropicalZ.query_counterexample}
        There exist a commutative idempotent m-semiring $\mathbb{K}_2$
        with $\otimes$ distributive over $\ominus$ but not absorptive,
        and a $\mathbb{K}_2$-instance $\hat I_2$ such that
        $\angsem{\hat I_2}{Q^{\geq 1}_1}\neq\angsem{\hat I_2}{Q^{\geq 1}_2}$.
    \end{enumerate}
\end{theoremrep}
\begin{proof}
We proceed with our proof for part $(i)$ of Theorem~\ref{th:correctness}.
\paragraph*{Proof of (i)}
For any $\mathbb{K}$-instance $\hat I$, we prove the equality $\angsem{\hat I}{Q^{\op C}_1}=\angsem{\hat I}{Q^{\op C}_2}$ by showing
  that, for any tuple in the query result, the corresponding provenance expressions, instantiated on any commutative m-semiring $\mathbb{K}$ under the assumptions of $(i)$, are the same.
    In other words, we need to show that the provenance annotation of each group produced by
    the grouping operator under our \sql{HAVING} semantics of
    Definition~\ref{def:semantics} coincides with
    that produced by the self-join query. Recall from
    Section~\ref{sec:preliminaries} that a tuple annotated by $\mathbb 0$
    is absent from a $\mathbb K$-relation: a group whose predicate
    provenance is $\mathbb 0$ is thus absent from $\angsem{\hat
    I}{Q^{\op C}_1}$, as it is from $\angsem{\hat I}{Q^{\op C}_2}$, which
    produces no row for it. The Lean development states the identity per
    group key \lean{HavingQueryCorrectness}{Query.joinCount_correct} and,
    row for row, for $Q^{\op C}_2$ padded with a $\mathbb 0$-annotated
    row per group key
    \lean{HavingJoinCompositional}{joinCountQueryPadded_correct}.
    Since \(q\) is an arity~$2$ query, every tuple in $\angsem{\hat
    I}{q}$ has the form $(u_1,u_2,\alpha)$ where $\alpha\in \mathbb K$.
Fix an arbitrary group key $\vec v=(a)$.

Let $U\defeq U^{\preceq}_{\angsem{\hat I}{q},(a)}$ be the sequence of
occurrences of the group with key~$a$ in $\angsem{\hat I}{q}$ under
order~$\preceq$ (Definition~\ref{def:possible_world}).

\paragraph*{Provenance of the \sql{HAVING} query using possible-world semantics:}
For the query
$
Q^{\op C}_1
=
\Pi_{\#1}\bigl(
\sigma_{\#2 \op C}(
  \gamma^{\preceq}_{\#1}[1:+](q)
)\bigr)
$
\begin{multline}\label{eq:Q1-prov}
\angsem{\hat I}{Q^{\op C}_1}(a)
=\predsem{\hat I}{\gamma^{\preceq}_{\#1}[1:+](q)}{\#2\op C}(u)
=\bigoplus_{\emptyset\neq W\sqsubseteq U}
\Bigl(
\ann_U(W)\otimes
\chi_{\op}\bigl(\mathrm{agg}_{1,f_\Sigma}(W),\ C\bigr)
\Bigr)\\
=\bigoplus_{\emptyset\neq W\sqsubseteq U}
\Bigl(
\ann_U(W)\otimes
\chi_{\op}\bigl(|W|,\ C\bigr)
\Bigr)
\end{multline}
where $f_\Sigma$ denotes the commutative \sql{COUNT} aggregate, so that
$\mathrm{agg}_{1,f_\Sigma}(W)=|W|$.
\\
For each possible world $W\sqsubseteq U$, write
$A_W\defeq\bigotimes_{w\in W}\alpha_w$, where $\alpha_w$ is the annotation
of occurrence~$w$. Then
\begin{align*}
T_U(W)&\defeq\ A_W \otimes \Bigl(\mathbb{1}_{\mathbb{K}} \ominus \bigoplus_{w\in U\setminus W}\alpha_w\Bigr)\\
&= A_W\ominus \bigoplus_{w\in U\setminus W}A_{W\cup\{w\}} \;\textit{[by distributivity of $\otimes$ over $\ominus$ and then of $\otimes$ over $\oplus$]}
\end{align*}
so that $\ann_U(W)=T_U(W)$.

\lean{Having}{Having.T_eq_mul_one_monus_sum}
Since $\chi_{\op}(|W|,C)=\mathbb{1}_\mathbb{K}$ iff $|W|\op C$ is true and $\mathbb{0}_\mathbb{K}$ otherwise, \eqref{eq:Q1-prov} becomes
\begin{equation}\label{eq:pw_tw}
\angsem{\hat I}{Q^{\op C}_1}(a)
=
\bigoplus_{\substack{W\sqsubseteq U,\\ W\neq\emptyset,\\|W|\op C}} T_{U}(W).
\end{equation}
\paragraph*{Provenance of the \sql{JOIN} queries:}
For every $W\sqsubseteq U $ and $C\ge 1$ we define,
\[
S_C(U)\defeq\bigoplus_{\substack{W\sqsubseteq U \\|W|=C}} A_W
\quad
(\text{with }S_C(U)=\mathbb{0}_\mathbb{K}\text{ if }C>|U|).
\]
Consider the query $Q^{\ge C}_2$.  A join-valuation contributing to the output
tuple in $\angsem{\hat I}{Q^{\ge C}_2}$  with key $(a)$ corresponds to choosing $C$ tuples from $\angsem{\hat I}{q}$ with first
attribute $a$ and strictly increasing identifiers (due to the chain of
join conditions $\#2<\#4<\cdots$); identifiers being pairwise distinct,
each $C$-element set of occurrences is selected exactly once, in the order
of its identifiers. Thus this corresponds to choosing a
$C$-element subset $W\sqsubseteq U$, which contributes the product $A_W$.
Under multiset semantics, $\Pi_{\#1}$ produces one occurrence of the output tuple $(a)$ for each such subset; the duplicate elimination $\varepsilon$ then merges these occurrences into a single tuple whose annotation is their $\oplus$-sum.
Therefore,
\[
\angsem{\hat I}{Q^{\ge C}_2}(a)
=S_C(U).
\tag{$J_{\ge}$}
\]
By definition,
\[
Q^{=C}_2 = Q^{\ge C}_2 - Q^{\ge (C+1)}_2,
\qquad
Q^{\le C}_2 = Q^{\ge 1}_2 - Q^{\ge (C+1)}_2,
\]
Hence, using $(J_{\ge})$,
\[
\angsem{\hat I}{Q^{=C}_2}(a)
=
S_C(U) \ominus S_{C+1}(U),
\quad
\angsem{\hat I}{Q^{\le C}_2}(a)
=
S_1(U) \ominus S_{C+1}(U).
\tag{$J_{=,\le}$}
\]
\paragraph*{\textbf{When $\op$ is $\ge$:}}
From \eqref{eq:pw_tw} with $\op$ being $\ge$, we set
$
F_C(U)\defeq \bigoplus_{\substack{W\sqsubseteq U\\|W|\ge C}} T_{U}(W).
$
So,
\begin{equation}\label{eq:PW_ge_equals_F}
\angsem{\hat I}{Q^{\ge C}_1}(a)=F_C(U).
\tag{$PW_{\ge}$}
\end{equation}
We already showed for the join query that
\[
\angsem{\hat I}{Q^{\ge C}_2}(a)=S_C(U)
\tag{$J_{\ge}$}
\]
The quantities $A_W$, $T_U(W)$, $F_C(U)$ and $S_C(U)$ depend only on which
occurrences are selected, not on their order; we may therefore regard $U$ as
the finite \emph{set} of its occurrences and each possible world
$W\sqsubseteq U$ as a subset $W\subseteq U$, as in
Lemmas~\ref{lem:FC_recurrence} and~\ref{lem:SC_recurrence}.
By Lemma~\ref{lem:FC_recurrence}, for an arbitrary choice of $u\in U$ with $U'=U\setminus\{u\}$, for all $C\ge 1$,
\[
F_C(U)=F_C(U')\ \oplus\ \bigl(F_{C-1}(U')\otimes\alpha_u\bigr).
\tag{R$_F$}
\]
Similarly, by Lemma~\ref{lem:SC_recurrence}, $S_C(U)$ can be shown to satisfy the same recurrence:
\[
S_C(U)=S_C(U')\ \oplus\ \bigl(S_{C-1}(U')\otimes\alpha_u\bigr)
\qquad(C\ge 1).
\tag{R$_S$}
\]
We prove that $F_C(U)=S_C(U)$ by induction on $|U|$, for all finite $U$ and all $C\ge 1$ \lean{Having}{Having.F_eq_S}.
If $|U|=0$ then $U=\varnothing$ and $F_C(\varnothing)=\mathbb 0=S_C(\varnothing)$ for $C\ge 1$
(empty $\oplus$-sum).
Assume $|U|>0$ and pick $u\in U$, let $U'=U\setminus\{u\}$.
Applying (R$_F$) and (R$_S$),
\[
F_C(U)=F_C(U')\oplus(F_{C-1}(U')\otimes\alpha_u),
\qquad
S_C(U)=S_C(U')\oplus(S_{C-1}(U')\otimes\alpha_u).
\]
The induction hypothesis gives $F_C(U')=S_C(U')$, and for $C\ge 2$ also
$F_{C-1}(U')=S_{C-1}(U')$. For $C=1$ the lower term is the count-$0$
quantity, which is not covered by the induction hypothesis but holds
directly: $S_0(U')=A_\varnothing=\mathbb 1$, while
$F_0(U')=\bigoplus_{W\subseteq U'}T_{U'}(W)=\mathbb 1$ because $\mathbb K$ is
absorptive (so that $\mathbb 1\oplus x=\mathbb 1$)
\lean{Having}{Having.F_zero_eq_one}; this is the only step requiring
absorptivity rather than just the idempotence it implies. In every case the two right-hand sides coincide,
so $F_C(U)=S_C(U)$.
Thus $F_C(U)=S_C(U)$ for all $C\ge 1$. Hence by \eqref{eq:PW_ge_equals_F} and $(J_{\ge})$,
\[
\angsem{\hat I}{Q^{\ge C}_1}(a)=\angsem{\hat I}{Q^{\ge C}_2}(a).
\]
\paragraph*{\textbf{When $\op$ is $=$:}}
Since $\chi_{=}(n,C)$ is $\mathbb 1$ when $n=C$ and $\mathbb 0$ otherwise, the
\sql{HAVING} side is simply the restriction of the sum to the worlds of size
exactly~$C$:
\[
\angsem{\hat I}{Q^{=C}_1}(a)
=\bigoplus_{W\sqsubseteq U}\bigl(T_{U}(W)\otimes\chi_{=}(|W|,C)\bigr)
=\bigoplus_{\substack{W\sqsubseteq U,\\ |W|=C}} T_{U}(W)
\defeq G_C(U).
\]
On the join side, $(J_{=,\le})$ together with the already established
$\ge$-case gives
\[
\angsem{\hat I}{Q^{=C}_2}(a)
=\angsem{\hat I}{Q^{\ge C}_2}(a)\ominus\angsem{\hat I}{Q^{\ge(C+1)}_2}(a)
=S_C(U)\ominus S_{C+1}(U).
\]
The $=$ case therefore amounts to the identity
\begin{equation}\label{eq:exactly-C}
G_C(U)=S_C(U)\ominus S_{C+1}(U),
\end{equation}
which does not follow from the $\ge$-case alone: it is proven by combining
a per-world upper bound (Lemma~\ref{lem:world-bound}) with
right-distributivity of $\ominus$ over~$\oplus$
\lean{Having}{Having.G_eq_S_monus_S}.

On the one hand, Lemma~\ref{lem:world-bound}, applied with $j=C$, gives
$T_U(W)\le S_C(U)\ominus S_{C+1}(U)$ for every world $W\sqsubseteq U$ with
$|W|=C$; taking the $\oplus$-sum over all such $W$ yields
$G_C(U)\le S_C(U)\ominus S_{C+1}(U)$.

On the other hand, for every $W\sqsubseteq U$ with $|W|=C$, each summand of
$P_W\defeq\bigoplus_{w\in U\setminus W}A_{W\cup\{w\}}$ is an $A_V$ with
$|V|=C+1$, so $P_W\le S_{C+1}(U)$ and, by antitonicity of $\ominus$ in its
second argument (Lemma~\ref{lem:monus-basic}(iii)),
$A_W\ominus S_{C+1}(U)\le A_W\ominus P_W=T_U(W)$.
Corollary~\ref{lem:finite-add-monus}, applied to
$S_C(U)=\bigoplus_{|W|=C}A_W$ with the \emph{fixed} subtrahend $S_{C+1}(U)$
(legitimate because absorptivity implies idempotence), then yields
\[
S_C(U)\ominus S_{C+1}(U)
=\bigoplus_{\substack{W\sqsubseteq U,\\|W|=C}}\bigl(A_W\ominus S_{C+1}(U)\bigr)
\le\bigoplus_{\substack{W\sqsubseteq U,\\|W|=C}}T_U(W)=G_C(U).
\]
This proves \eqref{eq:exactly-C}, whence
$\angsem{\hat I}{Q^{=C}_1}(a)=\angsem{\hat I}{Q^{=C}_2}(a)$.

\paragraph*{\textbf{When $\op$ is $\le$:}}
Since groups produced by $\gamma^{\preceq}_{\#1}$ are non-empty, only
$n=|W|\ge 1$ occurs, and for such $n$ the condition $n\le C$ is equivalent to
$n\ge 1$ and not $n\ge C+1$. As in the $=$ case, the \sql{HAVING} side is the
restriction of the sum to the worlds of size between $1$ and~$C$,
\[
\angsem{\hat I}{Q^{\le C}_1}(a)
=\bigoplus_{\substack{W\sqsubseteq U,\\ 1\le|W|\le C}} T_{U}(W),
\]
while $(J_{=,\le})$ and the $\ge$-case give
$\angsem{\hat I}{Q^{\le C}_2}(a)=S_1(U)\ominus S_{C+1}(U)$. The $\le$ case
therefore amounts to the identity
\begin{equation}\label{eq:at-most-C}
\bigoplus_{1\le|W|\le C} T_{U}(W)
=S_1(U)\ominus S_{C+1}(U),
\end{equation}
which is proven like \eqref{eq:exactly-C}
\lean{Having}{Having.atMost_eq_S_monus_S}.
Lemma~\ref{lem:world-bound}, now applied with $j=1$, gives
$T_U(W)\le S_1(U)\ominus S_{C+1}(U)$ for every $W$ with $1\le|W|\le C$,
whence
$\bigoplus_{1\le|W|\le C}T_U(W)\le S_1(U)\ominus S_{C+1}(U)$.
Conversely, since $F_1(U)=S_1(U)$ by the $\ge$-case,
Corollary~\ref{lem:finite-add-monus} yields
\[
S_1(U)\ominus S_{C+1}(U)
=\bigoplus_{\substack{W\sqsubseteq U,\\|W|\ge 1}}\bigl(T_U(W)\ominus S_{C+1}(U)\bigr).
\]
For $|W|\ge C+1$, we have $T_U(W)\le A_W\le A_{W'}\le S_{C+1}(U)$ where
$W'$ is any subset of $W$ with $|W'|=C+1$ (using
Lemma~\ref{lem:monus-basic}(i), then the fact that in an absorptive m-semiring
products only decrease as factors are added), so
$T_U(W)\ominus S_{C+1}(U)=\mathbb 0$ by the residuation law; for
$1\le|W|\le C$, $T_U(W)\ominus S_{C+1}(U)\le T_U(W)$ by
Lemma~\ref{lem:monus-basic}(i). Hence
$S_1(U)\ominus S_{C+1}(U)\le\bigoplus_{1\le|W|\le C}T_U(W)$, proving
\eqref{eq:at-most-C} and thus
$\angsem{\hat I}{Q^{\le C}_1}(a)=\angsem{\hat I}{Q^{\le C}_2}(a)$.

\smallskip
Since group key $(a)$ was arbitrary, we conclude
$\angsem{\hat I}{Q^{\op C}_1}=\angsem{\hat I}{Q^{\op C}_2}$ for
${\op}\in\{\ge,=,\le\}$.
\paragraph*{Proof of (ii)}
\lean{HavingQueryCounterexamples}{HavingQueryCounterexamples.ChainFive.query_counterexample}
To isolate distributivity we give an \emph{absorptive} counterexample.
Let $\mathbb K_1$ be the five-element chain
$\mathbb 0=e_0<e_1<e_2<e_3<e_4=\mathbb 1$ with
\begin{itemize}
  \item $e_i\oplus e_j=e_{\max(i,j)}$;
  \item $\otimes$ commutative, of unit~$e_4$ and annihilator~$e_0$, defined on the remaining elements by $e_3\otimes e_3=e_3$, $e_3\otimes e_2=e_3\otimes e_1=e_1$, and $e_i\otimes e_j=\mathbb 0$ for all other $1\le i,j\le 3$.
\end{itemize}
This is a commutative m-semiring
\lean{Semirings/ChainFive}{ChainFive.instCommSemiringWithMonus}: the
natural order is the order of the chain, and $\ominus$ is determined by
it, $e_i\ominus e_j=\mathbb 0$ if $i\le j$ and $e_i\ominus e_j=e_i$
otherwise.
Since $\oplus$ is $\max$ and $\mathbb 1=e_4$ is the greatest element,
$\mathbb K_1$ is absorptive \lean{Semirings/ChainFive}{ChainFive.absorptive},
hence idempotent. It is not $\otimes$/$\ominus$-distributive
\lean{Semirings/ChainFive}{ChainFive.not_mul_sub_left_distributive}: since
$\mathbb 1\ominus e_3=\mathbb 1$, we get
$e_3\otimes(\mathbb 1\ominus e_3)=e_3$, whereas
$(e_3\otimes\mathbb 1)\ominus(e_3\otimes e_3)=e_3\ominus e_3=\mathbb 0$.
Let $\hat I_1$ have a group with
key~$u$ consisting of three occurrences annotated $\alpha_1=e_2$ and
$\alpha_2=\alpha_3=e_3$. Only the singleton possible worlds
satisfy the \sql{HAVING} condition $\text{\sql{COUNT(*)}}=1$, so
\[
\angsem{\hat I_1}{Q^{=1}_1}(u)
=\bigoplus_{i}\alpha_i\otimes\Bigl(\mathbb 1\ominus\bigoplus_{j\neq i}\alpha_j\Bigr),
\qquad
\angsem{\hat I_1}{Q^{=1}_2}(u)
=\Bigl(\bigoplus_i\alpha_i\Bigr)\ominus\Bigl(\bigoplus_{i<j}\alpha_i\otimes\alpha_j\Bigr),
\]
the latter because $Q^{=1}_2=Q^{\ge1}_2-Q^{\ge2}_2$ annotates~$u$ by
$\bigoplus_i\alpha_i$ minus $\bigoplus_{i<j}\alpha_i\otimes\alpha_j$.
Here $\bigoplus_{j\neq i}\alpha_j=e_3$ for every~$i$ and
$\mathbb 1\ominus e_3=\mathbb 1$, so
\[
\angsem{\hat I_1}{Q^{=1}_1}(u)
=e_2\oplus e_3\oplus e_3=e_3,
\]
while $\bigoplus_i\alpha_i=e_3$ and
$\bigoplus_{i<j}\alpha_i\otimes\alpha_j=e_1\oplus e_1\oplus e_3=e_3$, so
\[
\angsem{\hat I_1}{Q^{=1}_2}(u)=e_3\ominus e_3=\mathbb 0\neq e_3.
\]
Hence $\angsem{\hat I_1}{Q^{=1}_1}\neq\angsem{\hat I_1}{Q^{=1}_2}$.
\paragraph*{Proof of (iii)}
To isolate absorptivity we give a \emph{distributive} counterexample.
\lean{HavingQueryCounterexamples}{HavingQueryCounterexamples.MinTropicalZ.query_counterexample}
Consider the tropical semiring
$\mathbb K_2=(\mathbb{Z}\cup\{\infty\},\min,+,\infty,0)$ over the
integers, equipped with the same monus as the tropical semiring over
$\mathbb R$ of Table~\ref{tab:semirings} (the construction of
Appendix~\ref{appendix:corrections} applies to any linearly ordered
commutative additive monoid); the computation below is identical over
$\mathbb R$ \lean{Semirings/Tropical}{MinTropicalR.F_ne_S}.
It is idempotent, $\otimes$ distributes over $\ominus$
\lean{HavingQueryCounterexamples}{HavingQueryCounterexamples.MinTropicalZ.mul_sub_left_distributive}, and it is
not absorptive
\lean{HavingQueryCounterexamples}{HavingQueryCounterexamples.MinTropicalZ.not_absorptive_witness}:
$\mathbb 1\oplus(-1)=\min(0,-1)=-1\neq\mathbb 1$.
We show that there exists a $\mathbb K_2$-instance $\hat I$ such that
\[
\angsem{\hat I}{Q^{\geq 1}_1}
\neq
\angsem{\hat I}{Q^{\geq 1}_2}.
\]

Consider a group with key $u$ containing exactly two tuples,
both annotated by
$
t_1=t_2=-1$.
So,
\[
\angsem{\hat I}{Q^{\ge1}_2}(u)
=
t_1\oplus t_2
=
\min(-1,-1)
=
-1.
\]
But
\[
\angsem{\hat I}{Q^{\ge1}_1}(u)
=
(t_1\otimes(\mathbb 1 \ominus t_2)) \oplus (t_2\otimes(\mathbb 1 \ominus t_1))\oplus (t_1\otimes t_2\otimes(\mathbb 1 \ominus \mathbb 0))
=
\min(\infty,\infty,-2)
=
-2.
\]

Hence,
$
\angsem{\hat I}{Q^{\ge1}_1}
\neq
\angsem{\hat I}{Q^{\ge1}_2}.
$
\end{proof}

The three operators of the theorem suffice: $\text{\sql{COUNT(*)}}>C$ is
$\text{\sql{COUNT(*)}}\geq C+1$, $\text{\sql{COUNT(*)}}<C$ is
$\text{\sql{COUNT(*)}}\leq C-1$, and $\neq C$ is the disjunction of the two,
both for $Q_1$ (Definition~\ref{def:semantics}) and for the self-join
rewriting (set union $\cup$ of the two rewritings); the Lean statement
covers the six operators directly.

This mismatch across m-semirings is to be expected: outside of
the Boolean provenance world, semiring provenance depends on \emph{the way
a query is written}, not just on query equivalence.
Theorem~\ref{th:correctness} shows that our semantics is correct for
Boolean provenance (the Boolean-function m-semiring satisfies both
hypotheses) and generalizes%
\footnote{This is a proper generalization: the tropical m-semiring
  over $\mathbb{N}$, the Viterbi and the \L ukasiewicz m-semirings are
  absorptive with $\otimes$ distributive over $\ominus$
  (Table~\ref{tab:semirings}), but no semiring homomorphism from
  $\mathbb{B}[X]$ to any of them sends the variables of $X$ to arbitrary
  values
  \lean{Semirings/Tropical}{MinTropicalN.no_hom_from_BoolFunc}
  \lean{Semirings/Viterbi}{Viterbi.no_hom_from_BoolFunc}
\lean{Semirings/Lukasiewicz}{Lukasiewicz.no_hom_from_BoolFunc}.}
this equivalence to
other m-semirings. For other m-semirings, we simply choose the proposed
semantics as an implementable semantics for \sql{HAVING}.

\begin{proofsketch}
Fix a group key $a$ and let $U$ be the sequence of occurrences of its
group in~$\angsem{\hat I}{q}$; the order of $U$ plays no role here, so we
regard $U$ as a finite \emph{set} and its worlds as subsets $W\subseteq U$,
writing $A_W\defeq\bigotimes_{w\in W}\alpha_w$ for the product of their
annotations. Three quantities are compared:
$T_U(W)\defeq A_W\ominus\bigoplus_{w\in U\setminus W}A_{W\cup\{w\}}$, the
world annotation $\ann_U(W)$ in expanded form, which distributivity of
$\otimes$ over $\ominus$ provides
\lean{Having}{Having.T_eq_mul_one_monus_sum};
$F_C(U)\defeq\bigoplus_{|W|\ge C}T_U(W)$, the \sql{HAVING} side
$\angsem{\hat I}{Q^{\ge C}_1}(a)$ by Definition~\ref{def:semantics}; and
$S_C(U)\defeq\bigoplus_{|W|=C}A_W$, the self-join side
$\angsem{\hat I}{Q^{\ge C}_2}(a)$, since the self-join exhibits each
$C$-subset once and $\varepsilon$ merges the copies of the key by~$\oplus$.
The $\ge$ case of (i) is the identity $F_C(U)=S_C(U)$
\lean{Having}{Having.F_eq_S}: both sides satisfy the same include/exclude
recurrence $X_C(U)=X_C(U')\oplus\bigl(X_{C-1}(U')\otimes\alpha_u\bigr)$ for
$u\in U$ and $U'\defeq U\setminus\{u\}$ (proved in the appendix, the
recurrence for $F_C$ relying on the expansion
$A_V\le\bigoplus_{V\subseteq W\subseteq U}T_U(W)$ for $V\subseteq U$,
both obtained from the elementary monus inequalities of
Lemma~\ref{lem:monus-basic}(i)--(ii)),
and the recurrences bottom out at
$S_0(U')=\mathbb 1$ and $F_0(U')=\mathbb 1$, the latter being the only step
that needs absorptivity rather than the idempotence it implies
\lean{Having}{Having.F_zero_eq_one}. The $=$ and $\le$ cases are the
identities $\bigoplus_{|W|=C}T_U(W)=S_C(U)\ominus S_{C+1}(U)$ and
$\bigoplus_{1\le|W|\le C}T_U(W)=S_1(U)\ominus S_{C+1}(U)$
\lean{Having}{Having.G_eq_S_monus_S}
\lean{Having}{Having.atMost_eq_S_monus_S}: one inequality follows from the
per-world bound $T_U(W)\le S_j(U)\ominus S_{C+1}(U)$ for $j\le|W|\le C$
(proved in the appendix), the other from the right-distributivity of
$\ominus$ over $\oplus$, which idempotence grants
(Proposition~\ref{prop:idempotent_iff_add_monus}, in $n$-ary form
Corollary~\ref{lem:finite-add-monus}), and the
antitonicity of $\ominus$ (Lemma~\ref{lem:monus-basic}(iii)). The key $a$
being arbitrary, this proves~(i). For~(ii) we exhibit a counterexample in
a five-element absorptive chain in which $\otimes$ does not distribute
over $\ominus$
\lean{HavingQueryCounterexamples}{HavingQueryCounterexamples.ChainFive.query_counterexample},
and for~(iii) one in the tropical semiring over~$\mathbb Z\cup\{\infty\}$
\lean{HavingQueryCounterexamples}{HavingQueryCounterexamples.MinTropicalZ.query_counterexample}.
\end{proofsketch}

For \emph{monotone} \sql{HAVING} conditions -- preserved as a group
grows, such as $\text{\sql{COUNT(*)}}\geq C$ -- absorptivity
suffices: part~(i) holds without distributivity of $\otimes$
over~$\ominus$, and extends to every monotone condition classically
rewritable into a positive aggregate-free query (see appendix).

\begin{toappendix}
\subsection{Monotone Conditions: Absorptivity Suffices}
\label{appendix:monotone}

This appendix isolates the class of \sql{HAVING} conditions for which
the first hypothesis of Theorem~\ref{th:correctness}(i), absorptivity,
suffices on its own. The role of the second hypothesis, distributivity
of $\otimes$ over~$\ominus$, is tied to the difference operator: the
rewritings $Q^{\leq C}_2$ and $Q^{=C}_2$ use it, and part~(ii) of the
theorem shows that distributivity is then unavoidable. When the
rewriting is \emph{positive}, as $Q^{\geq C}_2$ is, distributivity can
be dropped: the monus cancels from the possible-world sum.

\begin{definition}\label{def:monotone}
  In the setting of Definition~\ref{def:semantics}, an atomic
  comparison $t\op t'$ \emph{holds in} a non-empty world
  $W\sqsubseteq U$ if $\mathrm{val}_t(W)\op\mathrm{val}_{t'}(W)$; it
  is \emph{monotone} if, for every occurrence sequence~$U$, the family
  of the non-empty worlds in which it holds is closed under
  supersequences within~$U$. A \emph{monotone condition} is a Boolean
  combination, using $\land$ and~$\lor$ only, of monotone atomic
  comparisons.
\end{definition}
Note that ``holds in'' is used only atom by atom: the provenance of a
compound condition remains that of Section~\ref{sec:semantics},
interpreting $\land$ by $\otimes$ and $\lor$ by $\oplus$ of the
atoms' predicate provenances; summing $\ann_U(W)$ over the worlds
where the whole combination holds would define a different
semantics.
Monotone atomic comparisons include $\text{\sql{COUNT(*)}}\geq C$
and~$>C$, since adding occurrences only increases the count;
$\text{\sql{MIN}}(t)\leq c$ and~$<c$, since a new occurrence can only
lower the minimum; and symmetrically $\text{\sql{MAX}}(t)\geq c$
and~$>c$. In contrast, an added occurrence can falsify
$\text{\sql{COUNT(*)}}\leq C$ or $=C$, and negation exchanges
monotone and non-monotone atoms, whence its exclusion.
$\text{\sql{SUM}}(t)\geq c$ is monotone when the aggregated values
are non-negative, though it admits no aggregate-free rewriting in
general.

The key observation is that on families of worlds closed under
supersets, the monus contributes nothing to the possible-world sum:
\begin{proposition}[monus cancellation]\label{prop:monus-cancellation}
  Let $(\mathbb K,\oplus,\otimes,\ominus,\mathbb 0,\mathbb 1)$ be a
  commutative m-semiring, $U$ a finite set and $(\alpha_w)_{w\in U}$ a
  family of elements of~$\mathbb K$; for $W\subseteq U$, let
  $A_W\defeq\bigotimes_{w\in W}\alpha_w$,
  $E_W\defeq\bigoplus_{w\in U\setminus W}\alpha_w$,
  $\ann_U(W)=A_W\otimes(\mathbb 1\ominus E_W)$ as in
  Definition~\ref{def:possible_world}, and
  $T_U(W)\defeq A_W\ominus\bigoplus_{w\in U\setminus W}A_{W\cup\{w\}}$
  as in Lemma~\ref{lem:upward-expansion}.
  \begin{enumerate}[(i)]
    \item
      \lean{Having}{Having.monus_factor_le}
      $T_U(W)\leq\ann_U(W)\leq A_W$ for every $W\subseteq U$.
    \item
      \lean{Having}{Having.witness_identity}
      \lean{Having}{Having.sum_T_eq_sum_A}
      If $\mathbb K$ is idempotent then, for every family
      $\mathcal V$ of non-empty subsets of~$U$ closed under supersets
      (i.e., $W\in\mathcal V$ and $W\subseteq W'\subseteq U$ imply
      $W'\in\mathcal V$),
      \[
        \bigoplus_{W\in\mathcal V}\ann_U(W)
        =\bigoplus_{W\in\mathcal V}T_U(W)
        =\bigoplus_{W\in\mathcal V}A_W.
      \]
    \item
      \lean{Having}{Having.witness_minimal}
      If $\mathbb K$ is moreover absorptive, these sums further
      equal $\bigoplus_{W\in\min\mathcal V}A_W$, where
      $\min\mathcal V$ denotes the set of $\subseteq$-minimal elements
      of~$\mathcal V$.
  \end{enumerate}
\end{proposition}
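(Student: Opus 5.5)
For~(i), the plan is to compare everything to $A_W$ using only the residuation law and monotonicity. Every $\alpha_w$ with $w\in U\setminus W$ satisfies $A_W\otimes\alpha_w\le A_W\otimes E_W$, so $\bigoplus_{w\in U\setminus W}A_{W\cup\{w\}}=A_W\otimes E_W$ by distributivity of $\otimes$ over $\oplus$. Then $T_U(W)=A_W\ominus(A_W\otimes E_W)$.

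\begin{itemize}
\item \textbf{Upper bound $\ann_U(W)\le A_W$.} By Lemma~\ref{lem:monus-basic}(i), $\mathbb 1\ominus E_W\le\mathbb 1$. Multiplying by $A_W$, which is monotone since $\otimes$ distributes over $\oplus$, gives $\ann_U(W)\le A_W$.
\item \textbf{Lower bound $T_U(W)\le\ann_U(W)$.} By residuation, it suffices to show $A_W\le (A_W\otimes E_W)\oplus\ann_U(W)$. From Lemma~\ref{lem:monus-basic}(ii), $\mathbb 1\le E_W\oplus(\mathbb 1\ominus E_W)$. Multiplying by $A_W$ and distributing gives $A_W\le A_W\otimes E_W\oplus A_W\otimes(\mathbb 1\ominus E_W)$, which is exactly what is needed.
\end{itemize}

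For~(ii), part~(i) yields the chain $\bigoplus_{\mathcal V}T_U\le\bigoplus_{\mathcal V}\ann_U\le\bigoplus_{\mathcal V}A_W$. It remains to close it with $A_V\le\bigoplus_{W\in\mathcal V}T_U(W)$ for every $V\in\mathcal V$.

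\begin{itemize}
\item Lemma~\ref{lem:upward-expansion}, which needs only idempotence, gives $A_V\le\bigoplus_{V\subseteq W\subseteq U}T_U(W)$.
\item Every such $W$ lies in $\mathcal V$ by upward closure, so this sum is a subsum of $\bigoplus_{W\in\mathcal V}T_U(W)$. Idempotence makes a subsum of an $\oplus$-sum below the whole sum.
\item Summing over $V\in\mathcal V$, again by idempotence, closes the chain. Antisymmetry of the natural order then gives the two equalities.
\end{itemize}

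For~(iii), the aim is $\bigoplus_{\mathcal V}A_W=\bigoplus_{\min\mathcal V}A_W$.

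\begin{itemize}
\item One direction is immediate: $\min\mathcal V\subseteq\mathcal V$, and idempotence makes subsums smaller.
\item For the other direction, take $W\in\mathcal V$ and a minimal $W_0\subseteq W$ in $\mathcal V$; it exists by finiteness.
\item Absorptivity gives $y\le\mathbb 1$ for all $y$, hence $A_W=A_{W_0}\otimes A_{W\setminus W_0}\le A_{W_0}$ by monotonicity of $\otimes$.
\item Therefore each $A_W$ is below the minimal-world sum, and summing over $W\in\mathcal V$ concludes.
\end{itemize}

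The main obstacle I expect is the lower bound in~(i), since it must hold without distributivity of $\otimes$ over $\ominus$. The trick is to stay on the $\oplus$ side of the residuation law, multiplying the inequality $\mathbb 1\le E_W\oplus(\mathbb 1\ominus E_W)$ rather than the monus itself. Everything else reduces to Lemma~\ref{lem:upward-expansion} and idempotent bookkeeping.
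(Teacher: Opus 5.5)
Your proposal is correct and follows the paper's proof essentially step for step: (i) by residuation, multiplying $\mathbb 1\le E_W\oplus(\mathbb 1\ominus E_W)$ by $A_W$; (ii) by closing the chain obtained from (i) with Lemma~\ref{lem:upward-expansion} and upward closure; and (iii) by bounding each $A_W$ by $A_{W_0}$ for a minimal $W_0\subseteq W$, using absorptivity. The opening remark $A_W\otimes\alpha_w\le A_W\otimes E_W$ is superfluous, since $\bigoplus_{w\in U\setminus W}A_{W\cup\{w\}}=A_W\otimes E_W$ follows from distributivity of $\otimes$ over $\oplus$ alone.
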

\begin{proof}
  (i) For the right inequality, $\mathbb 1\ominus E_W\le\mathbb 1$ by
  Lemma~\ref{lem:monus-basic}(i), and $\otimes$ is monotone. For the
  left one, distributivity of $\otimes$ over $\oplus$ gives
  $\bigoplus_{w\in U\setminus W}A_{W\cup\{w\}}=A_W\otimes E_W$, so by
  the residuation law~\eqref{eq:residuation} it suffices that
  $A_W\le(A_W\otimes E_W)\oplus\bigl(A_W\otimes(\mathbb 1\ominus
  E_W)\bigr)=A_W\otimes\bigl(E_W\oplus(\mathbb 1\ominus E_W)\bigr)$,
  which holds since $\mathbb 1\le E_W\oplus(\mathbb 1\ominus E_W)$ by
  Lemma~\ref{lem:monus-basic}(ii) and $\otimes$ is monotone.

  (ii) Summing~(i) over $\mathcal V$ yields
  $\bigoplus_{\mathcal V}T_U(W)\le\bigoplus_{\mathcal V}\ann_U(W)
  \le\bigoplus_{\mathcal V}A_W$. Conversely, for each
  $W\in\mathcal V$, Lemma~\ref{lem:upward-expansion} gives
  $A_W\le\bigoplus_{W\subseteq W'\subseteq U}T_U(W')$, and every such
  $W'$ belongs to~$\mathcal V$ by closure under supersets; summing
  over $\mathcal V$ and collapsing duplicate summands by idempotence
  of~$\oplus$,
  $\bigoplus_{\mathcal V}A_W\le\bigoplus_{\mathcal V}T_U(W)$, which
  closes the cycle of inequalities.

  (iii) Absorptivity gives $a\otimes b\le a$, hence $A_W\le A_{W'}$
  whenever $W'\subseteq W$: every summand of
  $\bigoplus_{\mathcal V}A_W$ is dominated by one over
  $\min\mathcal V$, and $x\le y$ implies $x\oplus y=y$.
\end{proof}
Neither absorptivity nor distributivity of $\otimes$ over~$\ominus$
enters parts~(i) and~(ii). The proposition is also indifferent to
rewritability: it applies, e.g., to $\text{\sql{SUM}}(t)\geq c$ over
non-negative values, for which no aggregate-free rewriting is
available. It yields the announced refinement of
Theorem~\ref{th:correctness}, in which the distributivity hypothesis
is dropped:

\begin{theorem}\label{th:monotone}
  Let $\mathbb K$ be a commutative \emph{absorptive} m-semiring and
  $\hat I$ a $\mathbb K$-instance.
  \begin{enumerate}[(i)]
    \item
      \lean{HavingQueryCorrectness}{Query.joinCount_monotone_correct}
      \lean{HavingJoinCompositional}{countHaving_site_rewrite_monotone}
      In the setting of Theorem~\ref{th:correctness},
      $\angsem{\hat I}{Q^{\geq C}_1}=\angsem{\hat I}{Q^{\geq C}_2}$.
    \item
      \lean{HavingMonotone}{MonoCond.site_rewrite}
      More generally, let $q$ be an aggregate-free query with a
      group-key column $\#1$ and an occurrence-identifier column, let
      $\gamma^{\preceq}_{\#1}[\dots](q)$ compute the aggregates
      mentioned by a monotone condition $\psi$ over the atoms
      $\text{\sql{COUNT(*)}}\geq C$, $>C$,
      $\text{\sql{MIN}}(t)\leq c$, $<c$, $\text{\sql{MAX}}(t)\geq c$,
      $>c$, and let the positive aggregate-free rewriting $Q_\psi$ be
      defined compositionally: an atom on \sql{COUNT(*)} is rewritten
      as $Q^{\geq C}_2$ of Theorem~\ref{th:correctness}, an atom on
      \sql{MIN} or \sql{MAX} as
      $\varepsilon(\Pi_{\#1}(\sigma_{t\op c}(q)))$, a conjunction as
      the join of the two rewritings on the group key, projected back
      to it and duplicate-eliminated, and a
      disjunction as their set union. Then
      $\angsem{\hat
      I}{\Pi_{\#1}(\sigma_\psi(\gamma^{\preceq}_{\#1}[\dots](q)))}
      =\angsem{\hat I}{Q_\psi}$.
  \end{enumerate}
\end{theorem}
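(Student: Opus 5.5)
The plan is to reduce both parts, group key by group key, to Proposition~\ref{prop:monus-cancellation}, and then to match the result against the provenance of the positive rewriting. Fix a group key $a$ and let $U$ be its occurrence sequence, regarded as a finite set as in the proof of Theorem~\ref{th:correctness}; write $A_W$ for products of annotations. Absorptivity implies idempotence, so part~(ii) of Proposition~\ref{prop:monus-cancellation} applies, and absorptivity also makes part~(iii) available.

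For part~(i), take $\mathcal V\defeq\{W\subseteq U\mid |W|\ge C\}$. Since $C\ge1$, every member is non-empty, and the family is closed under supersets. The \sql{HAVING} side is $\bigoplus_{W\in\mathcal V}\ann_U(W)$ by Definition~\ref{def:semantics}, and no distributivity is needed here because we keep $\ann_U(W)$ rather than rewriting it as $T_U(W)$. By Proposition~\ref{prop:monus-cancellation}(iii), this sum equals $\bigoplus_{W\in\min\mathcal V}A_W=\bigoplus_{|W|=C}A_W=S_C(U)$. The computation $(J_{\ge})$ in the proof of Theorem~\ref{th:correctness} shows that the self-join side is $S_C(U)$, and that computation used no monus. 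Groups annotated $\mathbb 0$ are absent on both sides, which concludes part~(i).

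For part~(ii), I would proceed by structural induction on $\psi$, proving that for each group key $a$ the annotation of $a$ in $\angsem{\hat I}{Q_\psi}$ equals $\predsem{}{}{\psi}$ at $a$, read as $\mathbb 0$ when $a$ is absent.
\begin{itemize}
\item \sql{COUNT(*)} atoms are handled by part~(i); the atom $>C$ is the atom $\ge C+1$.
\item For an atom $\text{\sql{MIN}}(t)\le c$, the worlds satisfying it are the non-empty $W$ that contain some $w$ with $t(w)\le c$. This family is upward closed, and its minimal elements are the singletons $\{w\}$ with $t(w)\le c$. Proposition~\ref{prop:monus-cancellation}(iii) therefore gives $\bigoplus_{t(w)\le c}\alpha_w$, which is exactly the annotation of $a$ in $\varepsilon(\Pi_{\#1}(\sigma_{t\le c}(q)))$ by the semantics of selection, projection and $\varepsilon$. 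The other \sql{MIN}/\sql{MAX} atoms are symmetric.
\item A conjunction is interpreted by $\otimes$ of the predicate provenances. On the rewriting side, the join on the group key of two duplicate-eliminated relations has exactly one tuple per key, annotated by the product, and the subsequent projection and $\varepsilon$ keep that single annotation.
\item A disjunction is interpreted by $\oplus$. Set union $\varepsilon(\cdot\uplus\cdot)$ sums the annotations of $a$ from both sides, which is $\oplus$ provided we identify a missing tuple with a $\mathbb 0$ annotation.
\end{itemize}

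The main obstacle is bookkeeping rather than algebra: the $\Pi_{\#1}\sigma_\psi\gamma$ side produces one row per group key annotated by the predicate provenance, while the rewriting may omit keys. I would handle this exactly as in Theorem~\ref{th:correctness}, by the convention that $\mathbb 0$-annotated tuples are absent, or equivalently by padding with $\mathbb 0$ rows. I also need to check that the $\delta$ annotation of the grouping output is correctly dropped, which the second case of Definition~\ref{def:semantics} already does.
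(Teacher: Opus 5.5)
Your proposal is correct and follows the paper's proof essentially step for step. You work group key by group key and reduce each atom to Proposition~\ref{prop:monus-cancellation}(ii)--(iii), using upward-closed families whose minimal elements are the $C$-subsets or the singletons. You reuse the monus-free self-join computation $S_C(U)$ from Theorem~\ref{th:correctness}, and you match $\otimes$/$\oplus$ with join/set union by induction on $\psi$, under the convention that $\mathbb 0$-annotated rows are absent.
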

\begin{proof}
  Fix a group key~$a$ and let
  $U\defeq U^{\preceq}_{\angsem{\hat I}{q},(a)}$ be the sequence of
  the occurrences of its group, with annotations
  $(\alpha_w)_{w\in U}$; as in the proof of
  Theorem~\ref{th:correctness}, a group whose annotation on either
  side is $\mathbb 0$ is absent from it, so it suffices to prove the
  annotations equal, key by key.

  (i) By Definition~\ref{def:semantics}, the annotation of the group
  on the left-hand side is $\bigoplus_{W\in\mathcal V}\ann_U(W)$,
  where $\mathcal V\defeq\{W\neq\emptyset:|W|\geq C\}$. This family
  is closed under supersets
  and its minimal elements are the $C$-element subsets, so
  Proposition~\ref{prop:monus-cancellation}(ii)--(iii) (absorptivity
  implies idempotence) give
  $\bigoplus_{\mathcal V}\ann_U(W)=\bigoplus_{|W|=C}A_W=S_C(U)$. On
  the right-hand side, as computed in the proof of
  Theorem~\ref{th:correctness}, the join chain of $Q^{\geq C}_2$
  derives each $C$-element subset of the group exactly once, thanks to
  the strict comparisons on the identifier column, and $\varepsilon$
  merges the copies of the key by~$\oplus$, giving $S_C(U)$ as well;
  this computation uses no hypothesis on the m-semiring.

  (ii)
  \lean{Having}{Having.sum_ann_meet}
  \lean{HavingSemantics}{Having.havingProv_existential}
  For an atom $\text{\sql{MIN}}(t)\leq c$ (the other \sql{MIN}
  and \sql{MAX} atoms are identical, and \sql{COUNT(*)} atoms are
  part~(i), with $>C$ being $\geq C+1$), the valid worlds are the
  non-empty $W$ meeting the set
  $S\defeq\{w\in U: t(w)\leq c\}$ of qualifying occurrences; this
  family is closed under supersets and its minimal elements are the
  singletons $\{w\}$, $w\in S$, so
  Proposition~\ref{prop:monus-cancellation} gives the predicate
  provenance $\bigoplus_{w\in S}\alpha_w$, which is exactly the
  annotation of key~$a$ in
  $\varepsilon(\Pi_{\#1}(\sigma_{t\leq c}(q)))$. For the Boolean
  combinations, the semantics of Section~\ref{sec:semantics}
  interprets $\land$ by $\otimes$ and $\lor$ by $\oplus$ of the
  predicate provenances, which is precisely what joining the two
  rewritings on the group key, respectively taking their set union,
  computes on the annotation of each key; the claim follows by
  induction on~$\psi$. As for Theorem~\ref{th:correctness}, the Lean
  statement pads the rewriting with a $\mathbb 0$-annotated row per
  failing group key, matching the convention that
  $\mathbb 0$-annotated tuples are absent; it fixes a three-column
  base query (group key, aggregated value, occurrence identifier), as
  Theorem~\ref{th:correctness} fixes a two-column one.
\end{proof}

Three remarks are in order. First, the equality of
Theorem~\ref{th:monotone}(ii) is for the rewritings exhibited, and
cannot hold for every classically equivalent positive rewriting, since
provenance depends on how a query is written: rewriting
$\text{\sql{COUNT(*)}}\geq 1$ as
$\varepsilon(\Pi_{\#1}(\sigma_{\#2=\#4}(q\bowtie_{\#1=\#3}q)))$, which
pairs each occurrence with itself, is classically equivalent to
$\varepsilon(\Pi_{\#1}(q))$ but annotates the group by
$\bigoplus_w\alpha_w\otimes\alpha_w$, which in the (absorptive)
tropical semiring over~$\mathbb N$ is $\min_w2\alpha_w$ rather than
$\min_w\alpha_w$. The rewritings of Theorem~\ref{th:monotone} use
every occurrence at most once per derivation, as the tie-break chain
of $Q^{\geq C}_2$ does. Second, no assumption is placed on the base
query~$q$, which may in particular use the difference operator: the
proof manipulates the annotations of the group's occurrences as
opaque elements of~$\mathbb K$. Third, the restriction to positive
rewritings is what the counterexample of
Theorem~\ref{th:correctness}(ii) shows to be unavoidable: as soon as
the rewriting requires the difference operator, as for $\leq$
and~$=$, distributivity of $\otimes$ over~$\ominus$ becomes necessary
even over absorptive m-semirings. The separation is sharp: on the very
m-semiring and instance of that counterexample,
$\text{\sql{COUNT(*)}}\geq 1$ and $\geq 2$ agree with their rewritings
\lean{HavingQueryCounterexamples}{HavingQueryCounterexamples.ChainFive.query_ge_agree}
\lean{HavingQueryCounterexamples}{HavingQueryCounterexamples.ChainFive.query_ge_two_agree}
while $=1$ does not.
\end{toappendix}

\inlinepar{Specialization, probabilistic evaluation, and complexity.}
Much of the practical value of semiring provenance comes from the fact
that provenance computed once in a sufficiently general semiring can be
specialized to any other one by applying a homomorphism, instead of
re-evaluating the query. This is the ``compile once, evaluate many''
principle that ProvSQL implements by storing a single provenance circuit
and evaluating it in the semiring of interest~\cite{sen2026provsql}. The
principle survives the introduction of aggregate comparisons.
For a homomorphism of m-semirings $h:\mathbb K\to\mathbb K'$, we write $h(\hat I)$ for the $\mathbb
K'$-instance obtained from~$\hat I$ by applying $h$ to every annotation.

\begin{toappendix}
\subsection{Specialization, Probabilistic Evaluation, and Complexity}
We prove here the results of Section~\ref{sec:results} on specialization
by homomorphisms, probabilistic evaluation, and complexity.
\end{toappendix}
\begin{propositionrep}\label{prop:hom}
  \lean{AggQueryHom}{AggQuery.evaluateAnnotated_hom}
  Let $h:\mathbb K\to\mathbb K'$ be an m-semiring homomorphism. Then, for
  every \sql{HAVING} query $Q$
  and every $\mathbb K$-instance $\hat I$,
  \(
    \angsem{h(\hat I)}{Q}=h\bigl(\angsem{\hat I}{Q}\bigr).
  \)
\end{propositionrep}

\begin{appendixproof}
The syntax of Definition~\ref{def:general} mentions no annotation, so both
sides evaluate the same query and only the instance differs; we proceed
by structural induction on~$Q$. The induction hypothesis cannot be an
equality of intermediate results, because the semantics does not only
combine annotations, it also \emph{inspects} them, in two places:
\begin{enumerate}[(a)]
\item the supersede condition tests occurrence sequences for equality,
  so a non-injective $h$ may identify two groups that $\mathbb K$
  distinguishes, and drop on the $\mathbb K'$ side a pending factor that
  the $\mathbb K$ side keeps;
\item Definition~\ref{def:possible_world} orders occurrences of equal
  tuples arbitrarily; any concrete choice may break ties using the
  annotations, and $h$ need not preserve it.
\end{enumerate}

\smallskip\noindent\emph{The invariant.} Say that an annotated tuple
$(u',\alpha')$ over $\mathbb K'$ \emph{simulates} an annotated tuple
$(u,\alpha)$ over $\mathbb K$ when
\begin{enumerate}[(i)]
\item $u'$ and $u$ agree on their regular columns;
\item in each aggregate column, the two tokens have the same aggregate
  function, and the occurrence sequence of the one in~$u'$ is obtained
  from the $h$-image of the one in~$u$ by permuting occurrences
  \emph{within blocks of equal value};
\item $\alpha'=h(\alpha)$, where the annotations are the elements of
  $\mathbb K'$ and $\mathbb K$ \emph{denoted} by the factored forms of
  Definition~\ref{def:general}, i.e., taken after the pending factors
  have been multiplied in.
\end{enumerate}
We prove by induction on $Q$ that the tuples of
$\angsem{h(\hat I)}{Q}$ and those of $\angsem{\hat I}{Q}$ are in a
multiplicity-preserving bijection under which each simulates its
counterpart. This suffices: (i) and (ii) give equal data parts, since the
value of an aggregate column is the aggregate of the values of its
token's occurrences and a permutation within blocks of equal value leaves
that sequence of values unchanged, and (iii) is the required equality of
annotations.

\smallskip\noindent\emph{Two neutrality lemmas.} Both say that a
rearrangement of the factored annotation does not change the element it
denotes, on either side separately.
\begin{enumerate}[(1)]
\item \emph{Cashing is neutral} \lean{AggQueryHom}{GenAnn.finalize_cash}:
  for a factored annotation $\lAngle\beta,P\rAngle$ and any
  $P_0\msubseteq P$, the pair
  $\lAngle\beta\otimes\bigotimes_{U\in P_0}\delta(\bigoplus_{(u,\alpha)\in
  U}\alpha),\,P\msetminus P_0\rAngle$ denotes the same element, by
  associativity and commutativity of~$\otimes$.
\item \emph{Superseding is neutral}
  \lean{AggQueryHom}{GenPred.predsem_delta_absorb}: if $\psi$ entails the
  existence of the group of $U$ and every token it compares has
  occurrence sequence~$U$, then
  \[
    \predsem{\hat I}{q}{\psi}(u)\otimes
      \delta\bigl(\bigoplus_{(u,\alpha)\in U}\alpha\bigr)
    =\predsem{\hat I}{q}{\psi}(u).
  \]
  Dropping the pending factor of $U$ next to $\predsem{\hat
  I}{q}{\psi}(u)$ is therefore invisible, whether or not the evaluation
  actually drops it.
\end{enumerate}
Lemma~(2) is the substantive use of the $\delta$-absorption axiom of
Section~\ref{sec:preliminaries}. Negation being pushed to the atoms,
$\psi$ is a $\land/\lor$-combination of atoms. For an aggregate atom, by
distributivity of $\otimes$ over $\oplus$ it suffices to absorb the guard
into one summand $\ann_U(W)\otimes\chi_{\op}(\cdots)$ of the
possible-world sum, with $W\neq\emptyset$: picking an occurrence $w\in W$,
its annotation $\alpha_w$ is a factor of $\ann_U(W)$, and, writing
$\bigoplus_{(u,\alpha)\in U}\alpha=\alpha_w\oplus\gamma$, $\delta$-absorption gives
$\alpha_w\otimes\delta(\alpha_w\oplus\gamma)=\alpha_w$. A conjunction
entails existence as soon as one conjunct does, and that conjunct absorbs
the guard while the other factor is left untouched; a disjunction entails
only when both disjuncts do, and then each summand absorbs it. A regular
atom entails nothing, and indeed its characteristic value $\chi_{\op}$
does not absorb the guard -- which is why the entailment test is part of
the semantics.

\smallskip\noindent\emph{The induction.} We record first that the
operators which inspect \emph{data} -- $\varepsilon$, $-$ and
$\gamma^{\preceq}$ -- apply to all-regular relations by the kind
discipline of Definition~\ref{def:general}. Clause~(ii) is vacuous there,
so on their inputs simulation is plain equality of tuples together with
$\alpha'=h(\alpha)$.
\begin{itemize}
\item \emph{Base relations.} Every tuple of $h(\hat I)$ is the $h$-image
  of the corresponding tuple of $\hat I$, with nothing pending.
\item \emph{$\gamma^{\preceq}$.} By the remark above the two inputs carry
  the same tuples, so the two sides produce the same group keys. Fix a
  key: the occurrence sequence built on the $\mathbb K'$ side and the
  $h$-image of the one built on the $\mathbb K$ side are both sorted by
  the tuple $u$ and contain the same occurrences, hence differ only
  inside blocks of equal tuples; as the aggregated term reads the
  tuple only, this is a permutation of the kind allowed by~(ii).
  This is where obstruction~(b) is discharged. The output annotation is
  $\delta(\bigoplus_{(u,\alpha)\in U}\alpha)$ on either side, and $h$
  commutes with $\oplus$ and with~$\delta$, which gives~(iii).
\item \emph{Selection.} If $\psi$ has no aggregate atom, its terms range
  over regular columns only, whose values simulation preserves by~(i), so
  the two sides retain matching tuples. Otherwise, by
  lemmas~(1) and~(2), the annotation produced on each side denotes
  $\predsem{\hat I}{q}{\psi}(u)$ times the element denoted by the input
  annotation, \emph{regardless of which pending factors that side
  superseded}; this discharges obstruction~(a). It remains to see that
  the predicate provenance itself commutes with~$h$, which it does
  because it is built from $\oplus$, $\otimes$, $\ominus$ and values
  $\chi_{\op}$ that are $\mathbb 0$ or $\mathbb 1$ on either side, the
  aggregate values being untouched by the pushforward; and it is
  invariant under the permutations of~(ii), by the same argument as in
  the $\gamma^{\preceq}$ case.
\item \emph{Projection.} Terms over regular columns evaluate equally
  by~(i) and~(ii), and aggregate columns are copied verbatim. The two
  sides may cash different pending factors, since they compare occurrence
  sequences, but by lemma~(1) this does not change the denoted
  annotation.
\item \emph{Product and union.} The product multiplies the concrete parts
  of the two annotations and concatenates their pending factors, so the
  denoted annotations are multiplied, and $h$ commutes with $\otimes$;
  the union is the multiset sum of the two results and combines no
  annotation at all.
\item \emph{$\varepsilon$ and $-$.} All-regular, so by the remark above
  the two inputs have equal data parts and $h$-related annotations.
  Duplicate elimination $\oplus$-sums the annotations of equal tuples,
  and $h$ commutes with $\oplus$; difference subtracts them with
  $\ominus$, and $h$ commutes with $\ominus$ by definition.
\end{itemize}
Every operator of Definition~\ref{def:general} is covered, which
concludes the induction and the proof.
\end{appendixproof}

In addition to this important result, our m-semiring semantics
for \sql{HAVING} also provides an implementable semantics for probabilistic query
evaluation over tuple-independent probabilistic databases, in the classical
sense of~\cite{suciu2011probabilistic}: below, $\Pr(I\models Q)$ is the probability
that a Boolean query $Q$ holds on a random sub-instance of~$I$, and the
probability of a Boolean function is that of its being true under a random
valuation.

\begin{propositionrep}
  \label{prop:pqe-equiv}
  \lean{AggQueryProbability}{AggQuery.boolean_pqe}
  Let $Q$ be a \emph{Boolean} \sql{HAVING} query
  and $I$ be an instance where each tuple $u$ is associated with an
  independent probability $\Pr(u)\in [0,1]$.
  Then $\Pr(I\models\nobreak Q)=\Pr(\angsem{\hat I}{Q})$, where $\hat I$ is a
  $\mathbb{B}[X]$-instance whose tuples are annotated with independent
  variables~$x\in X$ with the same probability as the corresponding tuple
  in the probabilistic instance.
\end{propositionrep}
\begin{proof}
  No restriction is placed on where the aggregate comparisons occur:
  $Q$ is interpreted by Definition~\ref{def:general}, so a comparison may
  be applied to tokens carried through projections, joins, unions and
  further selections, arbitrarily far from the grouping that produced
  them.

By definition \cite{suciu2011probabilistic}, probabilistic query
evaluation under the possible-world semantics is
\[
  \Pr(I\models Q)=\sum_{\substack{S\msubseteq I,\\ S\models Q}}\Pr(S),
  \qquad
  \Pr(S)\defeq\prod_{u\in S}\Pr(u)\prod_{u\in I\msetminus S}(1-\Pr(u)),
\]
where $S\msubseteq I$ ranges over the sub-instances of $I$ and $S\models Q$
means that $Q$ evaluates to true on $S$. As $\hat I$ is tuple-independent,
these sub-instances are in bijection with the valuations
$\nu\colon X\to\{\bot,\top\}$ through
$S_\nu\defeq\mset{u\in\hat I\mid\nu(x_u)=\top}$, with
$\Pr(S_\nu)=\Pr(\nu)=\prod_{u:\,\nu(x_u)=\top}\Pr(u)\prod_{u:\,\nu(x_u)=\bot}(1-\Pr(u))$.
Since, by definition, the probability of a Boolean function satisfies
$\Pr(\angsem{\hat I}{Q})=\sum_{\nu\models\angsem{\hat I}{Q}}\Pr(\nu)$,
comparing the two sums shows that the claim
$\Pr(I\models Q)=\Pr(\angsem{\hat I}{Q})$ is \emph{equivalent} to the
characteristic property
\begin{equation}\label{eq:char-prop}
  \nu\bigl(\angsem{\hat I}{Q}\bigr)=\top
  \iff
  S_\nu\models Q
  \qquad\text{for every valuation }\nu,
\end{equation}
i.e., that the Boolean provenance is true under $\nu$ exactly on the possible
worlds where $Q$ holds.

For every operator of $Q$ but the aggregate comparison, \eqref{eq:char-prop}
is exactly the compatibility of the $\mathbb{B}[X]$-semantics with
probabilistic query evaluation for queries \emph{without} aggregate
comparison, established in~\cite[Theorem~12]{sen2026provsql}. It thus only
remains to check that \eqref{eq:char-prop} is preserved by the
aggregate-comparison selection of Definition~\ref{def:semantics}, i.e., that
our \sql{HAVING} semantics matches the Boolean one.

Let $\psi=(t\op t')$ be such a comparison, and let $U$ be the occurrence
sequence carried by the token it compares, i.e., the possible-world
sequence $U^{\preceq}_{\angsem{\hat I}{q'},\vec v}$ of the group of the
originating $\gamma^{\preceq}(q')$, wherever that grouping sits in $Q$.
Fix a valuation~$\nu$. In $\mathbb{B}[X]$, where $\oplus=\lor$, $\otimes=\land$ and
$a\ominus b=a\land\lnot b$, the annotation of a world $W\sqsubseteq U$ is
\[
  \ann_U(W)=\Bigl(\bigwedge_{(u,x_u)\in W}x_u\Bigr)\land
            \Bigl(\bigwedge_{(u,x_u)\in U\setminus W}\lnot x_u\Bigr),
\]
which is true under $\nu$ if and only if $W$ is exactly the set of occurrences
of the group present in $S_\nu$, namely
$W=W_\nu\defeq\mset{(u,x_u)\in U\mid\nu(x_u)=\top}$. Hence, in the predicate
provenance
\(
  \predsem{\hat I}{q}{\psi}(u)=\bigoplus_{\emptyset\neq W\sqsubseteq U}
  \ann_U(W)\otimes\chi_{\op}(\mathrm{val}_t(W),\mathrm{val}_{t'}(W)),
\)
a single disjunct is active under $\nu$ -- the one for $W_\nu$ -- so
\[
  \nu\bigl(\predsem{\hat I}{q}{\psi}(u)\bigr)=\top
  \iff
  W_\nu\neq\emptyset \;\text{ and }\;
  \mathrm{val}_t(W_\nu)\op\mathrm{val}_{t'}(W_\nu),
\]
that is, if and only if the group of~$u$ is non-empty in $S_\nu$ and its
aggregate values there satisfy the comparison. This is precisely the ordinary
Boolean evaluation of the \sql{HAVING} predicate $\psi$ on the possible
world~$S_\nu$.

For a Boolean combination $\psi$ of atoms over the same grouping, the
claim extends by structural induction: negation is pushed
to the atoms (Definition~\ref{def:semantics}), where it complements the
comparison operator, and since $\otimes=\land$ and $\oplus=\lor$ act
pointwise on $\mathbb B[X]$ while all aggregate atoms of that grouping
share the same active world $W_\nu$, the predicate provenance
$\predsem{\hat I}{q}{\psi}(u)$ is true under $\nu$ if and only if
$W_\nu\neq\emptyset$ and $\psi$ holds classically on $W_\nu$. A regular
atom mixed into $\psi$ contributes its characteristic value
$\chi_{\op}$, i.e., its classical truth value on the output tuple, so the
same equivalence holds -- except that $\psi$ may then be satisfied in a
world where the group is empty, which is exactly the case in which
Definition~\ref{def:general} keeps the group's pending factor.

It remains to account for those pending factors. A group whose tokens are
never compared, or that is compared by a predicate not entailing its
existence, contributes to the annotation of a row the factor
$\delta\bigl(\bigoplus_{(u,x_u)\in U}x_u\bigr)$ of
Definition~\ref{def:general}. Over $\mathbb B[X]$ the operator $\delta$ is
the identity (Section~\ref{sec:preliminaries}), so this factor is
$\bigvee_{(u,x_u)\in U}x_u$, which is true under $\nu$ if and only if
$W_\nu\neq\emptyset$, i.e., if and only if the group is non-empty in
$S_\nu$ -- precisely the condition under which the grouping produces that
row on $S_\nu$. This is where the choice $\delta\defeq\mathrm{id}$ over
$\mathbb B[X]$ is used: the support indicator would make the factor true
in every world.

Finally, $\nu(\angsem{\hat I}{Q})=\top$ if and only if the annotation
of some output row is true under~$\nu$, i.e., by the two paragraphs
above, if and only if some row of $Q$ on $S_\nu$ survives every
comparison it undergoes and every group it depends on is non-empty in
$S_\nu$; that is, exactly when the output of $Q$ on $S_\nu$ is non-empty.
Property~\eqref{eq:char-prop} is therefore preserved by every operator of
Definition~\ref{def:general} and, together
with~\cite[Theorem~12]{sen2026provsql} for the operators already covered
there, holds for the whole query~$Q$. We conclude that
$\Pr(I\models Q)=\Pr(\angsem{\hat I}{Q})$.
\end{proof}

Unsurprisingly, in semirings whose annotations may be \emph{independent
generators} (so that distinct possible worlds carry distinct, non-cancelling
annotations), evaluating the \sql{HAVING} semantics is intractable. This is
the case of the provenance semirings $\mathbb{B}[X]$ and $\mathbb{N}[X]$, for
which deciding whether a provenance is non-zero is NP-complete, and this
already in \emph{data} complexity, i.e., for a single fixed query.

\begin{propositionrep}\label{prop:having-nphard}
  \lean{HavingComplexity}{Provenance.Complexity.havingSumNonzero_NP_complete}
  Let $\mathbb{K}\in\{\mathbb{B}[X],\mathbb{N}[X]\}$. There is a fixed
  Boolean \sql{HAVING} query~$q$ such that deciding, given as sole input a
  $\mathbb{K}$-instance~$\hat I$ whose tuples are annotated by distinct
  variables, whether the provenance of~$q$ over~$\hat I$ is
  non-$\mathbb{0}_\mathbb{K}$, is NP-complete. Moreover, deciding whether
  the provenance of a fixed \sql{HAVING} query over a
  $\mathbb B[X]$-instance is non-$\mathbb 0$ is in NP.
\end{propositionrep}

\begin{proof}
  \emph{Membership.} Let first $\mathbb K=\mathbb B[X]$ and $Q$ be an
  arbitrary fixed \sql{HAVING} query. Guess a valuation
  $\nu\colon X\to\{\bot,\top\}$ of the variables of~$\hat I$. Evaluation
  under~$\nu$ is an m-semiring homomorphism $h_\nu\colon\mathbb
  B[X]\to\mathbb B$ (it commutes with $\lor$, $\land$, $\land\lnot$ and
  with $\delta=\mathrm{id}$), so by Proposition~\ref{prop:hom},
  $h_\nu(\angsem{\hat I}{Q})=\angsem{h_\nu(\hat I)}{Q}$, the
  $\mathbb B$-provenance of $Q$ over the sub-instance $S_\nu$ of the tuples
  annotated $\top$ (those annotated $\bot$ being absent). By
  Proposition~\ref{prop:pqe-equiv}, applied with all probabilities equal
  to~$1$, this is $\top$ if and only if $S_\nu\models Q$, which is decided
  in polynomial time in data complexity. Since $\angsem{\hat I}{Q}\neq\mathbb
  0$ if and only if some valuation makes it true, the problem is in NP.
  For $\mathbb K=\mathbb N[X]$, the fixed query~$q$ below, and an instance
  annotated by distinct variables, every factor $\mathbb
  1\ominus\bigoplus_{i\notin W}x_i$ equals~$\mathbb 1$, so the provenance is
  a polynomial with nonnegative coefficients whose monomials are exactly the
  valid worlds; guessing one world and checking its sum places the problem
  in NP
  \lean{HavingComplexity}{Provenance.Complexity.havingSumNonzero_mem_NP}.

  \emph{Hardness.} We reduce from subset-sum \cite{DBLP:conf/coco/Karp72}, a
  classic NP-hard problem: given a finite multiset $M=\mset{m_1,\dots,m_n}$ of
  natural numbers and a positive integer $B$, decide whether some
  sub-multiset of $M$ sums to exactly~$B$.

  The reduction outputs an \emph{instance only}: the query below is fixed
  once and for all, and the target~$B$ is carried by the data. We construct a
  $\mathbb{K}$-instance $\hat I$ formed of a binary relation $R(b,v)$
  containing the $n$ tuples $(B,m_1),\dots,(B,m_n)$, annotating tuple
  $(B,m_i)$ with a \emph{distinct variable} $x_i$, and take the query
  \begin{minted}{postgresql}
q: SELECT DISTINCT 1 FROM R GROUP BY b HAVING SUM(v) = b
  \end{minted}
  that is, in our relational algebra,
  \[
    q=\varepsilon\Bigl(\Pi_{1}\bigl(
        \sigma_{\#2=\#1}\bigl(
          \gamma^{\preceq}_{\#1}[\#2{:}f_{\mathrm{SUM}}](R)\bigr)
      \bigr)\Bigr),
  \]
  where $\gamma^{\preceq}_{\#1}[\#2{:}f_{\mathrm{SUM}}]$ groups $R$ by its
  first column and aggregates the $v$-values of each group with \sql{SUM}, so
  that an output tuple is $(b,s)$ with $b$ the group key and $s$ the
  aggregate column; $\sigma_{\#2=\#1}$ compares the aggregate column with the
  regular column~$\#1$, which Definition~\ref{def:semantics} permits since
  the value in a world~$W$ of a side of regular type is $s(u)$, a data value
  read off the output tuple rather than a query constant; and $\Pi_1$
  projects onto the constant~$1$, whose occurrences $\varepsilon$ merges
  (the order $\preceq$ being irrelevant since \sql{SUM} is commutative).

  All $n$ tuples carry the same group key~$B$, so there is a single group.
  Following Definition~\ref{def:semantics}, and writing
  $U\defeq U^{\preceq}_{\angsem{\hat I}{R},(B)}$, the provenance of the
  result $(1)$ is
\[
   P\defeq\bigoplus_{\emptyset\neq W\sqsubseteq U}
  \ann_U(W)\otimes
  \chi_{=}\bigl(\mathrm{agg}_{\#2,f_{\mathrm{SUM}}}(W),\ B\bigr)
=
\bigoplus_{\substack{\emptyset\neq W\sqsubseteq U\\ \sum_{(u,\alpha)\in W} u_2=B}}
\ann_U(W),
\]
where $f_{\mathrm{SUM}}$ denotes the \sql{SUM} aggregate and $u_2$ the
$v$-value of the tuple~$u$. Each possible world
$W$ is a sub-multiset of~$M$, and because the $x_i$ are distinct variables its
annotation is non-$\mathbb{0}_\mathbb{K}$ and the worlds do not cancel: in
$\mathbb{B}[X]$, $\ann_U(W)=\bigwedge_{i\in W}x_i\wedge\bigwedge_{i\notin
W}\lnot x_i$ are distinct minterms
\lean{HavingComplexity}{Provenance.Complexity.havingSumProvBool_ne_zero_iff}; in
$\mathbb{N}[X]$, the monus factor
$\mathbb{1}\ominus\bigoplus_{i\notin W}x_i$ reduces to $\mathbb{1}$ (the
excluded variables have no constant term), so $\ann_U(W)=\bigotimes_{i\in
W}x_i$ are distinct monomials
\lean{HavingComplexity}{Provenance.Complexity.havingSumProv_ne_zero_iff}. Hence $P\neq\mathbb{0}_\mathbb{K}$ if and only
if some non-empty sub-multiset of~$M$ sums to~$B$, i.e., if and only if the
subset-sum instance has a solution. The reduction is polynomial, the
values being written in binary
\lean{HavingComplexity}{Provenance.Complexity.havingSumNonzeroBool_faithful}%
\lean{HavingComplexity}{Provenance.Complexity.havingSumNonzeroHow_faithful},
and, the
query~$q$ being independent of the subset-sum instance, it establishes
hardness in data complexity, which concludes.
\end{proof}

We shall see next that we can isolate cases where the
semantics can be evaluated tractably.

\section{Algorithms}
\label{sec:algorithms}

In this section we describe specialized procedures for evaluating the valid possible worlds for selections involving aggregate comparisons.


All our algorithms are applied \emph{per group key} $\vec v$, on the
$\preceq$-ordered sequence of group-member
occurrences~$U\defeq\nobreak U^{\preceq}_{\angsem{\hat I}{q'},\vec v}$ of the
\emph{pre-aggregation} sub-query $q'$
(Definition~\ref{def:possible_world}); the order being fixed, we
identify a possible world $W\sqsubseteq U$ with its set of positions, an
index set $W\subseteq [|U|]$. Terms are evaluated on tuples produced
by~$q'$, whose arity is denoted~$k$.

We first provide a dynamic programming algorithm for \sql{SUM} comparisons that incrementally constructs all possible worlds producing the desired sum, named \textsc{ValidSum}.
It processes each tuple iteratively and maintains, for every possible value of the sum, the collection of worlds that realize it.
For a new tuple, previously constructed possible worlds can either remain unchanged or be extended by inclusion of this new tuple, hence creating new worlds having larger sums.
After all the tuples have been encountered, the algorithm returns a collection of non-empty worlds whose total sum satisfies the desired comparison.
The full listing of this algorithm is given in the appendix.

The algorithm works also for the special case where $t=1$, i.e., $\text{\sql{COUNT(*)}}\op C$ comparisons.
However, for this case we can use a more efficient enumeration algorithm, named \textsc{ValidCount}, exploiting the unit-weight sum structure of \sql{COUNT} aggregates.
For $\text{\sql{COUNT(*)}}\op C$ we only rely on the fact that the result of a \sql{COUNT} depends only on the number of occurrences of the group key values of the selected tuples.
Our algorithm recursively constructs worlds by deciding for each tuple-occurrence if it is included in or excluded from the current world.
This enumerates subsets of the desired sizes.
Finally, depending on the comparison operator, we combine the appropriate size-ranges to get a collection of non-empty worlds that satisfy the comparison.
The full listing of this algorithm is given in the appendix.

For other aggregate functions, using brute-force enumeration of non-empty worlds $W\sqsubseteq\nobreak U$ will always give the correct answer by our semantics.
When a side of the comparison mentions several aggregates of the group, as in \sql{SUM(a) > 2 * COUNT(*)}, the valid worlds are those~$W$ on which the comparison holds with both sides evaluated in~$W$. \textsc{ValidSum} and \textsc{ValidCount} do not apply, and our implementation enumerates the worlds by brute force.
\begin{toappendix}
\label{appendix:algorithms}
This appendix gives the full listings of \textsc{ValidSum}
(Algorithm~\ref{alg:sum_dp}) and \textsc{ValidCount}
(Algorithm~\ref{alg:count_enum}), then proves the results of
Section~\ref{sec:algorithms}.
\begin{algorithm}
  \small
  \caption{Enumeration of valid worlds for \sql{SUM}$(t)\op C$ at a fixed group key $\vec v$}
  \label{alg:sum_dp}

  \nlset{}\textbf{\textsc{ValidSum}}:\;

  \Input{a finite set of group-member occurrences $U \defeq U^{\preceq}_{\angsem{\hat I}{q'},\vec v}$, an arbitrary ordering $\preceq$ of $U$, a term $t$ of max-index $\le k$ such that $t(u)\in\NN$ for every $(u,\alpha)\in U$, a constant $C\in\NN$, and an operator ${\op}\in\{=,>,<,\ge,\le,\neq\}$}
  \Output{$\mathcal W =\{\,W\sqsubseteq U\mid W\neq\emptyset,\mathrm{agg}_{t,f_\Sigma}(W)\op C\,\}$}
  \BlankLine
  \Fn{\textsc{ValidSum}$(U,t,C,\op, \preceq)$}{
  \If{$\op$ is $\neq$}{
    \Return $\textsc{ValidSum}(U,t,C,<, \preceq) \cup \textsc{ValidSum}(U,t,C,>, \preceq)$\;
  }
 \BlankLine
  Let $((u_r,\alpha_r))_{r \in [|U|]}$ be the occurrences of $U$ ordered by $\preceq$\\
  $T \gets \mathrm{agg}_{t,f_\Sigma}([|U|]) = \sum_{r=1}^{|U|} t(u_r)$\;
  \If{($\op$ is $=$ \textbf{and} $C > T$) \textbf{or} ($\op$ is $>$ \textbf{and} $C \ge T$) \textbf{or} ($\op$ is $\ge$ \textbf{and} $C > T$) \textbf{or} ($\op$ is $<$ \textbf{and} $C = 0$)}{
    \Return $\emptyset$\;
  }
  \If{${\op}\in\{>,\ge\}$}{
    $J \gets T$\;
  }
  \ElseIf{$\op$ is $<$}{
    $J \gets \min(C-1,\ T)$\;
  }
  \Else{
    $J \gets \min(C,\ T)$\;
  }

  \BlankLine
  Initialize $dp[0] \leftarrow \{\emptyset\}$\;
  \For{$j \gets 1$ \KwTo $J$}{
    $dp[j] \leftarrow \emptyset$\;
  }

  \BlankLine
  $\mathrm{pref\_sum} \gets 0$\;

  \For{$i \gets 1$ \KwTo $|U|$}{
    $w \gets t(u_i)$\;
    $\mathrm{pref\_sum} \gets \mathrm{pref\_sum} + w$\;
    $j_{\max} \gets \min(J,\ \mathrm{pref\_sum})$\;

    \For{$j \gets j_{\max}, j_{\max}-1, \dots, w$}{
      $p \gets j - w$\;
      \If{$dp[p] \neq \emptyset$}{
        let $s \gets |dp[p]|$\;
        \For{$r \gets 1$ \KwTo $s$}{
          append $dp[p][r] \cup \{i\}$ to $dp[j]$\;
        }
      }
    }
  }

  \BlankLine
  \If{$\op$ is $=$}{
    $\mathcal W \leftarrow dp[C]$\;
  }
  \ElseIf{$\op$ is $>$}{
    $\mathcal W \leftarrow \bigcup_{j=C+1}^{J} dp[j]$\;
  }
  \ElseIf{$\op$ is $<$}{
    $\mathcal W \leftarrow \bigcup_{j=0}^{C-1} dp[j]$\;
  }
  \ElseIf{$\op$ is $\ge$}{
    $\mathcal W \leftarrow \bigcup_{j=C}^{J} dp[j]$\;
  }
  \Else{
    $\mathcal W \leftarrow \bigcup_{j=0}^{C} dp[j]$\;
  }

  $\mathcal W \leftarrow \mathcal W \setminus \{\emptyset\}$\;

  \Return $\mathcal W$\;
  }
\end{algorithm}
\end{toappendix}

The following theorem states that \textsc{ValidSum} and
\textsc{ValidCount} return exactly the index set of the $\oplus$-sum of
Definition~\ref{def:semantics}; the predicate provenance is then the
$\oplus$-sum of $\ann_U(W)$ over the returned worlds.
\begin{toappendix}
\begin{algorithm}
  \small
  \caption{Enumeration of valid worlds for \sql{COUNT}$(t)\op C$ at a fixed group key $\vec v$}
  \label{alg:count_enum}

  \nlset{}\textbf{\textsc{ValidCount}:}\;

  \Input{a finite set of group-member occurrences $U\defeq U^{\preceq}_{\angsem{\hat I}{q'},\vec v}$, a constant $C\in\NN$, and an operator ${\op}\in \{=,\neq,>,<,\ge,\le\} $}
  \Output{$\mathcal W =\{\,W\sqsubseteq U\mid W\neq\emptyset,|W|\op C\,\}$}
    \BlankLine
  $\mathcal W \gets \emptyset$\;

  \SetKwFunction{Combinations}{Combinations}
  \Fn{\Combinations{$i, x, W$}}{
    \If{$x = 0$}{
      append $W$ to $\mathcal W$\;
      \Return\;
    }
    \If{$i > |U|$}{
      \Return\;
    }
    \If{$|U|-i+1 < x$}{
      \Return\;
    }

    \tcp{Case 1: do not select occurrence $i$}
    \Combinations{$i+1, x, W$}\;

    \tcp{Case 2: select occurrence $i$}
    \Combinations{$i+1, x-1, W \cup \{i\}$}\;
  }

  \SetKwFunction{AddExact}{AddExact}
  \Fn{\AddExact{$x$}}{
    \If{$x\le 0$ \textbf{or} $x>|U|$}{\Return}
    \Combinations{$1, x, \emptyset$}\;
     }

  \BlankLine
  \Switch{$\op$}{
    \Case{$=$}{
      \AddExact{$C$}\;
    }
    \Case{$>$}{
      \For{$x\gets C+1$ \KwTo $|U|$}{\AddExact{$x$}}
    }
    \Case{$\ge$}{
      \For{$x\gets C$ \KwTo $|U|$}{\AddExact{$x$}}
    }
    \Case{$<$}{
      \For{$x\gets 1$ \KwTo $\min(C-1,|U|)$}{\AddExact{$x$}}
    }
    \Case{$\le$}{
      \For{$x\gets 1$ \KwTo $\min(C,|U|)$}{\AddExact{$x$}}
    }
    \Case{$\neq$}{
      \For{$x\gets 1$ \KwTo $|U|$}{
        \If{$x\neq C$}{\AddExact{$x$}}
      }
    }
  }

  \Return $\mathcal W$\;
\end{algorithm}
\end{toappendix}
\begin{theoremrep}[Correctness of the algorithms]\label{thm:algorithms-correct}
\lean{Algorithms/CountEnum}{CountEnum.countEnum_correct}
\lean{Algorithms/SumDP}{SumDP.sumDP_correct}
Let $q'$ be an aggregate-free query of arity~$k$, $\hat I$ a
$\mathbb K$-instance, $\vec v$ a group key, $\preceq$ an ordering,
$U\defeq\nobreak U^{\preceq}_{\angsem{\hat I}{q'},\vec v}$, $C\in\NN$ and
${\op}\in\{=,>,<,\ge,\le,\neq\}$. Then, on input $(U,C,\op)$ and
$(U,t,C,\op,\preceq)$ respectively, \textsc{ValidCount} and
\textsc{ValidSum} output the set
$\{\,W\sqsubseteq U\mid W\neq\emptyset,\ \mathrm{agg}_{t,f_\Sigma}(W)\op C\,\}$
of valid worlds, where $t$ is the constant term~$1$ for
\textsc{ValidCount} and, for \textsc{ValidSum}, any term of max-index
$\le k$ taking values in $\NN$ on the occurrences of~$U$; $f_\Sigma$
is the \sql{SUM} aggregate, so that $\mathrm{agg}_{1,f_\Sigma}(W)=|W|$.
\end{theoremrep}

\begin{appendixproof}
\paragraph*{Correctness of \textsc{ValidSum}}
Write
\[
  T\defeq \mathrm{agg}_{t,f_\Sigma}([|U|])=\sum_{r=1}^{|U|} t(u_r).
\]

\inlinepar{Case $\op$ is $\neq$.}
If $\op$ is $\neq$, the algorithm returns
\[
\textsc{ValidSum}(U,t,C,<, \preceq)\cup\textsc{ValidSum}(U,t,C,>, \preceq).
\]
By the correctness proved below for ${\op}\in\{<,>\}$, this equals
\[
  \{\,W\subseteq [|U|]\mid W\neq\emptyset,\sum_{r\in W}t(u_r)<C\,\}
\cup
\{\,W\subseteq [|U|]\mid W\neq\emptyset,\sum_{r\in W}t(u_r)>C\,\}.
\]
Since for every integer $x$ one has $x\neq C$ iff $x<C$ or $x>C$, the latter union is exactly
\[
  \{\,W\subseteq [|U|]\mid W\neq\emptyset,\sum_{r\in W}t(u_r)\neq C\,\}.
\]
Hence the theorem holds when $\op$ is $\neq$.

\smallskip\noindent
The remainder of the proof establishes correctness for ${\op}\in\{=,>,<,\ge,\le\}$.

\inlinepar{Early-return cases.}
The algorithm returns $\emptyset$ in the following scenarios:
\begin{itemize}
\item if $\op$ is $=$ and $C>T$;
\item if $\op$ is $>$ and $C\ge T$;
\item if $\op$ is $\ge$ and $C>T$;
\item if $\op$ is $<$ and $C=0$.
\end{itemize}
If $\op$ is $=$ and $C>T$, then every world $W\subseteq [|U|]$ satisfies
\[
  \sum_{r\in W} t(u_r)\le \sum_{r=1}^{|U|} t(u_r)=T<C,
\]
so no world has sum $C$.

If $\op$ is $>$ and $C\ge T$, then every world $W\subseteq [|U|]$ satisfies
\[
  \sum_{r\in W} t(u_r)\le T\le C,
\]
so no world has sum strictly greater than $C$.

If $\op$ is $\ge$ and $C>T$, then again every world has sum at most $T<C$, so no world has sum at least $C$.

If $\op$ is $<$ and $C=0$, since all values $t(u_r)$ lie in $\NN$, every world has nonnegative sum; hence no world has sum $<0$.

Therefore, in each early-return case, the algorithm correctly outputs $\emptyset$.

\inlinepar{DP invariant.}
We can safely assume from now on that none of the early-return cases occur.
For $0\le i\le |U|$, $j\ge 0$, define
\[
\mathcal D_i(j)\defeq
\{\,W\subseteq [i]\mid \sum_{r\in W} t(u_r)=j\,\}.
\]
We claim that after completing the outer-loop iteration with index $i$,
for every $j\in\{0,\dots,J\}$,
\begin{equation}\label{eq:sumdp-invariant}
dp[j]=\mathcal D_i(j).
\end{equation}

We prove this by induction on $i$.

\smallskip\noindent\emph{Base case $i=0$.}
The algorithm initializes
\[
dp[0]=\{\emptyset\}
\qquad\text{and}\qquad
dp[j]=\emptyset \text{ for } 1\le j\le J.
\]
We see,
\[
\mathcal D_0(0)=\{\emptyset\}
\qquad\text{and}\qquad
\mathcal D_0(j)=\emptyset \text{ for } j\ge 1.
\]
Hence \eqref{eq:sumdp-invariant} holds for $i=0$.

\smallskip\noindent\emph{Inductive step.}
Assume \eqref{eq:sumdp-invariant} holds after iteration $i-1$, for some $1\le i\le |U|$.
Fix $j\in\{0,\dots,J\}$.
We show that after iteration $i$, one has $dp[j]=\mathcal D_i(j)$.

We now establish two cases.

\smallskip\noindent\underline{Case 1: $j<t(u_i)$.}
The inner loop does not update $dp[j]$, since the loop runs only for indices
\[
  j=j_{\max},j_{\max}-1,\dots,t(u_i).
\]
Hence after iteration $i$, the value of $dp[j]$ is unchanged, so by the induction hypothesis
\[
dp[j]=\mathcal D_{i-1}(j).
\]
Further, no subset of $[i]$ that contains $i$ can have a sum $j$, because such a subset has a sum at least
\[
  t(u_i)>j,
\]
all summands being nonnegative. Therefore every world in $\mathcal D_i(j)$ must avoid $i$, so
\[
\mathcal D_i(j)=\mathcal D_{i-1}(j).
\]
Consequently,
\[
dp[j]=\mathcal D_i(j).
\]

\smallskip\noindent\underline{Case 2: $j\ge t(u_i)$.}
A subset $W\subseteq [i]$ satisfies
\[
  \sum_{r\in W} t(u_r)=j
\]
iff exactly one of the following two disjoint possibilities hold:
\begin{itemize}
  \item $i\notin W$, in which case $W\subseteq [i-1]$ and
\[
W\in \mathcal D_{i-1}(j);
\]
\item $i\in W$, in which case
\[
W=W'\cup\{i\}
\]
for a unique subset $W'\subseteq [i-1]$ satisfying
\[
  \sum_{r\in W'} t(u_r)=j-t(u_i),
\]
that is,
\[
  W'\in \mathcal D_{i-1}(j-t(u_i)).
\]
\end{itemize}

By the induction hypothesis, before processing index $i$ we have
\[
dp[j]=\mathcal D_{i-1}(j)
\qquad\text{and}\qquad
dp[j-t(u_i)]=\mathcal D_{i-1}(j-t(u_i)).
\]
During the update for this fixed $j$, the algorithm sets
\[
  p\gets j-t(u_i),
\qquad
s\gets |dp[p]|.
\]
It then appends
\[
dp[p][r]\cup\{i\}
\qquad (1\le r\le s)
\]
to $dp[j]$.

Thus the worlds added to $dp[j]$ are exactly those obtained by adding $i$ to the worlds that were already present in $dp[p]$ before any append operation at this $j$.
Since, by the I.H, those pre-existing worlds are exactly the elements of $\mathcal D_{i-1}(j-t(u_i))$, the algorithm adds exactly the worlds in
\[
  \{\,W'\cup\{i\}\mid W'\in \mathcal D_{i-1}(j-t(u_i))\,\}.
\]

$s=|dp[p]|$ is essential when $t(u_i)=0$, for then $p=j$.
In that situation the algorithm appends new worlds to $dp[j]$ while scanning from $dp[j]$, but it scans only the first $s$ entries, i.e., exactly those worlds that were present before the append operations started.
Hence each world in $\mathcal D_{i-1}(j)$ contributes exactly one extended world containing $i$, and none of the newly appended worlds is reused again during the same update.

Therefore, after iteration $i$, the content of $dp[j]$ is exactly:
\begin{itemize}
\item all worlds in $\mathcal D_{i-1}(j)$, corresponding to the case $i\notin W$;
\item together with all worlds of the form $W'\cup\{i\}$ for
  $W'\in\mathcal D_{i-1}(j-t(u_i))$, corresponding to the case $i\in W$.
\end{itemize}
By the disjoint case analysis above, this is precisely $\mathcal D_i(j)$.
Thus
\[
dp[j]=\mathcal D_i(j).
\]

Since both cases have been established, \eqref{eq:sumdp-invariant} holds after iteration $i$.
This completes the proof by induction.

Hence, after the outer loop terminates, for every $j\in\{0,\dots,J\}$,
\[
dp[j]
=
\{\,W\subseteq [|U|]\mid \sum_{r\in W}t(u_r)=j\,\}.
\]

\inlinepar{Selecting satisfying worlds.}
The algorithm forms $\mathcal W$ as a union of those $dp[j]$ whose index $j$ satisfies the comparison with $C$:
\begin{itemize}
\item if $\op$ is $=$, it takes $dp[C]$;
\item if $\op$ is $>$, it takes $\bigcup_{j=C+1}^{J} dp[j]$;
\item if $\op$ is $<$, it takes $\bigcup_{j=0}^{C-1} dp[j]$;
\item if $\op$ is $\ge$, it takes $\bigcup_{j=C}^{J} dp[j]$;
\item if $\op$ is $\le$, it takes $\bigcup_{j=0}^{C} dp[j]$.
\end{itemize}
By the characterization of $dp[j]$ proved above, each such union is exactly the set of worlds
$W\subseteq [|U|]$ whose sum satisfies
\[
  \sum_{r\in W} t(u_r)\op C.
\]
The only possible extra world is $\emptyset$, which belongs to $dp[0]$ and may therefore appear whenever the selected range contains $0$.
The algorithm omits it by setting
\[
\mathcal W\gets \mathcal W\setminus \{\emptyset\}.
\]
It follows that the final output is exactly
\[
  \{\,W\subseteq [|U|]\mid W\neq\emptyset,\sum_{r\in W} t(u_r)\op C\,\}.
\]

This establishes the correctness.

\paragraph*{Correctness of \textsc{ValidCount}}
The computation for \sql{COUNT} depends only on the occurrences in $U$.

\inlinepar{Correctness of \textsc{Combinations}.}
For integers $i$ and $x$, and for a $W\subseteq [|U|]$, consider a call
\[
\textsc{Combinations}(i,x,W).
\]
We claim that this call appends to $\mathcal W$ exactly the sets
\[
  W'\subseteq [|U|]
\]
such that:
\begin{itemize}
\item $W\subseteq W'$;
\item every element of $W'\setminus W$ belongs to $\{i,\dots,|U|\}$;
\item $|W'\setminus W|=x$.
\end{itemize}
Equivalently, it appends exactly the worlds of the form
\[
W\cup S
\quad\text{with}\quad
S\subseteq \{i,\dots,|U|\}
\text{ and } |S|=x.
\]

We prove this by induction on the pair $(|U|-i+1,x)$.

\smallskip\noindent\emph{Base cases.}
If $x=0$, it appends $W$ and returns. This is correct because the only way to extend $W$ by choosing exactly $0$ further occurrences is to keep $W$.

If $i>|U|$, there are no remaining occurrences to choose from. Hence no extension of $W$ by a positive number of further occurrences exists, so returning without appending anything is correct.

If $|U|-i+1<x$, then fewer than $x$ occurrences remain available. Thus it is impossible to choose exactly $x$ more occurrences, so returning without appending is again correct.

\smallskip\noindent\emph{Inductive step.}
Assume now that $x>0$, $i\le |U|$, and $|U|-i+1\ge x$.
Any extension of $W$ by exactly $x$ occurrences chosen from
\[
\{i,\dots,|U|\}
\]
falls into one of two disjoint cases:
\begin{itemize}
\item it does \emph{not} contain $i$;
\item it \emph{does} contain $i$.
\end{itemize}

In the first case, the required worlds are exactly those produced by the recursive call
\[
\textsc{Combinations}(i+1,x,W),
\]
since one must still choose all $x$ required occurrences from the suffix starting at $i+1$.

In the second case, the desired worlds are exactly those produced by
\[
\textsc{Combinations}(i+1,x-1,W\cup\{i\}),
\]
since after selecting $i$, one must select exactly $x-1$ further occurrences from the suffix starting at $i+1$.

By the I.H, the first recursive call appends exactly the worlds of the first kind, and the second appends exactly the worlds of the second kind. Since the two cases are disjoint and exhaustive, \textsc{Combinations} appends exactly all desired worlds, each exactly once.

This proves the claim.

\inlinepar{Correctness of \textsc{AddExact}.}
Consider a call \textsc{AddExact}$(x)$.

If $x\le 0$, the procedure returns immediately. This is correct because the output requires $W\neq\emptyset$, so no world of size $0$ should be output, and no world has negative cardinality.

If $x>|U|$, the procedure also returns immediately. This is correct because no subset of $[|U|]$ can have cardinality greater than $|U|$.

If $1\le x\le |U|$, the procedure calls
\[
\textsc{Combinations}(1,x,\emptyset).
\]
By the claim proved above, this appends exactly all sets
\[
  W\subseteq [|U|]
\quad\text{such that}\quad
|W|=x.
\]
Hence \textsc{AddExact}$(x)$ appends exactly the non-empty worlds of cardinality $x$.

\inlinepar{Selecting satisfying worlds.}
We now inspect the final case distinction on $\op$.

\begin{itemize}
  \item If $\op$ is $=$, the algorithm executes \textsc{AddExact}$(C)$, so by the correctness of \textsc{AddExact} it outputs exactly the non-empty worlds $W\subseteq [|U|]$ such that $|W|=C$.

\item If $\op$ is $>$, the algorithm executes \textsc{AddExact}$(x)$ for every
\[
x\in\{C+1,\dots,|U|\}.
\]
Therefore it returns exactly the non-empty worlds whose cardinality is strictly greater than $C$.

\item If $\op$ is $\ge$, the algorithm executes \textsc{AddExact}$(x)$ for every
\[
x\in\{C,\dots,|U|\}.
\]
Since \textsc{AddExact} ignores values $x\le 0$, this outputs exactly the non-empty worlds whose cardinality is at least $C$.

\item If $\op$ is $<$, the algorithm executes \textsc{AddExact}$(x)$ for every
\[
x\in\{1,\dots,\min(C-1,|U|)\}.
\]
Thus it returns exactly the non-empty worlds whose cardinality is strictly less than $C$.

\item If $\op$ is $\le$, the algorithm executes \textsc{AddExact}$(x)$ for every
\[
x\in\{1,\dots,\min(C,|U|)\}.
\]
Thus it returns exactly the non-empty worlds whose cardinality is at most $C$.

\item If $\op$ is $\neq$, the algorithm executes \textsc{AddExact}$(x)$ for every
\[
x\in\{1,\dots,|U|\}\setminus\{C\}.
\]
Hence it returns exactly the non-empty worlds whose cardinality is different from $C$.
\end{itemize}

In every case, the resulting  $\mathcal W$ is exactly
\[
  \{\,W\subseteq [|U|]\mid W\neq\emptyset,\ |W|\op C\,\}.
\]

This establishes correctness.
\end{appendixproof}

We identify certain aggregate-comparison predicates for which the valid worlds can be constructed in polynomial time:
\begin{propositionrep}
  \label{prop:algorithms}
  The provenance of \sql{HAVING} queries of the following forms can be computed
  in polynomial time in data complexity:
  \begin{itemize}
    \item $\text{\sql{MIN(a)}}\op c$,
    $\text{\sql{MAX(a)}}\op c$,
$\text{\fakesql{PICKFIRST}\texttt{(a)}}\op c$
      with
      ${\op}\in\{<,\leq,=,\geq,>,\neq\}$ in absorptive m-semirings in which
      $\otimes$ distributes over $\ominus$;
    \item \sql{COUNT(*) >= k} and \sql{COUNT(*) > k} for $k$ a fixed constant, in absorptive
      m-semirings;
\item \sql{SUM(a) >= }$c$ and \sql{SUM(a) > }$c$ with $c\in\NN$, for an $\NN$-valued
  attribute \texttt{a} in which every nonzero value $a$ satisfies
  $\frac ca\leq k$ where $k\geq 0$ is a fixed constant, in absorptive
      m-semirings;
    \item \sql{COUNT(*) = k}, \sql{COUNT(*) <= k}, and \sql{COUNT(*) < k} for $k$ a fixed
      constant, in arbitrary m-semirings;
    \item \sql{SUM(a) = }$c$, \sql{SUM(a) <= }$c$, and \sql{SUM(a) < }$c$
      with $c\in\NN$, for an $\NN$-valued attribute \texttt{a} whose
      values are all nonzero and satisfy $\frac ca\leq k$ where $k\geq 0$
      is a fixed constant, in arbitrary m-semirings.
  \end{itemize}
\end{propositionrep}
\begin{proof}
Fix an arbitrary $\mathbb K$-instance $\hat I$ and a group key $\vec v$, and write $U\defeq U^{\preceq}_{\angsem{\hat I}{q'},\vec v}$ for the sequence of group-member occurrences at~$\vec v$.
For each case below, we show that the provenance for $\vec v$ can be computed in time polynomial in $|U|$. Since the queries are fixed, this implies polynomial time in data complexity.
We recall that the provenance of the \sql{HAVING} condition (equiv.\ selection predicates with aggregate comparisons) at group key $\vec v$ is obtained by enumerating the valid non-empty worlds
and then forming the corresponding semiring sum of the $\mathbb K$-annotations associated with these worlds using the possible-world semantics. Thus it suffices to show that, in each of the stated cases, the relevant list of worlds can be computed in polynomial time.

\inlinepar{Case 1: $\text{\sql{MIN(a)}}\op c$, $\text{\sql{MAX(a)}}\op c$, and $\text{\fakesql{PICKFIRST}\texttt{(a)}}\op c$.}
Write $\alpha_i$ and $a_i$ for the annotation and the value of
attribute~\texttt{a} of the $i$-th occurrence of~$U$, occurrences being
indexed in the order~$\preceq$, and
$L\defeq\bigoplus_{a_i<c}\alpha_i$, $L'\defeq\bigoplus_{a_i\le c}\alpha_i$,
$G\defeq\bigoplus_{a_i\ge c}\alpha_i$, $G'\defeq\bigoplus_{a_i>c}\alpha_i$,
$E\defeq\bigoplus_{a_i=c}\alpha_i$. In an absorptive m-semiring in which
$\otimes$ distributes over $\ominus$, the possible-world sum for
$\text{\sql{MIN(a)}}\op c$ has the closed form
\lean{HavingMinMax}{Having.minScan_correct}
\[
\begin{array}{r@{\ }l@{\qquad}r@{\ }l@{\qquad}r@{\ }l}
{<}c: & L, & {\le}c: & L', & {\ne}c: & L\oplus(\mathbb 1\ominus L')\otimes G',\\
{\ge}c: & (\mathbb 1\ominus L)\otimes G, & {>}c: & (\mathbb 1\ominus L')\otimes G', & {=}c: & (\mathbb 1\ominus L)\otimes E;
\end{array}
\]
for instance, $\text{\sql{MIN(a)}}<c$ holds in a world exactly when some
occurrence with $a_i<c$ is present, and absorptivity reduces the sum over
these worlds to the sum over the singletons, whereas $\text{\sql{MIN(a)}}\ge c$
additionally requires that no occurrence with $a_i<c$ be present, whence
the factor $\mathbb 1\ominus L$. The closed form for $\text{\sql{MAX(a)}}\op c$
is the mirror image, exchanging $<$ and~$>$
\lean{HavingMinMax}{Having.maxScan_correct}, and for
$\text{\fakesql{PICKFIRST}\texttt{(a)}}\op c$ (the non-commutative aggregate of
Example~\ref{ex:aggregates}) it is
$\bigoplus_{i:\,a_i\op c}(\mathbb 1\ominus\bigoplus_{j<i}\alpha_j)\otimes\alpha_i$,
one term per satisfying occurrence that comes first
\lean{HavingMinMax}{Having.firstScan_correct}. Each is computed by a
single scan of~$U$ using $O(|U|)$ semiring operations. For the four
forms with a monus factor, both hypotheses are used: absorptivity fails in the tropical m-semiring over
$\mathbb R\cup\{\infty\}$ (idempotent and distributive), where the scan
value differs from the possible-world provenance already for
$\text{\sql{MIN(a)}}\ge c$ on a two-occurrence group
\lean{Semirings/Tropical}{MinTropicalR.minScan_ne_prov}; distributivity of
$\otimes$ over $\ominus$ identifies the world annotation $\ann_U(W)$ with
the expanded form $T_U(W)$ of Section~\ref{sec:results}
\lean{HavingSemantics}{Having.worldAnn_eq_T}, on which the closed forms
are established. The two monus-free forms, ${}<c$ and ${}\le c$, need
absorptivity alone: their valid worlds are those meeting a set of
occurrences, a family closed under supersets whose minimal elements
are the singletons, so
Proposition~\ref{prop:monus-cancellation}(iii) applies
\lean{Having}{Having.sum_ann_meet}.

\inlinepar{Case 2: \sql{COUNT(*) >= k} and \sql{COUNT(*) > k} for fixed $k$, in absorptive m-semirings.}
By correctness of Algorithm~\ref{alg:count_enum}
(Theorem~\ref{thm:algorithms-correct}), the valid worlds are exactly the non-empty worlds
$W\sqsubseteq U$ such that $|W|\ge k\text{ or }|W|>k$, respectively.
  If such a $W$ satisfies the condition, then every superset of $W$ also satisfies it; equivalently these $W$ form an upward-closed family of sets. The m-semiring $\mathbb K$ being absorptive, the possible-world provenance of an upward-closed family of worlds (the $\oplus$-sum of their world annotations) is equal to the $\oplus$-sum of the monomials of its minimal worlds \lean{Having}{Having.upward_closed_collapse}; for the family of worlds of size at least $k$ this collapse is exactly the identity $F_k(U)=S_k(U)$ of Section~\ref{sec:results} \lean{Having}{Having.F_eq_S}. Here, the minimal satisfying worlds are:
\begin{itemize}
\item exactly the worlds of size $k$ for \sql{COUNT(*) >= k};
\item exactly the worlds of size $k+1$ for \sql{COUNT(*) > k}.
\end{itemize}
Since $k$ is fixed, the number of such worlds is $\binom{|U|}{k}\text{ or }\binom{|U|}{k+1}$,
which is polynomial in $|U|$. Therefore the provenance can be computed in polynomial time.

\inlinepar{Case 3: \sql{SUM(a) >= c} and \sql{SUM(a) > c} with $c\ge 0$, in absorptive m-semirings.}
By correctness of Algorithm~\ref{alg:sum_dp}
(Theorem~\ref{thm:algorithms-correct}), the valid worlds are exactly the non-empty worlds
$W\sqsubseteq U$ such that $\sum_{r\in W} t(u_r)\ge c$ (resp.\ $>c$).
Since all values are nonnegative, these families are upward-closed under inclusion. Hence, in an absorptive m-semiring, the provenance collapses to the $\oplus$-sum of the monomials of the minimal satisfying worlds
\lean{Having}{Having.sum_ge_collapse} \lean{Having}{Having.sum_gt_collapse}.
Let $W$ be a minimal satisfying world. A zero-valued occurrence never belongs to $W$: removing it leaves the sum unchanged, contradicting minimality. Hence every value $a$ of an occurrence of $W$ is nonzero and satisfies $\frac ca\le k$, i.e., $c\le k\cdot a$, so any $k$ selected occurrences of $W$ already contribute a sum at least $c$. For \sql{SUM(a) >= c}, this gives $|W|\le k$: if $|W|>k$, then any $k$-element subset of $W$ already has sum $\ge c$, contradicting minimality \lean{Having}{Having.minimal_card_le_of_sum_ge}. For the strict comparison \sql{SUM(a) > c}, a $k$-element subset only guarantees a sum $\ge c$, and one further (nonzero) occurrence makes the inequality strict -- covering in particular the boundary case $c=0$ -- so minimality gives $|W|\le k+1$ \lean{Having}{Having.minimal_card_le_of_sum_gt}.

Therefore the number of minimal satisfying worlds is at most
$\sum_{i\le k}\binom{|U|}{i}$ in the first case
\lean{Having}{Having.card_minimal_sum_ge_le} and
$\sum_{i\le k+1}\binom{|U|}{i}$ in the second; both bounds are polynomial
in $|U|$ for fixed~$k$, so the provenance can be computed in polynomial
time.

\inlinepar{Case 4: \sql{COUNT(*) = k}, \sql{COUNT(*) <= k}, and \sql{COUNT(*) < k} for fixed $k$, in arbitrary m-semirings.}
By correctness of Algorithm~\ref{alg:count_enum}, the valid worlds are exactly:
\begin{itemize}
\item the worlds of size exactly $k$ for \sql{COUNT(*) = k};
\item the worlds of sizes $1,\dots,k$ for \sql{COUNT(*) <= k};
\item the worlds of sizes $1,\dots,k-1$ for \sql{COUNT(*) < k}.
\end{itemize}
Since $k$ is fixed, the number of such worlds is bounded by
  $\binom{|U|}{k},\, \sum_{i=1}^{k}\binom{|U|}{i},\text{ and } \sum_{i=1}^{k-1}\binom{|U|}{i}$, respectively \lean{Having}{Having.card_powerset_filter_card_le}. Each of these bounds is polynomial in $|U|$. Therefore the provenance can be computed in polynomial time in arbitrary m-semirings.

\inlinepar{Case 5: \sql{SUM(a) = }$c$, \sql{SUM(a) <= }$c$, and \sql{SUM(a) < }$c$ with all values nonzero and $\frac ca\le k$, in arbitrary m-semirings.}
By correctness of Algorithm~\ref{alg:sum_dp}
(Theorem~\ref{thm:algorithms-correct}), the valid worlds are exactly the
non-empty worlds $W\sqsubseteq U$ whose sum is equal to, at most, or less
than~$c$. If $c=0$ there is none, all values being nonzero. Otherwise, a
valid world $W$ satisfies $|W|\cdot\frac ck\le\sum_{r\in W}t(u_r)\le c$,
since every value is at least $\frac ck$, hence $|W|\le k$. The valid
worlds are therefore among the $\sum_{i=1}^{k}\binom{|U|}{i}$ worlds of
size at most~$k$ \lean{Having}{Having.card_powerset_filter_card_le},
polynomially many for fixed~$k$, and \textsc{ValidSum}, whose table is
bounded by~$c$, enumerates exactly them. No assumption on the m-semiring
is needed.

\smallskip\noindent
With these 5 cases, provenance can be computed in polynomial time for all the \sql{HAVING} queries described in the proposition.
\end{proof}
\begin{proofsketch}
Fix the group key $\vec v$ and write $U\defeq U^{\preceq}_{\angsem{\hat I}{q'},\vec v}$. The provenance of the \sql{HAVING} condition is the semiring sum of the annotations of the valid non-empty worlds, so it suffices to enumerate the relevant worlds in time polynomial in $|U|$.
For \sql{MIN(a)}, \sql{MAX(a)} and \fakesql{PICKFIRST}\texttt{(a)} (Case~1), the validity of a world is decided occurrence by occurrence, so in an absorptive m-semiring with $\otimes$ distributive over $\ominus$ the provenance is obtained by a single scan \lean{HavingMinMax}{Having.minScan_correct}.
For $\text{\sql{COUNT(*)}}\ge k$ or ${}>k$ (Case~2) and $\text{\sql{SUM(a)}}\ge c$ or ${}>c$ (Case~3), the valid worlds form an upward-closed family, whose provenance, in an \emph{absorptive} m-semiring, collapses to the $\oplus$-sum of the monomials of its minimal worlds \lean{Having}{Having.upward_closed_collapse}. These minimal worlds have bounded cardinality: $k$, or $k+1$ for the strict variants (using the hypothesis $\frac ca\le k$ on nonzero values for \sql{SUM}); so there are only $O(|U|^{k+1})$ of them.
Finally, for $\text{\sql{COUNT(*)}}=k$, $\le k$ or $<k$ (Case~4), the valid worlds are exactly those of cardinality at most $k$, again polynomially many, so no absorptivity is needed; likewise for $\text{\sql{SUM(a)}}=c$, $\le c$ or $<c$ when all values are at least $\frac ck$ (Case~5).
\end{proofsketch}

\begin{example}\label{ex:prop-simplification}
Consider as in Example~\ref{ex:semantics} a group of three occurrences,
now with values $a(u_1)=3$ and $a(u_2)=a(u_3)=2$, and the condition
$\text{\sql{SUM(a)}}\ge 5$. The valid worlds are $\{1,2\}$, $\{1,3\}$
and $\{1,2,3\}$, with possible-world provenance
$(x_1\land x_2\land\lnot x_3)\lor(x_1\land x_3\land\lnot x_2)\lor(x_1\land
x_2\land x_3)$; the family is upward-closed and $\mathbb B[X]$
absorptive, so this equals the $\oplus$-sum over the minimal valid
worlds $\{1,2\}$ and $\{1,3\}$, namely $x_1\land(x_2\lor x_3)$.
Proposition~\ref{prop:algorithms} rests on this collapse: the minimal
worlds suffice, polynomially many for a fixed threshold.
\end{example}

The tractability established by Proposition~\ref{prop:algorithms} is
\emph{symbolic}: it computes the provenance of a \sql{HAVING} comparison in an
arbitrary (absorptive) m-semiring by enumerating the relevant valid worlds. A
complementary form of tractability arises in the \emph{probabilistic} setting. By
Proposition~\ref{prop:pqe-equiv}, evaluating a \sql{HAVING} query over a
tuple-independent instance amounts to computing the probability of its
$\mathbb{B}[X]$-provenance under the possible-world semantics. When the
contributors of a group are independent, this probability need not be obtained
by enumerating worlds, whose number is in general exponential. As Ré and
Suciu~\cite{re2009trichotomy} showed in the base case of their safe plans,
the distribution of the aggregate over the group can be assembled by a
convolution of marginal distributions, yielding the answer in polynomial
time for several aggregates and \emph{all} comparison operators. In our
terms:

\begin{propositionrep}[after~\cite{re2009trichotomy}]
\label{prop:agg-cmp-poly-prob}
Let $\hat I$ be a $\mathbb{B}[X]$-instance whose variables carry
independent probabilities, $\vec v$ a group key and
$U\defeq U^{\preceq}_{\angsem{\hat I}{q'},\vec v}=((u_i,\alpha_i))_{1\le i\le N}$
its occurrences; assume the contributors \emph{independent}, i.e., the
$\alpha_i$ have pairwise disjoint sets of variables, disjoint from those
of every other annotation in $\angsem{\hat I}{q}$, and write
$p_i\defeq\Pr(\alpha_i)$. For every comparison operator $\op$ and
threshold $C\in\NN$, the probability of the aggregate comparison on the
group under the possible-world semantics is computable in time
$O\bigl(N\cdot\min(C,N-C)\bigr)$ for $\text{\sql{COUNT(*)}}\op C$, in
time $O(N\cdot\min(C,S))$ for $\text{\sql{SUM}}(t)\op C$ with
$t(u_i)\in\NN$ and $S\defeq\sum_{i=1}^N t(u_i)$, and in time $O(N)$ for
$\text{\sql{MIN}}(t)\op C$ and $\text{\sql{MAX}}(t)\op C$ over any
totally ordered value domain.
\end{propositionrep}

\begin{proof}
By the independence hypothesis and the equivalence between
probabilistic query evaluation and possible-world summation
(Proposition~\ref{prop:pqe-equiv}), the probability of the comparison
atom factors through the indicators
$X_i\defeq[(u_i,\alpha_i)\text{ present in the world}]$, which form a
family of \emph{independent} Bernoullis with $\Pr[X_i=1]=p_i$.
We exhibit the three procedures.

\inlinepar{(1) \sql{COUNT(*)}\,$\op\,C$.}
The aggregate value on a world $W$ is $|W|=\sum_{i=1}^N X_i$, a
Poisson-binomial variable $B$.  For each $\op$, the satisfaction set
$\{j\in\{0,\dots,N\}\mid j\op C\}$ is an interval (possibly truncated
by $0$, or with the singleton $\{C\}$ removed in the $\neq$ case), so
the answer reduces to two CDF queries among $\Pr[B\le T]$,
$\Pr[B\ge T]$, $\Pr[B=T]$, $\Pr[B=0]$.

Let $\rho_n(j)\defeq\Pr\bigl[\sum_{i\le n}X_i=j\bigr]$.
The classical convolution identity
\lean{HavingProbability}{HavingProbability.countMass_insert_succ}
\[
\rho_n(j)=(1-p_n)\,\rho_{n-1}(j)+p_n\,\rho_{n-1}(j-1)
\]
yields a 1-D rolling DP that fills the row $\rho_N(\cdot)$ up to any
prescribed index $J\le N$ in time $O(N\cdot J)$.
Computing the upper tail $\Pr[B\ge C]$ directly costs
$O(N\cdot(N-C+1))$; computing the lower tail costs $O(N\cdot C)$.
Using $\Pr[B\ge C]=\Pr[B'\le N-C]$ for $B'=\sum_i(1-X_i)$, one can
always evaluate the tail on whichever side of $C$ is shorter, giving
the announced bound $O\bigl(N\cdot\min(C,N-C)\bigr)$.
The empty-world mass $\Pr[B=0]=\prod_i(1-p_i)$ is obtained as a
by-product (linear time) and subtracted in the branches whose
satisfaction interval contains $0$ (i.e., $\le$, $<$, $\neq$, and $\ge$
with $C\le0$).

\inlinepar{(2) \sql{SUM}$(t)\,\op\,C$ with integer
weights.}
The aggregate value on a world $W$ is
$\Sigma\defeq\sum_{i:X_i=1}t(u_i)$, a weighted Poisson-binomial
variable.  The same convolution generalizes
\lean{HavingProbability}{HavingProbability.sumMass_insert_of_le}:
\[
\sigma_n(s)=(1-p_n)\,\sigma_{n-1}(s)+p_n\,\sigma_{n-1}\bigl(s-t(u_n)\bigr),
\qquad
0\le s\le S,
\]
where $\sigma_n(s)\defeq\Pr\bigl[\sum_{i\le n}t(u_i)X_i=s\bigr]$.
Only the indices $s\le\min(C,S)$ are needed, the mass of $\Sigma\ge C$
being obtained by complementation (this is the saturation at~$C$ of the
semiring $S_{C+1}$ of~\cite{re2009trichotomy}), so filling
$\sigma_N(\cdot)$ on $[0,\min(C,S)]$ takes $O(N\cdot\min(C,S))$ time and
gives every $\Pr[\Sigma\op C]$ by a single summation; the empty-world
adjustment is the same $\prod_i(1-p_i)$ as before.

\inlinepar{(3) \sql{MIN}$(t)\,\op\,C$ and
\sql{MAX}$(t)\,\op\,C$.}
Independence makes the events
``$\max_{i:X_i=1}t(u_i)\le C$''
\lean{HavingProbability}{HavingProbability.funcProb_maxLeOnNonempty}
and ``$\min_{i:X_i=1}t(u_i)\ge C$''
\lean{HavingProbability}{HavingProbability.funcProb_minGeOnNonempty}
\emph{factor over $i$}, since each surviving $X_i=1$ contributes its
threshold check independently:
\begin{align*}
  \Pr[\text{\sql{MAX}}\le C\text{ on a nonempty world}]
 &=\Bigl(\textstyle\prod_{i:t(u_i)>C}(1-p_i)\Bigr)
   \,\bigl(1-\textstyle\prod_{i:t(u_i)\le C}(1-p_i)\bigr),\\
   \Pr[\text{\sql{MIN}}\ge C\text{ on a nonempty world}]
 &=\Bigl(\textstyle\prod_{i:t(u_i)<C}(1-p_i)\Bigr)
   \,\bigl(1-\textstyle\prod_{i:t(u_i)\ge C}(1-p_i)\bigr),
\end{align*}
each computable by a single scan in $O(N)$.
The remaining operators are obtained by complementation
(``$\text{\sql{MAX}}>C$'' is the negation of ``$\text{\sql{MAX}}\le C$'' on the nonempty
worlds, etc.) and the equality cases by intersecting two such tails;
$\neq$ subtracts the equality from the nonempty mass
$1-\prod_i(1-p_i)$.  All these algebraic combinations stay $O(N)$.
\end{proof}

\begin{toappendix}
The following example illustrates the convolution used in the proof of
Proposition~\ref{prop:agg-cmp-poly-prob}.
\begin{example}\label{ex:poisson-binomial}
Suppose the three contributors of Example~\ref{ex:semantics} are
independent with marginals~$p_1=\tfrac12$, $p_2=\tfrac14$ and
$p_3=\tfrac13$, and consider $\text{\sql{COUNT(*)}}\ge 2$. Writing $\rho_n(j)$
for the probability that exactly $j$ of the first $n$ contributors are
present, so that $\rho_0(0)=1$ (and~$\rho_0(j)=0$ for~$j>0$), the
recurrence $\rho_n(j)=(1-p_n)\,\rho_{n-1}(j)+p_n\,\rho_{n-1}(j-1)$ fills,
row by row, the table
\begin{center}
\begin{tabular}{ccccc}
\toprule
 & $j=0$ & $j=1$ & $j=2$ & $j=3$\\
\midrule
$\rho_0$ & $1$ & & & \\
$\rho_1$ & $\tfrac12$ & $\tfrac12$ & & \\
$\rho_2$ & $\tfrac38$ & $\tfrac12$ & $\tfrac18$ & \\
$\rho_3$ & $\tfrac14$ & $\tfrac{11}{24}$ & $\tfrac14$ & $\tfrac1{24}$\\
\bottomrule
\end{tabular}
\end{center}
Hence
$\Pr[\text{\sql{COUNT(*)}}\ge 2]=\rho_3(2)+\rho_3(3)=\tfrac14+\tfrac1{24}=\tfrac{7}{24}$,
the probability of the Boolean provenance ``at least two of
$x_1,x_2,x_3$'' obtained in Example~\ref{ex:semantics}. This
Poisson-binomial CDF is exactly the quantity our implementation evaluates
for probabilistic \sql{COUNT} comparisons
(Section~\ref{sec:implementation_experiments}).
\end{example}
\end{toappendix}

\section{Implementation and Experiments}
\label{sec:implementation_experiments}
We implemented our \sql{HAVING} semantics within ProvSQL, in C++:
\textsc{ValidCount}, \textsc{ValidSum}, the closed forms of
Proposition~\ref{prop:algorithms} for \sql{MIN}, \sql{MAX} and
\fakesql{PICKFIRST}, and brute-force enumeration otherwise, with the
prunings that
Proposition~\ref{prop:algorithms} justifies: in an absorptive m-semiring,
only the minimal valid worlds of an upward-closed comparison are
generated, and a \emph{range check} on the extremal values the aggregate
can take answers directly the comparisons that no world can satisfy.
The monus cancellation behind the monotone case of
Section~\ref{sec:results} adds two shortcuts: the monus-free closed
forms ($\text{\sql{MIN(a)}}<c$, ${}\leq c$, and their \sql{MAX} mirrors)
serve in every absorptive m-semiring, distributive or not, and in an
idempotent non-absorptive one such as $\mathrm{Why}[X]$ the valid worlds
of a monotone comparison are enumerated in full but annotated by their
present annotations alone. The
table of \textsc{ValidSum}'s dynamic programming is bounded by the
comparison constant for nonnegative integers (negative values and tables
beyond $10^7$ entries fall back to enumeration); the algorithms of
Proposition~\ref{prop:agg-cmp-poly-prob} are implemented for probability
evaluation; and standard (m-)semiring identities (multiplication
by~$\mathbb 1$, addition of~$\mathbb 0$) are folded when constructing
provenance expressions.

\begin{figure}[t]
    \centering
    \includegraphics[width=0.9\textwidth]{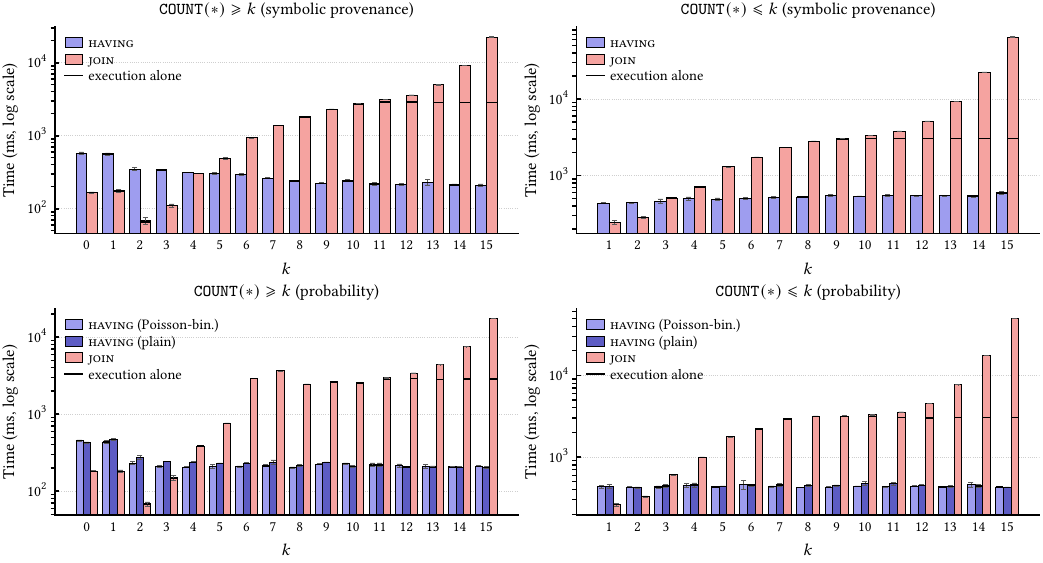}
    \caption{Planning and execution time of the \sql{HAVING} queries
    (\sql{COUNT(*) >= k} on the left, \sql{COUNT(*) <= k} on the right)
    under our semantics and under the self-join rewriting, as a function
    of~$k$: symbolic provenance computation (top) and probability
    evaluation (bottom, with and without the Poisson-binomial optimization
    of Proposition~\ref{prop:agg-cmp-poly-prob}). The height of a bar is
    the planning plus execution time; the black tick marks the execution
    time alone.}
    \label{fig:bench}
\end{figure}

We validate the practical feasibility of this implementation on a
probabilistic image-retrieval database. Following the method
of~\cite{yunus2025using}, the database is built from the objects that
YOLOv5~\cite{ultralytics2021yolov5} detects on the images of a dataset
derived from COCO 2017~\cite{lin2014microsoft}: each detected object is
one tuple of a relation \sql{dataset}, with a unique identifier
\sql{id}, the image identifier \sql{img}, the object class \sql{obj},
and a probability derived from the detector confidence score, treated as
independent probabilistic facts -- $239{,}377$ images and $1{,}570{,}565$
object annotations. We compare, for our possible-world semantics and for
the equivalent self-\sql{JOIN} rewriting (Theorem~\ref{th:correctness})
\ifarxiv
-- the exact set-up of our earlier work~\cite{yunus2025using}, where we had
to use the \sql{JOIN} version as nothing else was available --
\else
-- the exact set-up of \cite{yunus2025using}, where the authors had to
use the \sql{JOIN} version as nothing else was available --
\fi
(i)~provenance tracking, producing a symbolic provenance formula, and
(ii)~probability computation, for \sql{HAVING} queries with the
comparison \sql{COUNT(*) >= k} or \sql{COUNT(*) <= k}, retrieving the
images with at least (at most) $k$ objects of class 72
(\emph{refrigerator}), $k$ ranging from 0 to 15 (no image has more
than~13).

For probabilistic query evaluation, we use the default ProvSQL pipeline:
a cost-based chooser selects, per provenance circuit, the cheapest exact
method it estimates applicable, from linear-time evaluation of read-once
circuits up to knowledge compilation
(d4~\cite{DBLP:conf/ijcai/LagniezM17}) as a last resort. Here the
circuit of a group has at most 13 inputs, and the \sql{HAVING} queries
are handled by the cheapest methods; the larger circuits of the
\sql{JOIN} rewritings also require tree decompositions and, for some
values of~$k$, knowledge compilation.

For each $k$, every query was executed $3$ times on a desktop computer
(Intel Core i7-12700, 64~GB of RAM) running Ubuntu 22.04, with PostgreSQL
18.6; the figures report the mean of the three runs, with error bars at
one standard deviation, on a logarithmic scale. The height of a bar is
the planning plus execution time measured by \sql{EXPLAIN ANALYZE}, and
a black tick marks the execution time alone. We disabled PostgreSQL's
genetic query optimizer, which by default replaces exhaustive join
ordering from 12 relations onwards: the join orders it selected made the
self-\sql{JOIN} rewritings with $k\geq 12$ exceed a five-minute timeout,
whereas the exhaustive planner keeps their execution around $3$ seconds,
at the price of a planning time that grows quickly with the number of
joins.

Figure~\ref{fig:bench} (top) covers provenance tracking: our semantics
constructs a circuit with a \emph{comparison gate} for the aggregate
comparison, and the possible-world semantics is triggered when
evaluating it in the formula pseudo-semiring (\sql{sr_formula}). As this
pseudo-semiring is not absorptive, the pruning of
Proposition~\ref{prop:algorithms} does not apply, and the valid worlds
of a group are enumerated in full -- at most $2^{13}$ here. The
running time of our semantics is stable as $k$ varies, between $0.2$
and $0.6$ seconds; for \sql{COUNT(*) >= k} it even decreases slowly
with~$k$, as the range check discards more groups. The
\sql{JOIN}-based rewriting is faster for $k\leq 4$, then its cost grows
with the number of joins, up to $14$ times that of our semantics at
$k=11$ and $106$ times at $k=15$, where planning the $15$-way self-join
alone takes $20$ seconds. For \sql{COUNT(*) <= k}, whose rewriting
includes a non-monotone \sql{EXCEPT}, the crossover is at $k=2$, and the
ratio reaches $7$ at $k=11$ and $108$ at $k=15$ ($62$ seconds of
planning for the $16$-way self-join): the stability of our runtime cost
pays off as $k$ increases.

Figure~\ref{fig:bench} (bottom) adds probability evaluation to query
execution and provenance computation.
The \sql{HAVING} semantics, with the Poisson-binomial optimization of
Proposition~\ref{prop:agg-cmp-poly-prob} or without it, stays between
$0.2$ and $0.5$ seconds for all~$k$, the two within a few percent of
each other: with at most $13$ objects per image, the DNF of valid worlds
of a group is small enough for the general pipeline to evaluate by world
enumeration or inclusion--exclusion, so the closed form brings no
measurable gain on this dataset (it would on larger groups, where the
DNF has $\binom{n}{k}$ clauses). The \sql{JOIN} rewriting is cheaper
for~$k\leq 3$, then $12$ to $17$ times slower than our semantics on the
monotone $\geq k$ case and $4$ to $8$ times slower on $\leq k$ up
to~$k=11$, as it pays, on top of the joins, the probability evaluation
of larger circuits, some requiring tree decomposition or knowledge
compilation; beyond, planning dominates and the ratio reaches $83$
and~$115$ at $k=15$. The \sql{JOIN} rewriting is also unable to exploit
the fact that the query becomes vacuously true or false when $k$ exceeds
the number of objects in the images, which both our implementations can.

\begin{toappendix}
  \section{Material for Section~\ref{sec:implementation_experiments}
  (Implementation and Experiments)}
\newcommand{\quotedzero}{\PYG{l+s+s1}{\apos$\mathbb{0}$\apos}}
\label{sec:appendix_impl_experiments}

We present the query templates used for the experiments of
Section~\ref{sec:implementation_experiments}. They are SQL queries using
the functions of the ProvSQL extension of PostgreSQL (\sql{provenance()}, \sql{sr_formula},
\sql{probability_evaluate}, etc.)

The \sql{HAVING}-based template is the following, its condition being
instantiated with \sql{COUNT(*) >= k} or with \sql{COUNT(*) <= k}:
\newcommand{\apos}{\texttt{\textquotesingle}}
\begin{minted}[fontsize=\small, breaklines, escapeinside=||]{postgresql}
SELECT * FROM (
  SELECT img, sr_formula(provenance(), 'image_mapping') AS prov
  FROM dataset
  WHERE obj = c
  GROUP BY img
  HAVING COUNT(*) >= k  -- or:  HAVING COUNT(*) <= k
) sub WHERE prov <> |\quotedzero|;
\end{minted}

For probability evaluation, the call to \sql{sr_formula} in the
projection is replaced by \sql{probability_evaluate(provenance())}, and
the final filter, which
discards the images whose provenance simplifies to~$\mathbb 0$, becomes
\sql{prov > 0}, discarding those whose probability is~$0$; under the
\sql{HAVING} semantics, these are the images for which the aggregate
comparison holds in no possible world.

The equivalent self-\sql{JOIN} rewriting for the ``\sql{>= k}'' case uses
a $(k{-}1)$-fold self-join, with the
``\sql{id > }'' chain ensuring distinct matching tuples:
\begin{minted}[fontsize=\small, breaklines, escapeinside=||]{postgresql}
SELECT * FROM (
  SELECT img, sr_formula(provenance(), 'image_mapping') AS prov
  FROM (
    SELECT DISTINCT d1.img
    FROM dataset d1
    JOIN dataset d2 ON d2.img = d1.img AND d2.obj = c AND d2.id > d1.id
    ...
    JOIN dataset dk ON dk.img = d1.img AND dk.obj = c AND dk.id > d|\(_{k-1}\)|.id
    WHERE d1.obj = c
  ) AS t
) sub WHERE prov <> |\quotedzero|;
\end{minted}

The ``\sql{<= k}'' case is non-monotone and is rewritten as
``at least one match''~\sql{EXCEPT}~``at least~$k+1$ matches'':
\begin{minted}[fontsize=\small, breaklines, escapeinside=||]{postgresql}
SELECT * FROM (
  SELECT img, sr_formula(provenance(), 'image_mapping') AS prov
  FROM (
    SELECT DISTINCT d1.img
    FROM dataset d1
    WHERE d1.obj = c
    EXCEPT
    SELECT DISTINCT d1.img
    FROM dataset d1
    JOIN dataset d2 ON d2.img = d1.img AND d2.obj = c AND d2.id > d1.id
    ...
    JOIN dataset d|\(_{k+1}\)| ON d|\(_{k+1}\)|.img = d1.img AND d|\(_{k+1}\)|.obj = c AND d|\(_{k+1}\)|.id > dk.id
    WHERE d1.obj = c
  ) AS t
) sub WHERE prov <> |\quotedzero|;
\end{minted}
The same projection / zero-filter substitution applies to obtain the
\sql{probability_evaluate} variants of both \sql{JOIN} rewritings.
\end{toappendix}

\section{Related Work}
\label{sec:related_work}

\inlinepar{Aggregation in the provenance semiring framework.}
Our approach builds on~\cite{amsterdamer2011provenance}, in which
\sql{HAVING} conditions become formal comparison expressions between
elements of a provenance semimodule; \cite{amsterdamer2011provenance}
stops at producing these expressions, whose evaluation, as well as
probability computation, is not considered. \cite{pintor2025dbms}
follows this approach with a DBMS-independent rewriting system that
stores the expressions as strings. The present work starts where these
two end, by giving the expressions a value in every commutative
m-semiring.

\inlinepar{Negation in the provenance semiring framework.}
Since provenance semirings \cite{green2007provenance,green2017provenance}
do not support non-monotone queries, \cite{geerts2010database} extended
them with a monus, yielding m-semirings. Some m-semirings lack axioms one
may expect~\cite{amsterdamer2011limitations}, notably the distributivity
of $\otimes$ over $\ominus$ discussed in Section~\ref{sec:preliminaries};
the model has nonetheless been fruitful, as the basis of a provenance for
SPARQL~\cite{DBLP:journals/jacm/GeertsUKFC16}, of a semiring
generalization of Codd's theorem~\cite{badia2025codd} and of the
\sql{EXCEPT} operator of ProvSQL~\cite{sen2026provsql}. Quotient
semirings of polynomials with dual
indeterminates~\cite{gradel2025provenance} are an alternative, less
amenable to implementation.

\inlinepar{Provenance systems.}
\ifarxiv
ProvSQL~\cite{sen2026provsql}, our system, which this work extends,
rewrites queries over provenance-tracked relations and stores annotations
as a circuit; it is also a probabilistic database. Before this work, its
aggregation followed~\cite{amsterdamer2011provenance}, recording
comparisons on aggregates as formal expressions that could be displayed
but not evaluated.
\else
ProvSQL~\cite{sen2026provsql}, the system we extend, rewrites queries
over provenance-tracked relations and stores annotations as a circuit;
its aggregation follows~\cite{amsterdamer2011provenance}, recording
comparisons on aggregates as formal expressions that can be displayed but
not evaluated; it is also a probabilistic database.
\fi
GProM~\cite{arab2018gprom} is a rewriting middleware over several
backends that computes several forms of provenance without full semiring
semantics; it joins aggregation results with the provenance of each group
and treats \sql{HAVING} as a plain selection, without recording how the
provenance depends on the comparison, which precludes probabilistic use.

\inlinepar{Aggregation in probabilistic databases.}
\cite{re2009trichotomy} characterizes which aggregate conjunctive queries
with a single \sql{HAVING} atom can be evaluated or approximated in
polynomial time over probabilistic databases, by the shape of the query;
the base case of its safe plans, the convolution of the marginal
distributions of independent tuples in a small semiring, is what
Proposition~\ref{prop:agg-cmp-poly-prob} restates in our setting, whereas
Section~\ref{sec:algorithms} classifies combinations of semirings and
aggregates for symbolic provenance, which~\cite{re2009trichotomy} does
not consider. \cite{abiteboul2011capturing} studies
aggregation over probabilistic XML, for terminal aggregates only. Closest
to us, \cite{fink2012aggregation} extends the construction
of~\cite{amsterdamer2011provenance} for probability evaluation and
supports \sql{HAVING}, with a semantics specific to the Boolean case and
to aggregates over commutative monoids, implemented in a prototype of
SPROUT that was not released. Among the probabilistic database prototypes
MystiQ~\cite{DBLP:conf/sigmod/BoulosDMMRS05},
Trio~\cite{dblp:conf/vldb/agrawalbshnsw06}, Orion~\cite{singh2008orion},
and MayBMS~\cite{huang2009maybms} with its public SPROUT
evaluator~\cite{DBLP:conf/icde/OlteanuHK09}, none is maintained,
aggregation over uncertain data is at best partial
(Trio~\cite{murthy2011aggregation}, Orion), and none gives \sql{HAVING}
an uncertain semantics.
ProbLog~\cite{DBLP:journals/tplp/FierensBRSGTJR15,DBLP:conf/pkdd/DriesKMRBVR15},
a probabilistic programming system, supports aggregation and conditions
on aggregates through grounding and weighted model counting, but no other
(m-)semiring, and does not scale to databases of realistic
sizes~\cite{sen2026provsql}.

\inlinepar{Aggregation over compact representations.}
\cite{bakibayev2013aggregation} adds aggregation and \sql{HAVING} to a
factorized database engine, without provenance and in a deterministic
setting; whether factorized representations could compress our possible
worlds is a natural question.

\section{Conclusion}
\label{sec:conclusion}

We have presented a general semantics for \sql{HAVING} queries that
applies to provenance in any commutative m-semiring, giving a value to
the formal comparison expressions of~\cite{amsterdamer2011provenance}. It agrees with the provenance of the self-join
rewriting in m-semirings that are absorptive and where times distributes
over monus, notably in the Boolean-function semiring used for
probability evaluation; elsewhere the two unavoidably differ, as outside
Boolean provenance the way a query is written changes its provenance. Algorithms for specific aggregates and comparisons, together with
experiments, show the approach feasible, yielding the first end-to-end
provenance-aware implementation of selection on aggregate values. A natural next step is
to lift the safe plans of~\cite{re2009trichotomy}, whose base case
Proposition~\ref{prop:agg-cmp-poly-prob} restates, to provenance
circuits, keeping probabilistic evaluation of safe \sql{HAVING}
queries over joins polynomial.

\begin{toappendix}
  \section{Dependency Graph of the Results}
  \label{sec:proofgraph}

  Figure~\ref{fig:proofgraph} summarizes how the results of this paper depend
  on one another.

  \begin{figure}[htbp]
    \centering
    \proofgraph[width=\linewidth]
    \caption{Dependency graph of the results of the paper: an arrow from $A$
    to $B$ means that the proof of $B$ uses $A$.}
    \label{fig:proofgraph}
  \end{figure}
\end{toappendix}

\ifarxiv\else\clearpage\fi
\bibliography{ref}

\begin{thebibliography}{10}

\bibitem{abiteboul2011capturing}
Serge Abiteboul, T.{-}H.~Hubert Chan, Evgeny Kharlamov, Werner Nutt, and Pierre
  Senellart.
\newblock Capturing continuous data and answering aggregate queries in
  probabilistic {XML}.
\newblock {\em {ACM} Trans. Database Syst.}, 36(4):25:1--25:45, 2011.
\newblock \href {https://doi.org/10.1145/2043652.2043658}
  {\path{doi:10.1145/2043652.2043658}}.

\bibitem{dblp:conf/vldb/agrawalbshnsw06}
Parag Agrawal, Omar Benjelloun, Anish~Das Sarma, Chris Hayworth, Shubha~U.
  Nabar, Tomoe Sugihara, and Jennifer Widom.
\newblock Trio: {A} system for data, uncertainty, and lineage.
\newblock In Umeshwar Dayal, Kyu{-}Young Whang, David~B. Lomet, Gustavo Alonso,
  Guy~M. Lohman, Martin~L. Kersten, Sang~Kyun Cha, and Young{-}Kuk Kim,
  editors, {\em Proceedings of the 32nd International Conference on Very Large
  Data Bases, Seoul, Korea, September 12-15, 2006}, pages 1151--1154. {ACM},
  2006.
\newblock URL: \url{http://dl.acm.org/citation.cfm?id=1164231}.

\bibitem{DBLP:journals/tcs/AmarilliBDS19}
Antoine Amarilli, Mouhamadou~Lamine Ba, Daniel Deutch, and Pierre Senellart.
\newblock Computing possible and certain answers over order-incomplete data.
\newblock {\em Theor. Comput. Sci.}, 797:42--76, 2019.
\newblock URL: \url{https://doi.org/10.1016/j.tcs.2019.05.013}, \href
  {https://doi.org/10.1016/J.TCS.2019.05.013}
  {\path{doi:10.1016/J.TCS.2019.05.013}}.

\bibitem{amarilli2016example}
Antoine Amarilli and Mika\"el Monet.
\newblock Example of a naturally ordered semiring which is not an m-semiring.
\newblock \url{http://math.stackexchange.com/questions/1966858}, 2016.

\bibitem{amsterdamer2011limitations}
Yael Amsterdamer, Daniel Deutch, and Val Tannen.
\newblock On the limitations of provenance for queries with difference.
\newblock In Peter Buneman and Juliana Freire, editors, {\em 3rd Workshop on
  the Theory and Practice of Provenance, TaPP'11, Heraklion, Crete, Greece,
  June 20-21, 2011}. {USENIX} Association, 2011.
\newblock URL:
  \url{https://www.usenix.org/conference/tapp11/limitations-provenance-queries-difference}.

\bibitem{amsterdamer2011provenance}
Yael Amsterdamer, Daniel Deutch, and Val Tannen.
\newblock Provenance for aggregate queries.
\newblock In {\em Proceedings of the thirtieth ACM SIGMOD-SIGACT-SIGART
  symposium on Principles of database systems}, pages 153--164, 2011.

\bibitem{arab2018gprom}
Bahareh~Sadat Arab, Su~Feng, Boris Glavic, Seokki Lee, Xing Niu, and Qitian
  Zeng.
\newblock {GProM} - {A} {Swiss Army} knife for your provenance needs.
\newblock {\em {IEEE} Data Eng. Bull.}, 41(1):51--62, 2018.
\newblock URL: \url{http://sites.computer.org/debull/A18mar/p51.pdf}.

\bibitem{badia2025codd}
Guillermo Badia, Phokion~G. Kolaitis, and Carles Noguera.
\newblock Codd's theorem for databases over semirings.
\newblock {\em Proc. {ACM} Manag. Data}, 3(5):277:1--277:26, 2025.
\newblock \href {https://doi.org/10.1145/3767713} {\path{doi:10.1145/3767713}}.

\bibitem{badia2025coddarxiv}
Guillermo Badia, Phokion~G. Kolaitis, and Carles Noguera.
\newblock Codd's theorem for databases over semirings.
\newblock {\em CoRR}, abs/2501.16543, 2025.
\newblock Version~1, January 2025.
\newblock URL: \url{https://arxiv.org/abs/2501.16543v1}, \href
  {https://arxiv.org/abs/2501.16543v1} {\path{arXiv:2501.16543v1}}.

\bibitem{bakibayev2013aggregation}
Nurzhan Bakibayev, Tom{\'{a}}s Kocisk{\'{y}}, Dan Olteanu, and Jakub Zavodny.
\newblock Aggregation and ordering in factorised databases.
\newblock {\em Proc. {VLDB} Endow.}, 6(14):1990--2001, 2013.
\newblock URL: \url{http://www.vldb.org/pvldb/vol6/p1990-zavodny.pdf}, \href
  {https://doi.org/10.14778/2556549.2556579}
  {\path{doi:10.14778/2556549.2556579}}.

\bibitem{DBLP:conf/sigmod/BoulosDMMRS05}
Jihad Boulos, Nilesh~N. Dalvi, Bhushan Mandhani, Shobhit Mathur, Christopher
  R{\'{e}}, and Dan Suciu.
\newblock {MYSTIQ:} a system for finding more answers by using probabilities.
\newblock In Fatma {\"{O}}zcan, editor, {\em Proceedings of the {ACM} {SIGMOD}
  International Conference on Management of Data, Baltimore, Maryland, USA,
  June 14-16, 2005}, pages 891--893. {ACM}, 2005.
\newblock \href {https://doi.org/10.1145/1066157.1066277}
  {\path{doi:10.1145/1066157.1066277}}.

\bibitem{bunemankt01}
Peter Buneman, Sanjeev Khanna, and Wang~Chiew Tan.
\newblock Why and where: {A} characterization of data provenance.
\newblock In Jan~Van den Bussche and Victor Vianu, editors, {\em Database
  Theory - {ICDT} 2001, 8th International Conference, London, UK, January 4-6,
  2001, Proceedings}, volume 1973 of {\em Lecture Notes in Computer Science},
  pages 316--330. Springer, 2001.
\newblock \href {https://doi.org/10.1007/3-540-44503-X_20}
  {\path{doi:10.1007/3-540-44503-X_20}}.

\bibitem{DBLP:conf/pods/DayalGK82}
Umeshwar Dayal, Nathan Goodman, and Randy~H. Katz.
\newblock An extended relational algebra with control over duplicate
  elimination.
\newblock In Jeffrey~D. Ullman and Alfred~V. Aho, editors, {\em Proceedings of
  the {ACM} Symposium on Principles of Database Systems, March 29-31, 1982, Los
  Angeles, California, {USA}}, pages 117--123. {ACM}, 1982.
\newblock \href {https://doi.org/10.1145/588111.588132}
  {\path{doi:10.1145/588111.588132}}.

\bibitem{DBLP:conf/pkdd/DriesKMRBVR15}
Anton Dries, Angelika Kimmig, Wannes Meert, Joris Renkens, Guy~Van den Broeck,
  Jonas Vlasselaer, and Luc~De Raedt.
\newblock Problog2: Probabilistic logic programming.
\newblock In Albert Bifet, Michael May, Bianca Zadrozny, Ricard Gavald{\`{a}},
  Dino Pedreschi, Francesco Bonchi, Jaime~S. Cardoso, and Myra Spiliopoulou,
  editors, {\em Machine Learning and Knowledge Discovery in Databases -
  European Conference, {ECML} {PKDD} 2015, Porto, Portugal, September 7-11,
  2015, Proceedings, Part {III}}, volume 9286 of {\em Lecture Notes in Computer
  Science}, pages 312--315. Springer, 2015.
\newblock \href {https://doi.org/10.1007/978-3-319-23461-8_37}
  {\path{doi:10.1007/978-3-319-23461-8_37}}.

\bibitem{DBLP:journals/tplp/FierensBRSGTJR15}
Daan Fierens, Guy~Van den Broeck, Joris Renkens, Dimitar~Sht. Shterionov, Bernd
  Gutmann, Ingo Thon, Gerda Janssens, and Luc~De Raedt.
\newblock Inference and learning in probabilistic logic programs using weighted
  {Boolean} formulas.
\newblock {\em Theory Pract. Log. Program.}, 15(3):358--401, 2015.
\newblock \href {https://doi.org/10.1017/S1471068414000076}
  {\path{doi:10.1017/S1471068414000076}}.

\bibitem{fink2012aggregation}
Robert Fink, Larisa Han, and Dan Olteanu.
\newblock Aggregation in probabilistic databases via knowledge compilation.
\newblock {\em Proceedings of the VLDB Endowment}, 5(5):490--501, 2012.

\bibitem{geerts2010database}
Floris Geerts and Antonella Poggi.
\newblock On database query languages for {K}-relations.
\newblock {\em J. Appl. Log.}, 8(2):173--185, 2010.
\newblock URL: \url{https://doi.org/10.1016/j.jal.2009.09.001}, \href
  {https://doi.org/10.1016/J.JAL.2009.09.001}
  {\path{doi:10.1016/J.JAL.2009.09.001}}.

\bibitem{DBLP:journals/jacm/GeertsUKFC16}
Floris Geerts, Thomas Unger, Grigoris Karvounarakis, Irini Fundulaki, and
  Vassilis Christophides.
\newblock Algebraic structures for capturing the provenance of {SPARQL}
  queries.
\newblock {\em J. {ACM}}, 63(1):7:1--7:63, 2016.
\newblock \href {https://doi.org/10.1145/2810037} {\path{doi:10.1145/2810037}}.

\bibitem{gradel2025provenance}
Erich Gr{\"{a}}del and Val Tannen.
\newblock Provenance analysis and semiring semantics for first-order logic.
\newblock In Klaus Meer, Alexander Rabinovich, Elena~V. Ravve, and Andr{\'{e}}s
  Villaveces, editors, {\em Model Theory, Computer Science, and Graph
  Polynomials: Festschrift in Honor of Johann A. Makowsky}, Trends in
  Mathematics, pages 351--401. Springer Nature Switzerland, 2025.
\newblock \href {https://doi.org/10.1007/978-3-031-86319-6_21}
  {\path{doi:10.1007/978-3-031-86319-6_21}}.

\bibitem{green2007provenance}
Todd~J. Green, Gregory Karvounarakis, and Val Tannen.
\newblock Provenance semirings.
\newblock In Leonid Libkin, editor, {\em Proceedings of the Twenty-Sixth {ACM}
  {SIGACT-SIGMOD-SIGART} Symposium on Principles of Database Systems, June
  11-13, 2007, Beijing, China}, pages 31--40. {ACM}, 2007.
\newblock \href {https://doi.org/10.1145/1265530.1265535}
  {\path{doi:10.1145/1265530.1265535}}.

\bibitem{green2017provenance}
Todd~J. Green and Val Tannen.
\newblock The semiring framework for database provenance.
\newblock In Emanuel Sallinger, Jan~Van den Bussche, and Floris Geerts,
  editors, {\em Proceedings of the 36th {ACM} {SIGMOD-SIGACT-SIGAI} Symposium
  on Principles of Database Systems, {PODS} 2017, Chicago, IL, USA, May 14-19,
  2017}, pages 93--99. {ACM}, 2017.
\newblock \href {https://doi.org/10.1145/3034786.3056125}
  {\path{doi:10.1145/3034786.3056125}}.

\bibitem{DBLP:journals/iandc/GrumbachM99}
St{\'{e}}phane Grumbach and Tova Milo.
\newblock An algebra for pomsets.
\newblock {\em Inf. Comput.}, 150(2):268--306, 1999.
\newblock URL: \url{https://doi.org/10.1006/inco.1998.2777}, \href
  {https://doi.org/10.1006/INCO.1998.2777} {\path{doi:10.1006/INCO.1998.2777}}.

\bibitem{herschel2017survey}
Melanie Herschel, Ralf Diestelk{\"{a}}mper, and Houssem {Ben Lahmar}.
\newblock A survey on provenance: What for? what form? what from?
\newblock {\em {VLDB} J.}, 26(6):881--906, 2017.
\newblock URL: \url{https://doi.org/10.1007/s00778-017-0486-1}, \href
  {https://doi.org/10.1007/S00778-017-0486-1}
  {\path{doi:10.1007/S00778-017-0486-1}}.

\bibitem{huang2009maybms}
Jiewen Huang, Lyublena Antova, Christoph Koch, and Dan Olteanu.
\newblock {MayBMS}: a probabilistic database management system.
\newblock In Ugur {\c{C}}etintemel, Stanley~B. Zdonik, Donald Kossmann, and
  Nesime Tatbul, editors, {\em Proceedings of the {ACM} {SIGMOD} International
  Conference on Management of Data, {SIGMOD} 2009, Providence, Rhode Island,
  USA, June 29 - July 2, 2009}, pages 1071--1074. {ACM}, 2009.
\newblock \href {https://doi.org/10.1145/1559845.1559984}
  {\path{doi:10.1145/1559845.1559984}}.

\bibitem{jha2010bridging}
Abhay~Kumar Jha, Dan Olteanu, and Dan Suciu.
\newblock Bridging the gap between intensional and extensional query evaluation
  in probabilistic databases.
\newblock In Ioana Manolescu, Stefano Spaccapietra, Jens Teubner, Masaru
  Kitsuregawa, Alain L{\'{e}}ger, Felix Naumann, Anastasia Ailamaki, and Fatma
  {\"{O}}zcan, editors, {\em {EDBT} 2010, 13th International Conference on
  Extending Database Technology, Lausanne, Switzerland, March 22-26, 2010,
  Proceedings}, volume 426 of {\em {ACM} International Conference Proceeding
  Series}, pages 323--334. {ACM}, 2010.
\newblock \href {https://doi.org/10.1145/1739041.1739082}
  {\path{doi:10.1145/1739041.1739082}}.

\bibitem{DBLP:conf/coco/Karp72}
Richard~M. Karp.
\newblock Reducibility among combinatorial problems.
\newblock In Raymond~E. Miller and James~W. Thatcher, editors, {\em Proceedings
  of a symposium on the Complexity of Computer Computations, held March 20-22,
  1972, at the {IBM} Thomas J. Watson Research Center, Yorktown Heights, New
  York, {USA}}, The {IBM} Research Symposia Series, pages 85--103. Plenum
  Press, New York, 1972.
\newblock \href {https://doi.org/10.1007/978-1-4684-2001-2_9}
  {\path{doi:10.1007/978-1-4684-2001-2_9}}.

\bibitem{DBLP:conf/ijcai/LagniezM17}
Jean{-}Marie Lagniez and Pierre Marquis.
\newblock An improved decision-{DNNF} compiler.
\newblock In Carles Sierra, editor, {\em Proceedings of the Twenty-Sixth
  International Joint Conference on Artificial Intelligence, {IJCAI} 2017,
  Melbourne, Australia, August 19-25, 2017}, pages 667--673. ijcai.org, 2017.
\newblock URL: \url{https://doi.org/10.24963/ijcai.2017/93}, \href
  {https://doi.org/10.24963/IJCAI.2017/93} {\path{doi:10.24963/IJCAI.2017/93}}.

\bibitem{DBLP:journals/tcs/Libkin03}
Leonid Libkin.
\newblock Expressive power of {SQL}.
\newblock {\em Theor. Comput. Sci.}, 296(3):379--404, 2003.
\newblock \href {https://doi.org/10.1016/S0304-3975(02)00736-3}
  {\path{doi:10.1016/S0304-3975(02)00736-3}}.

\bibitem{lin2014microsoft}
Tsung{-}Yi Lin, Michael Maire, Serge~J. Belongie, James Hays, Pietro Perona,
  Deva Ramanan, Piotr Doll{\'{a}}r, and C.~Lawrence Zitnick.
\newblock Microsoft {COCO:} {Common Objects in Context}.
\newblock In David~J. Fleet, Tom{\'{a}}s Pajdla, Bernt Schiele, and Tinne
  Tuytelaars, editors, {\em Computer Vision - {ECCV} 2014 - 13th European
  Conference, Zurich, Switzerland, September 6-12, 2014, Proceedings, Part
  {V}}, volume 8693 of {\em Lecture Notes in Computer Science}, pages 740--755.
  Springer, 2014.
\newblock \href {https://doi.org/10.1007/978-3-319-10602-1_48}
  {\path{doi:10.1007/978-3-319-10602-1_48}}.

\bibitem{murthy2011aggregation}
Raghotham Murthy, Robert Ikeda, and Jennifer Widom.
\newblock Making aggregation work in uncertain and probabilistic databases.
\newblock {\em {IEEE} Trans. Knowl. Data Eng.}, 23(8):1261--1273, 2011.
\newblock \href {https://doi.org/10.1109/TKDE.2010.166}
  {\path{doi:10.1109/TKDE.2010.166}}.

\bibitem{DBLP:conf/icde/OlteanuHK09}
Dan Olteanu, Jiewen Huang, and Christoph Koch.
\newblock {SPROUT:} lazy vs. eager query plans for tuple-independent
  probabilistic databases.
\newblock In Yannis~E. Ioannidis, Dik~Lun Lee, and Raymond~T. Ng, editors, {\em
  Proceedings of the 25th International Conference on Data Engineering, {ICDE}
  2009, March 29 2009 - April 2 2009, Shanghai, China}, pages 640--651. {IEEE}
  Computer Society, 2009.
\newblock \href {https://doi.org/10.1109/ICDE.2009.123}
  {\path{doi:10.1109/ICDE.2009.123}}.

\bibitem{pintor2025dbms}
Paulo Pintor, Rog{\'{e}}rio Costa, and Jos{\'{e}} Moreira.
\newblock A {DBMS}-independent approach for capturing provenance polynomials
  through query rewriting.
\newblock {\em CoRR}, abs/2508.14608, 2025.
\newblock URL: \url{https://doi.org/10.48550/arXiv.2508.14608}, \href
  {https://arxiv.org/abs/2508.14608} {\path{arXiv:2508.14608}}, \href
  {https://doi.org/10.48550/ARXIV.2508.14608}
  {\path{doi:10.48550/ARXIV.2508.14608}}.

\bibitem{re2009trichotomy}
Christopher R{\'{e}} and Dan Suciu.
\newblock The trichotomy of {HAVING} queries on a probabilistic database.
\newblock {\em {VLDB} J.}, 18(5):1091--1116, 2009.
\newblock URL: \url{https://doi.org/10.1007/s00778-009-0151-4}, \href
  {https://doi.org/10.1007/S00778-009-0151-4}
  {\path{doi:10.1007/S00778-009-0151-4}}.

\bibitem{sen2026provsql}
Aryak Sen, Silviu Maniu, and Pierre Senellart.
\newblock {ProvSQL}: {A} general system for keeping track of the provenance and
  probability of data.
\newblock In {\em Proc.\ {ICDE}}, Montr\'eal, Canada, May 2026.

\bibitem{senellart2017provenance}
Pierre Senellart.
\newblock Provenance and probabilities in relational databases.
\newblock {\em {SIGMOD} Rec.}, 46(4):5--15, 2017.
\newblock \href {https://doi.org/10.1145/3186549.3186551}
  {\path{doi:10.1145/3186549.3186551}}.

\bibitem{senellart2018provsql}
Pierre Senellart, Louis Jachiet, Silviu Maniu, and Yann Ramusat.
\newblock {ProvSQL}: Provenance and probability management in {PostgreSQL}.
\newblock {\em Proc. {VLDB} Endow.}, 11(12):2034--2037, 2018.
\newblock URL: \url{http://www.vldb.org/pvldb/vol11/p2034-senellart.pdf}, \href
  {https://doi.org/10.14778/3229863.3236253}
  {\path{doi:10.14778/3229863.3236253}}.

\bibitem{singh2008orion}
Sarvjeet Singh, Chris Mayfield, Sagar Mittal, Sunil Prabhakar, Susanne~E.
  Hambrusch, and Rahul Shah.
\newblock Orion 2.0: native support for uncertain data.
\newblock In Jason~Tsong{-}Li Wang, editor, {\em Proceedings of the {ACM}
  {SIGMOD} International Conference on Management of Data, {SIGMOD} 2008,
  Vancouver, BC, Canada, June 10-12, 2008}, pages 1239--1242. {ACM}, 2008.
\newblock \href {https://doi.org/10.1145/1376616.1376744}
  {\path{doi:10.1145/1376616.1376744}}.

\bibitem{suciu2011probabilistic}
Dan Suciu, Dan Olteanu, Christopher R{\'{e}}, and Christoph Koch.
\newblock {\em Probabilistic Databases}.
\newblock Synthesis Lectures on Data Management. Morgan {\&} Claypool
  Publishers, 2011.
\newblock \href {https://doi.org/10.2200/S00362ED1V01Y201105DTM016}
  {\path{doi:10.2200/S00362ED1V01Y201105DTM016}}.

\bibitem{tpch}
{Transaction Processing Performance Council}.
\newblock {TPC} benchmark {H} (decision support) standard specification.
\newblock \url{https://www.tpc.org/tpch/}, 2022.
\newblock Revision 3.0.1.

\bibitem{ultralytics2021yolov5}
Ultralytics.
\newblock {YOLOv5}: {A} state-of-the-art real-time object detection system.
\newblock \url{https://docs.ultralytics.com}, 2021.

\bibitem{widiaatmaja2025demonstration}
Albert~Ariel Widiaatmaja, Belkis Djeffal, Ashish Dandekar, and Pierre
  Senellart.
\newblock Demonstration of {ProvSQL} update provenance through temporal
  databases.
\newblock In {\em Proc.\ {PW}}, Berlin, Germany, 06 2025.
\newblock Demonstration.

\bibitem{yunus2025using}
Fajrian Yunus, Pratik Karmakar, Pierre Senellart, Talel Abdessalem, and
  St{\'{e}}phane Bressan.
\newblock Using {A} probabilistic database in an image retrieval application.
\newblock In Alkis Simitsis, Bettina Kemme, Anna Queralt, Oscar Romero, and
  Petar Jovanovic, editors, {\em Proceedings 28th International Conference on
  Extending Database Technology, {EDBT} 2025, Barcelona, Spain, March 25-28,
  2025}, pages 1106--1109. OpenProceedings.org, 2025.
\newblock URL: \url{https://doi.org/10.48786/edbt.2025.100}, \href
  {https://doi.org/10.48786/EDBT.2025.100} {\path{doi:10.48786/EDBT.2025.100}}.

\end{thebibliography}

\end{document}